\tikzstyle{label}=[shape=circle,draw,inner sep=0pt,minimum size=5mm]
\tikzstyle{tran}=[draw,->,>=stealth, rounded corners]
\lstdefinelanguage{prog}
{
morekeywords={if, then, else, fi, while, do, od, true, false, and, or, skip, newar, return},
sensitive = false
}
\newcommand{\POLYS}{{\mathfrak{R}}{\left[x_1,\dots, x_m\right]}}
\newcommand{\loc}{\ell}
\newcommand{\fn}[1]{\mathsf{#1}}
\newcommand{\locs}[1]{\mathit{L}^{\mathsf{#1}}}
\newcommand{\pvars}[1]{V^{\mathsf{#1}}}
\newcommand{\fnames}{\mathit{F}}
\newcommand{\clocs}[1]{\mathit{L}_{\mathrm{b}}^{\mathsf{#1}}}
\newcommand{\alocs}[1]{\mathit{L}_{\mathrm{a}}^{\mathsf{#1}}}
\newcommand{\flocs}[1]{\mathit{L}_{\mathrm{c}}^{\mathsf{#1}}}
\newcommand{\slocs}[1]{\mathit{L}_{\mathrm{s}}^{\mathsf{#1}}}
\newcommand{\dlocs}[1]{\mathit{L}_{\mathrm{d}}^{\mathsf{#1}}}
\newcommand{\transitions}[1]{{\rightarrow}_{#1}}
\newcommand{\assgn}[2]{\left[#1/#2\right]}
\newcommand{\Val}[1]{\mbox{\sl Val}_{\mathsf{#1}}}
\newcommand{\Aval}[1]{\mbox{\sl Val}_{\mathsf{#1}}}
\newcommand{\Rset}{\mathbb{R}}
\newcommand{\Nset}{\mathbb{N}}
\newcommand{\Zset}{\mathbb{Z}}
\newcommand{\lin}[1]{\loc_\mathrm{in}^{#1}}
\newcommand{\lout}[1]{\loc_\mathrm{out}^{#1}}
\newcommand{\inv}{I}
\newcommand{\recterm}{{\sc RecTermBou}}
\begin{document}

\title{Non-polynomial Worst-Case Analysis of \\Recursive Programs}%\thanks{Supported by ??.}}

%\titlerunning{Termination and Worst-Case Analysis}

\author{
Krishnendu Chatterjee\inst{1} \and Hongfei Fu\inst{2} \and Amir Kafshdar Goharshady\inst{1}
}

\authorrunning{Chatterjee et al.}

\institute{
IST Austria, Klosterneuburg, Austria
\and
State Key Laboratory of Computer Science, Institute of Software\\ Chinese
Academy of Sciences, Beijing, P.R. China
}

\maketitle

%\vspace{-1em}
\begin{abstract}
We study the problem of developing efficient approaches for proving worst-case bounds
of non-deterministic recursive programs.
Ranking functions are sound and complete for proving termination and
worst-case bounds of non-recursive programs.
First, we apply ranking functions to recursion, resulting in measure functions. We show that
measure functions provide a sound and complete approach to prove %termination and
worst-case bounds of non-deterministic recursive programs.
Our second contribution is the synthesis of measure functions in non-polynomial forms.
%More precisely,
We show that non-polynomial measure functions with %involving
logarithm and exponentiation can be synthesized
through abstraction of logarithmic or exponentiation terms, Farkas' Lemma,
and Handelman's Theorem using linear programming.
%A key application of our method is the worst-case analysis of recursive programs.
While previous methods obtain worst-case polynomial bounds, %of the form $\mathcal{O}(n^r)$, where $r$ is an integer,
our %measure-function
approach can synthesize bounds of the
form $\mathcal{O}(n \log n)$ as well as $\mathcal{O}(n^r)$ where $r$ is not an integer.
We present experimental results to demonstrate that
%show the applicability of
our approach can obtain efficiently worst-case
bounds of classical recursive algorithms such as
(i)~Merge-Sort, the divide-and-conquer algorithm for the Closest-Pair problem,
where we obtain $\mathcal{O}(n \log n)$ worst-case bound, and
(ii)~Karatsuba's algorithm for polynomial multiplication and Strassen's
algorithm for matrix multiplication, where we obtain $\mathcal{O}(n^r)$ bound
such that $r$ is not an integer and close to the best-known bounds for the
respective algorithms.
%Finally, we present experimental results to demonstrate the effectiveness of our approach.
\end{abstract}

\vspace{-2em}
\section{Introduction}
\vspace{-1em}

Automated analysis to obtain quantitative performance characteristics of
programs is a key feature of static analysis.
Obtaining precise worst-case complexity bounds is a topic of both wide
theoretical and practical interest.
The manual proof of such bounds can be cumbersome as well as require
mathematical ingenuity, e.g., the book {\em The Art of Computer Programming}
by Knuth presents several mathematically involved methods to obtain such
precise bounds~\cite{KnuthAllBooks}.
The derivation of such worst-case bounds requires a lot of mathematical skills
and is not an automated method.
However, the problem of deriving precise worst-case bounds is of huge interest
in program analysis: (a)~first, in applications such as hard real-time
systems, guarantees of worst-case behavior are required; and (b)~the bounds
are useful in early detection of egregious performance problems in large
code bases.
Works such as~\cite{SPEED1,SPEED2,Hoffman1,Hoffman2} provide an excellent
motivation for the study of automatic methods to obtain worst-case bounds
for programs.

Given the importance of the problem of deriving worst-case bounds, the
problem has been studied in various different ways.
\begin{compactenum}
\item {\em WCET Analysis.} The problem of worst-case execution time (WCET)
analysis is a large field of its own, that focuses on (but is not limited to)
sequential loop-free code with low-level hardware aspects~\cite{DBLP:journals/tecs/WilhelmEEHTWBFHMMPPSS08}.

\item {\em Resource Analysis.} The use of abstract interpretation and type
systems to deal with loop, recursion, data-structures has also been considered
~\cite{SPEED2,DBLP:journals/entcs/AlbertAGGPRRZ09,DBLP:conf/popl/JostHLH10}, e.g.,
using linear invariant generation to obtain disjunctive and non-linear bounds~\cite{DBLP:conf/cav/ColonSS03},
potential-based methods for handling recursion and inductive data structures~\cite{Hoffman1,Hoffman2}.

\item {\em Ranking Functions.} The notion of ranking functions is a powerful
technique for termination analysis of (recursive) programs~\cite{BG05,DBLP:conf/cav/BradleyMS05,DBLP:conf/tacas/ColonS01,DBLP:conf/vmcai/PodelskiR04,DBLP:conf/pods/SohnG91,DBLP:conf/vmcai/Cousot05,DBLP:journals/fcsc/YangZZX10,DBLP:journals/jossac/ShenWYZ13}.
They serve as a sound and complete approach for proving termination
of non-recursive programs~\cite{rwfloyd1967programs}, and they have also been
extended as ranking supermatingales for analysis of probabilistic programs~\cite{SriramCAV,HolgerPOPL,DBLP:conf/popl/ChatterjeeFNH16,DBLP:conf/cav/ChatterjeeFG16}.

\end{compactenum}

Given the many
%above wide array of
results above, two aspects of the problem have not been addressed.
\begin{compactenum}
\item {\em WCET Analysis of Recursive Programs through Ranking Functions.}
The use of ranking functions has been limited mostly to non-recursive programs,
and %the use of ranking functions for recursive programs and
their use to
obtain worst-case bounds for recursive programs has not been explored in depth.
%We note that Cook~\emph{et al.}~\cite{DBLP:journals/fmsd/CookPR09} also considered a method related to ranking functions to program termination, but they did not consider concrete bounds on worst-case analysis.

\item {\em Efficient Methods for Precise Bounds.}
While previous works present methods for disjunctive polynomial
bounds~\cite{SPEED2} (such as $\max(0,n) \cdot (1+max(n,m))$), or multivariate
polynomial analysis~\cite{Hoffman1}, these works do not provide efficient
%(polynomial-time)
methods to synthesize bounds such as $\mathcal{O}(n \log n)$
or $\mathcal{O}(n^r)$, where $r$ is not an integer.

\end{compactenum}
We address these two aspects, i.e., efficient
%termination of recursive programs with ranking functions, and polynomial-time
methods for obtaining non-polynomial bounds such as $\mathcal{O}(n \log n)$, $\mathcal{O}(n^r)$ for recursive programs,
where $r$ is not an integer.

\smallskip\noindent{\em Our Contributions.}
Our main contributions are as follows:
\begin{compactenum}

\item First, we apply ranking functions to recursion,
resulting in {\em measure} functions, and show that
they provide a sound and complete method to prove termination
and worst-case bounds of non-deterministic recursive programs.

\item
Second, we present a sound approach for handling measure functions
of specific forms.
More precisely, we show that {\em non-polynomial} measure functions involving
logarithm and exponentiation can be synthesized using {\em linear
programming} through abstraction of logarithmic or exponentiation terms,
Farkas' Lemma, and Handelman's Theorem.

\item
A key application of our method is the worst-case analysis of recursive programs.
Our procedure can synthesize non-polynomial bounds of the form $\mathcal{O}(n \log n)$,
as well as $\mathcal{O}(n^r)$, where $r$ is not an integer.
We show the applicability of our technique to obtain worst-case complexity
bounds for several classical recursive programs:
\begin{compactitem}
\item
For {\em Merge-Sort}~\cite[Chapter 2]{DBLP:books/daglib/0023376} and the divide-and-conquer algorithm for the
{\em Closest-Pair problem}~\cite[Chapter~33]{DBLP:books/daglib/0023376}, we obtain $\mathcal{O}(n \log n)$ worst-case bound,
and the bounds we obtain are asymptotically optimal.
Note that previous methods are either not applicable,
or grossly over-estimate the bounds as $\mathcal{O}(n^2)$.

\item For {\em Karatsuba's algorithm} for polynomial multiplication (cf.~\cite{KnuthAllBooks}) we obtain
a bound of $\mathcal{O}(n^{1.6})$, whereas the optimal bound is $n^{\log_23}\approx \mathcal{O}(n^{1.585})$, and
for the classical {\em Strassen's algorithm} for fast matrix multiplication (cf.~\cite[Chapter 4]{DBLP:books/daglib/0023376})
we obtain a bound of $\mathcal{O}(n^{2.9})$ whereas the optimal bound is $n^{\log_27} \approx
\mathcal{O}(n^{2.8074})$.
Note that previous methods are either not applicable, or
grossly over-estimate the bounds as $\mathcal{O}(n^2)$ and $\mathcal{O}(n^3)$, respectively.
\end{compactitem}

\item We present experimental results to demonstrate the effectiveness of our approach.

\end{compactenum}

\noindent{\em Applicability.}
In general, our approach can be applied to (recursive) programs where the worst-case behaviour can be
obtained by an analysis that involves only the structure of the program.
For example, our approach cannot handle  the Euclidean algorithm for computing the greatest common divisor of two given natural numbers, since the worst-case behaviour of this algorithm relies on Lam\'{e}'s Theorem~\cite{KnuthAllBooks}.

\smallskip\noindent{\em Key Novelty.}
The key novelty of our approach is that we show how non-trivial
non-polynomial worst-case upper bounds such as $\mathcal{O}(n \log n)$
and $\mathcal{O}(n^r)$, where $r$ is non-integral, can be soundly obtained,
even for recursive programs, using linear programming.
Moreover, as our computational tool is linear programming, the approach we
provide is also a relatively scalable one (see Remark~\ref{rem:novelty}).

%In Sect. 2 we present our mini programming language for recursive programs.
%In Sect. 3 we define the termination problem on recursive programs.
%Then in Sect. 4, we define the notion of measure function over recursive programs.
%Based on measure functions, we demonstrate our approach in Sect. 5.
%Sect. 6 presents the experimental results.
%Finally, Sect. 7 concludes the paper.

\newcommand{\synalgo}{{\sc SynAlgo}}
\vspace{-1.2em}
\section{Non-deterministic Recursive Programs}
\vspace{-0.8em}
In this work, our main contributions involve a new approach for non-polynomial worst-case analysis of
%termination of
recursive programs.
To focus on the new contributions, we consider a simple programming language
for non-deterministic recursive programs.%, with the basic capabilities for recursion.
In our language,
(a)~all scalar variables hold integers,
(b)~all assignments to scalar variables are restricted to linear expressions with floored operation,
and (c)~we do not consider return statements.
The reason to consider such a simple language is that (i) non-polynomial worst-case running time often involves  non-polynomial terms over integer-valued variables (such as array length) only, (ii) assignments to variables are often linear with possible floored expressions (in e.g. divide-and-conquer programs) and (iii) return value is often not related to worst-case behaviour of programs.
%We first introduce some basic notations and concepts, then illustrate the syntax of non-deterministic recursive programs and finally the semantics.

%\vspace{-1.5em}
%\subsection{Basic Notations and Concepts}
%\vspace{-0.8em}

%\smallskip\noindent{\bf Basic Notations and Concepts.}
For a set $A$, we denote by $|A|$ the cardinality of $A$ and $\mathbf{1}_A$ the indicator function on $A$.
We denote by $\Nset$, $\Nset_0$, $\Zset$, and $\Rset$ the sets of all positive
integers, non-negative integers, integers, and real numbers, respectively.
Below we fix a set $\mathcal{X}$ of \emph{scalar} variables.

\smallskip\noindent{\bf Arithmetic Expressions, Valuations, and Predicates.}
The set of \emph{(linear) arithmetic expressions} $\mathfrak{e}$ over $\mathcal{X}$ is generated by the following grammar:
$\mathfrak{e}::= c\mid x  \mid \left\lfloor \frac{\mathfrak{e}}{c}\right\rfloor\mid \mathfrak{e}+\mathfrak{e}\mid \mathfrak{e}-\mathfrak{e}\mid c*\mathfrak{e}$
where $c\in\Zset$ and $x\in\mathcal{X}$.
Informally, (i) $\frac{\centerdot}{c}$ refers to division operation,
(ii) $\lfloor\centerdot\rfloor$ refers to the floored operation, and
(iii) $+,-,*$ refer to addition, subtraction and multiplication operation
over integers, respectively.
In order to make sure that division is well-defined, we stipulate that every appearance of $c$ in $\frac{\mathfrak{e}}{c}$ is non-zero.
%%\smallskip\noindent{\bf Valuations.}
A \emph{valuation} over $\mathcal{X}$ is a function $\nu$ from $\mathcal{X}$ into $\Zset$.
Informally, a valuation assigns to each scalar variable an integer.
Under a valuation $\nu$ over $\mathcal{X}$, an arithmetic expression $\mathfrak{e}$ can be \emph{evaluated} to an integer in the straightforward way.
We denote by $\mathfrak{e}(\nu)$ the evaluation of $\mathfrak{e}$ under $\nu$.
%%A detailed description of evaluation of arithmetic expressions is provided in Appendix~\ref{app:arith}.
%\smallskip\noindent{\bf Propositional Arithmetic Predicates.}
The set of \emph{propositional arithmetic predicates} $\phi$ over $\mathcal{X}$ is generated by the following grammar:
$\phi::= \mathfrak{e}\le\mathfrak{e} \mid \mathfrak{e}\ge\mathfrak{e} \mid \neg\phi \mid \phi\wedge\phi\mid \phi\vee\phi$
where $\mathfrak{e}$ represents an arithmetic expression.
The satisfaction relation $\models$ between valuations and propositional arithmetic predicates is defined in the straightforward way through evaluation of arithmetic expressions (cf. Appendix~\ref{app:predicate} for details).
For each propositional arithmetic predicate $\phi$, $\mathbf{1}_{\phi}$ is interpreted as the indicator function $\nu\mapsto\mathbf{1}_{\nu\models\phi}$ on valuations, where $\mathbf{1}_{\nu\models\phi}$ is $1$ if $\nu\models\phi$ and $0$ otherwise.
%In the whole paper, we focus on propositional arithmetic predicates in disjunctive normal form where all propositional formulae are of the form $\mathfrak{e}\le\mathfrak{e}'$ or $\mathfrak{e}\ge\mathfrak{e}'$.
%By standard operations, every propositional arithmetic predicate can by algorithmically transformed into an equivalent propositional arithmetic predicate.
%%In our situation, we can eliminate predicates of the form $\mathfrak{e}>\mathfrak{e}'$ or $\mathfrak{e}<\mathfrak{e}'$ since they are equivalent to $\mathfrak{e}\ge \mathfrak{e}'+1$ or $\mathfrak{e}\le \mathfrak{e}'-1$, as we consider only integer-valued variables.

%\vspace{-1.5em}
%\subsection{The Syntax}
%\vspace{-0.8em}

\smallskip\noindent{\bf Syntax of the Programming Language.}
Due to page limit, we present a brief description of our syntax. %for recursive programs.
The syntax is essentially a subset of C programming language:
in our setting, we have \emph{scalar variables} which hold integers and \emph{function names}
which corresponds to functions (in programming-language sense);
assignment statements are indicated by `$:=$', whose left-hand-side is a scalar variable and whose right-hand-side is a linear arithmetic expression; % (defined previously);
`\textbf{skip}' is the statement which does nothing;
while-loops and conditional if-branches are indicated by `\textbf{while}' and `\textbf{if}' respectively, together with a propositional arithmetic predicate indicating the relevant condition (or guard);
demonic non-deterministic branches are indicated by `\textbf{if}' and `$\star$';
function declarations are indicated by a function name followed by a bracketed list of non-duplicate scalar variables, while function calls are indicated by a function name followed by a bracketed list of linear arithmetic expressions;
each function declaration is followed by a curly-braced compound statement as function body;
finally, a program is a sequence of function declarations with their function bodies.
%%Although we do not specify a $\mathsf{main}$ function body as in C programming language,
%%we will fix an initial function name later on.
(cf. Appendix~\ref{app:detailedsyntax} for details).

\noindent{\em Statement Labeling.}
Given a recursive program in our syntax, we assign a distinct natural number (called \emph{label} in our context)
to every assignment/skip statement, function call, if/while-statement and terminal line in the program.
Informally, each label serves as a program counter which indicates the next statement to be executed.

%\vspace{-1.5em}
%\subsection{The Semantics}\label{sect:semantics}
%\vspace{-0.8em}

\smallskip\noindent{\bf Semantics through CFGs.}
We use control-flow graphs (CFGs) to specify the semantics of recursive
programs.
Informally, a CFG specifies how values for scalar variables and the program counter change in a program.

%%\smallskip
\begin{definition}[Control-Flow Graphs]\label{def:cfg}
A \emph{control-flow graph} (CFG) is a triple which takes the form
$(\dag)~\left(\fnames,\left\{\left(\locs{f}, \clocs{f}, \alocs{f}, \flocs{f},\dlocs{f}, \pvars{f},\lin{\fn{f}},\lout{\fn{f}}\right)\right\}_{\fn{f}\in\fnames}, \left\{\transitions{\fn{f}}\right\}_{\fn{f}\in\fnames}\right)$
where:
\begin{compactitem}
\item $\fnames$ is a finite set of \emph{function names};
\item each $\locs{f}$ is a finite set of \emph{labels} attached to the function name $\fn{f}$, which is partitioned into (i) the set $\clocs{f}$ of \emph{branching} labels, (ii) the set $\alocs{f}$ of \emph{assignment} labels, (iii) the set $\flocs{f}$ of \emph{call} labels and (iv) the set $\dlocs{f}$ of \emph{demonic non-deterministic} labels;
\item each $\pvars{f}$ is the set of \emph{scalar variables} attached to $\fn{f}$;
\item each $\lin{\fn{f}}$ (resp. $\lout{\fn{f}}$) is the \emph{initial label} (resp. \emph{terminal label}) in $\locs{f}$;
\item each $\transitions{\fn{f}}$ is a relation whose every member is a triple of the form $(\loc,\alpha,\loc')$ for which $\loc$ (resp. $\loc'$) is the source label (resp. target label) of the triple such that $\loc\in\locs{f}$ (resp. $\loc'\in\locs{f}$), and $\alpha$ is (i) either a propositional arithmetic predicate $\phi$ over $\pvars{f}$ (as the set of scalar variables) if $\loc\in\clocs{f}$, (ii) or an \emph{update function} from the set of valuations over $\pvars{f}$ into the set of valuations over $\pvars{f}$ if $\loc\in\alocs{f}$, (iii) or a pair $(\fn{g}, h)$ with $\fn{g}\in\fnames$ and $h$ being a \emph{value-passing function} which maps every  valuation over $\pvars{f}$ to a valuation over $\pvars{g}$ if $\loc\in\flocs{f}$, (iv) or $\star$ if $\loc\in\dlocs{f}$.
\end{compactitem}
\end{definition}
W.l.o.g, we consider that all labels are natural numbers.
%Intuitively, each relation $\transitions{\fn{f}}$ specifies the transitions (as triples) between labels within the function body named $\fn{f}$, together with additional information specific to different types of labels;
%each update function updates the current valuation w.r.t its corresponding assignment statement, and each value-passing function outputs the initial valuation to the function call to be executed.
We denote by $\Val{f}$ the set of valuations over $\pvars{f}$, for each $\fn{f}\in\fnames$. Informally, a function name $\fn{f}$, a label $\loc\in\locs{f}$ and a valuation $\nu\in\Val{f}$ reflects that the current status of a recursive program is under function name $\fn{f}$, right before the execution of the statement labeled $\loc$ in the function body named $\fn{f}$ and with values specified by $\nu$, respectively.

\lstset{language=prog}
\lstset{tabsize=3}
\newsavebox{\progbinarysearch}
\begin{lrbox}{\progbinarysearch}
\begin{lstlisting}[mathescape]
$\mathsf{f}(n)$ {
1:  if $n\ge 2$ then
2:      $\mathsf{f}(\lfloor \frac{n}{2}\rfloor)$
3:    else  skip
    fi
4: }
\end{lstlisting}
\end{lrbox}

\vspace{-1em}
\begin{figure}
\begin{minipage}{0.50\textwidth}
\centering
\usebox{\progbinarysearch}
\caption{A program for {\sc Binary-Search}}
\label{fig:running:binarysearch}
\end{minipage}
\begin{minipage}{0.50\textwidth}
\centering
\begin{tikzpicture}[x = 2cm]

\node[label] (if)         at (0,0)         {$1$};
\node[label] (call)       at (1.5,0)         {$2$};
\node[label] (skip)       at (0,-1.5)        {$3$};
\node[label] (end)        at (1.5,-1.5)        {$4$};

\draw[tran] (if) to node[auto, font=\scriptsize] {$n\ge 2$} (call);
\draw[tran] (call) to node[auto, font=\scriptsize] {$(\mathsf{f}, n\mapsto\lfloor{n}/{2}\rfloor)$} (end);
\draw[tran] (if) to node[auto, font=\scriptsize] {$n\le 1$} (skip);
\draw[tran] (skip) to node[auto, font=\scriptsize] {$n\mapsto n$} (end);

\end{tikzpicture}
\caption{The CFG for Figure~\ref{fig:running:binarysearch}}
\label{fig:cfg:binarysearch}
\end{minipage}
\end{figure}
\vspace{-2em}

\begin{example}\label{ex:new:running}
We consider the running example in Figure~\ref{fig:running:binarysearch}
which abstracts the running time of {\sc Binary-Search}.
The CFG for this example is depicted in Figure~\ref{fig:cfg:binarysearch}. \qed
\end{example}

For a detailed description of CFG and transformation from recursive programs to CFGs, see Appendix~\ref{app:cfg}.
Based on CFG, the semantics models executions of a recursive program as runs, and is defined through the standard notion of
call stack.
Below we fix a recursive program $P$ and its CFG taking the form ($\dag$).
We first define the notion of \emph{stack element} and {\em configurations} which captures all information within a function call.

%%\smallskip
%\begin{definition}[Stack Elements and Configurations]
%\smallskip
\noindent{\bf Stack Elements and Configurations.}
A \emph{stack element} $\mathfrak{c}$ (of $P$) is a
triple $(\fn{f},\loc,\nu)$ (treated as a letter) where $\fn{f}\in\fnames$, $\loc\in\locs{f}$
%(resp. $\loc\in\locs{f}\setminus\{\lout{\fn{f}}\}$ for non-terminal)
and $\nu\in\Val{f}$; $\mathfrak{c}$ is non-terminal if $\loc\in\locs{f}\setminus\{\lout{\fn{f}}\}$.
A \emph{configuration} (of $P$) is a finite word of non-terminal stack elements (including the empty word $\varepsilon$).
%%$\mathfrak{c}$ is \emph{non-deterministic} if $\loc\in\dlocs{f}$, and it is \emph{terminal} if $\loc=\lout{\fn{f}}$.
Thus, a stack element $(\fn{f},\loc,\nu)$ specifies that the current function name is $\fn{f}$, the next statement to be executed is the one labelled with $\loc$ and the current valuation
%of scalar variables
w.r.t $\fn{f}$ is $\nu$;
a configuration captures
%%Then we define the notion of \emph{configurations} which captures all information needed to describe the current status of $W$,
%%i.e.,
the whole trace of the call stack.
%%\smallskip
%\begin{definition}[Configurations]
%A \emph{configuration} (of $W$) is a finite word of non-terminal stack elements (including the empty word $\varepsilon$).
%A configuration $w=\mathfrak{c}_0\dots \mathfrak{c}_n$ is \emph{non-deterministic} if $\mathfrak{c}_0$ is non-deterministic.
%\end{definition}
%In this paper, a configuration $w=\mathfrak{c}_0\dots \mathfrak{c}_n$ is organized in the way that $\mathfrak{c}_0$ reflects the current function call (i.e, the top of the call stack) and $\mathfrak{c}_n$ reflects the last function call (bottom of the call stack).

%\smallskip
\noindent{\bf Schedulers and Runs.}
To resolve non-determinism indicated by $\star$, we consider the standard notion of {\em schedulers},
which have the full ability to look into the whole history for decision.
Formally, a scheduler $\pi$ is a function that maps every sequence of configurations ending in a non-deterministic
location to the next configuration.
A stack element $\mathfrak{c}$ (as the initial stack element) and a scheduler $\pi$ defines a unique infinite sequence
$\{w_j\}_{j\in\Nset_0}$ of configurations as the execution starting from $\mathfrak{c}$ and under $\pi$, which is denoted as the \emph{run} $\rho(\mathfrak{c},\pi)$.
This defines the semantics of recursive programs.

We now define the notion of termination time which corresponds
directly to the running time of a recursive program.
In our setting, execution of every step
%(e.g., assignment, function call, guard checking)
takes one time unit.

\vspace{-0.5em}
%%\smallskip
\begin{definition}[Termination Time]\label{def:tertime}
For each stack element $\mathfrak{c}$ and each scheduler $\pi$,
the \emph{termination time} of the run $\rho(\mathfrak{c},\pi)=\{w_j\}_{j\in\Nset_0}$, denoted by $T(\mathfrak{c},\pi)$,
is defined as $T(\mathfrak{c},\pi):= \min\{j\mid w_j=\varepsilon\}$ (i.e., the earliest time when the stack is empty)
where $\min\emptyset:=\infty$.
%%For every terminal stack element $\mathfrak{c}$ and scheduler $\pi$, we simply let $T(\mathfrak{c},\pi):=0$.
For each stack element $\mathfrak{c}$, the \emph{worst-case termination-time function}
$\overline{T}$ is a function on the set of stack elements defined by:
$\overline{T}(\mathfrak{c}):=
\sup\{T(\mathfrak{c}, \pi)\mid \pi\mbox{ is a scheduler for }P\}$.
\end{definition}
\vspace{-0.8em}
Thus $\overline{T}$ captures the worst-case behaviour of the recursive program $P$.
%%Thus, $T(\mathfrak{c},\pi)$ is the number of steps until termination starting from $\mathfrak{c}$ and under scheduler $\pi$, and $\overline{T}(\mathfrak{c})$ is the maximal number of steps until termination under any demonic scheduler.

%%\input{termination}

\vspace{-1.5em}
\section{Measure Functions}\label{sect:mfunc}
\vspace{-1em}
In this section, we introduce the notion of measure functions for recursive programs.
We show that measure functions are sound and complete for nondeterministic recursive programs
and serve as upper bounds for the worst-case termination-time function.
In the whole section, we fix a recursive program $P$ together with its CFG taking the form ($\dag$).
We now present the standard notion of {\em invariants} which represent reachable stack elements.
Due to page limit, we omit the intuitive notion of \emph{reachable stack elements}.
Informally, a stack element is \emph{reachable} w.r.t an initial function name and initial valuations satisfying a prerequisite (as a propositional arithmetic predicate) if it can appear in the run under some scheduler (cf. Definition~\ref{def:reachability} for more details).

%\vspace{-1em}
%\begin{definition}[Reachability]\label{def:reachability}
%Let $\fn{f}^*$ be a function name and $\phi^*$ be a propositional arithmetic predicate over $\pvars{f^*}$.
%A configuration $w$ is \emph{reachable} w.r.t $\fn{f}^*,\phi^*$ if there exist a scheduler $\pi$ and a
%stack element $(\fn{f}^*,\lin{\fn{f}^*},\nu)$ such that (i) $\nu\models\phi^*$ and (ii)
%$w$ appears in the run $\rho((\fn{f}^*,\lin{\fn{f}^*},\nu),\pi)$.
%A stack element $\mathfrak{c}$ is \emph{reachable} w.r.t $\fn{f}^*,\phi^*$
%if there exists a configuration $w$ reachable w.r.t $\fn{f}^*,\phi^*$ such that $w=\mathfrak{c}\cdot w'$ for some configuration $w'$.
%\end{definition}

\vspace{-0.8em}
%%\smallskip
\begin{definition}[Invariants]\label{def:inv}
A (linear) \emph{invariant} $\inv$ w.r.t a function name $\fn{f}^*$ and a propositional
arithmetic predicate $\phi^*$ over $\pvars{f^*}$ is a function that upon
any pair $(\fn{f},\loc)$ satisfying $\fn{f}\in\fnames$ and
$\loc\in\locs{f}\backslash\{\lout{\fn{f}}\}$, $I(\fn{f},\loc)$ is a propositional arithmetic predicate
over $\pvars{f}$
such that (i) $I(\fn{f},\loc)$ is without the appearance of floored expressions (i.e. $\lfloor\centerdot\rfloor$) and (ii) for all stack elements $(\fn{f},\loc,\nu)$ reachable w.r.t
$\fn{f}^*,\phi^*$, $\nu\models I(\fn{f},\loc)$.
The invariant $I$ is in \emph{disjunctive normal form} if every $I(\fn{f},\loc)$ is in disjunctive normal form.
\end{definition}
\vspace{-0.8em}

Obtaining invariants automatically is a standard problem in programming languages,
and several techniques exist (such as abstract interpretation~\cite{DBLP:conf/popl/CousotC77} or Farkas' Lemma~\cite{DBLP:conf/cav/ColonSS03}).
In the rest of the section we fix a(n initial) function name $\fn{f}^*\in\fnames$ and a(n initial)
propositional arithmetic predicate $\phi^*$ over $\pvars{f^*}$.
For each $\fn{f}\in\fnames$ and $\loc\in\locs{f}\backslash\{\lout{\fn{f}}\}$,
we define $D_{\fn{f},\loc}$ to be the set of all valuations $\nu$ w.r.t $\fn{f}$
such that $(\fn{f},\loc,\nu)$ is reachable w.r.t $\fn{f}^*,\phi^*$.
Below we introduce the notion of measure functions.

\vspace{-0.8em}
%%\smallskip
\begin{definition}[Measure Functions]\label{def:mfunc}
A \emph{measure function} w.r.t $\fn{f}^*,\phi^*$ is a function $g$ from
the set of stack elements into $[0,\infty]$ such that for all stack elements $(\fn{f},\loc,\nu)$, the following conditions hold:
\begin{compactitem}
\item \textbf{C1:} if $\loc=\lout{\fn{f}}$, then $g(\fn{f},\loc,\nu)=0$;
\item \textbf{C2:} if $\loc\in\alocs{f}\backslash\{\lout{\fn{f}}\}$, $\nu\in D_{\fn{f},\loc}$ and $(\loc,h,\loc')$ is the only triple in $\transitions{\fn{f}}$ with source label $\loc$ and update function $h$, then
$g(\fn{f},\loc',h(\nu))+1\le g(\fn{f},\loc,\nu)$;

\item \textbf{C3:} if $\loc\in\flocs{f}\backslash\{\lout{\fn{f}}\}$, $\nu\in D_{\fn{f},\loc}$ and $(\loc,(\fn{g},h),\loc')$ is the only triple in $\transitions{\fn{f}}$ with source label $\loc$ and value-passing function $h$, then
$1+g(\fn{g},\lin{\fn{g}},h(\nu))+g(\fn{f},\loc',\nu)\le g(\fn{f},\loc,\nu)$;
\item \textbf{C4:} if $\loc\in\clocs{f}\backslash\{\lout{\fn{f}}\}$, $\nu\in D_{\fn{f},\loc}$ and $(\loc, \phi,\loc_1),(\loc, \neg\phi,\loc_2)$ are namely two triples in $\transitions{\fn{f}}$ with source label $\loc$, then
$\mathbf{1}_{\nu\models\phi}\cdot g(\fn{f},\loc_1,\nu)+\mathbf{1}_{\nu\models\neg\phi}\cdot g(\fn{f},\loc_2,\nu)+1\le g(\fn{f},\loc,\nu)$;
\item \textbf{C5:} if $\loc\in\dlocs{f}\backslash\{\lout{\fn{f}}\}$, $\nu\in D_{\fn{f},\loc}$ and $(\loc, \star,\loc_1),(\loc, \star,\loc_2)$ are namely two triples in $\transitions{\fn{f}}$ with source label $\loc$, then
$\max\{g(\fn{f},\loc_1,\nu), g(\fn{f},\loc_2,\nu)\}+1\le g(\fn{f},\loc,\nu)$.
\end{compactitem}
\end{definition}
\vspace{-0.8em}
Intuitively, a measure function is a non-negative function whose values strictly decrease along the executions
%%(or run) of a non-deterministic recursive program,
regardless of the choice of the demonic scheduler.
By applying ranking functions to configurations, one can prove the following theorem stating that measure functions are sound and complete for the worst-case termination-time function.
The technical proof of the theorem is put in Appendix~\ref{app:thmmfuncsoundness} and Appendix~\ref{app:thmmfunc}.
\vspace{-0.8em}
\begin{theorem}[Soundness and Completeness]\label{thm:mfunc}
(1) {\em (Soundness).} For all measure functions $g$ w.r.t $\fn{f}^*,\phi^*$, it holds that for all valuations $\nu\in\Aval{\fn{f}^*}$ such that $\nu\models\phi^*$, we have $\overline{T}(\fn{f}^*,\lin{\fn{f}^*},\nu)\le g(\fn{f}^*,\lin{\fn{f}^*},\nu)$.
(2) {\em (Completeness).} $\overline{T}$ is a measure function w.r.t $\fn{f}^*,\phi^*$.
\end{theorem}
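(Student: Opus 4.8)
The plan is to prove the two halves separately: soundness by lifting a given measure function from stack elements to configurations and showing the lifted function strictly decreases along every run (this is exactly the ``ranking function on configurations'' alluded to just before the theorem), and completeness by verifying directly that $\overline{T}$ satisfies the defining conditions \textbf{C1}--\textbf{C5} of Definition~\ref{def:mfunc}.

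For \emph{soundness}, fix a measure function $g$ w.r.t.\ $\fn{f}^*,\phi^*$ and define $G$ on configurations by $G(\varepsilon):=0$ and $G(\mathfrak{c}_1\cdots\mathfrak{c}_k):=\sum_{i=1}^{k} g(\mathfrak{c}_i)$. The heart of the argument is the \emph{strict-decrease lemma}: for every $\nu\models\phi^*$, every scheduler $\pi$, and the run $\rho((\fn{f}^*,\lin{\fn{f}^*},\nu),\pi)=\{w_j\}_{j\in\Nset_0}$, one has $G(w_{j+1})+1\le G(w_j)$ whenever $w_j\neq\varepsilon$. I would prove this by a case analysis on the type of the label $\loc$ of the top stack element $(\fn{f},\loc,\nu')$ of $w_j$: this element appears in the run and is therefore reachable w.r.t.\ $\fn{f}^*,\phi^*$, so $\nu'\in D_{\fn{f},\loc}$, which is exactly the precondition needed to invoke \textbf{C2} ($\loc$ an assignment label), \textbf{C3} ($\loc$ a call label, where the step pushes the callee frame at its initial label and moves the caller to its return label), \textbf{C4} ($\loc$ a branching label, where the successor is fixed by whether $\nu'\models\phi$), or \textbf{C5} ($\loc$ a demonic label, where \emph{every} scheduler choice sends $g$ of the new top frame below $g(\fn{f},\loc,\nu')-1$); \textbf{C1} handles the sub-cases in which the step produces and immediately pops a terminal frame. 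Given the lemma, if $g((\fn{f}^*,\lin{\fn{f}^*},\nu))<\infty$ then $G(w_j)\le G(w_0)-j$ as long as the run has not reached $\varepsilon$, and since $G\ge 0$ this forces termination at some time $t=T((\fn{f}^*,\lin{\fn{f}^*},\nu),\pi)$ with $t\le G(w_0)=g((\fn{f}^*,\lin{\fn{f}^*},\nu))$ (the case where this value is $\infty$ being trivial); taking the supremum over $\pi$ yields $\overline{T}(\fn{f}^*,\lin{\fn{f}^*},\nu)\le g(\fn{f}^*,\lin{\fn{f}^*},\nu)$.

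For \emph{completeness}, I would check \textbf{C1}--\textbf{C5} for $\overline{T}$. Condition \textbf{C1} is immediate since a run from a terminal stack element has empty stack already, so $\overline{T}(\fn{f},\lout{\fn{f}},\nu)=0$. For \textbf{C2}, \textbf{C4}, \textbf{C5} the first step out of $(\fn{f},\loc,\nu)$ neither pushes nor pops a frame and leads to a successor that is forced (\textbf{C2}), forced by whether $\nu\models\phi$ (\textbf{C4}), or chosen by the scheduler among two candidates (\textbf{C5}); since the histories seen by a scheduler before and after deleting the forced first configuration are in bijection, one gets $\overline{T}(\fn{f},\loc,\nu)=1+\overline{T}(\fn{f},\loc'',\nu'')$ for the unique successor in the first two cases and $\overline{T}(\fn{f},\loc,\nu)=1+\max\{\overline{T}(\fn{f},\loc_1,\nu),\overline{T}(\fn{f},\loc_2,\nu)\}$ in the last, which give \textbf{C2}, \textbf{C4} (with the usual convention $0\cdot\infty=0$) and \textbf{C5} with equality. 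The real content is \textbf{C3}: the first step replaces $(\fn{f},\loc,\nu)$ by the configuration $(\fn{f},\loc',\nu)\,(\fn{g},\lin{\fn{g}},h(\nu))$, and the stack discipline of the semantics guarantees that the bottom frame $(\fn{f},\loc',\nu)$ is untouched until the callee sub-computation above it has fully drained, after which the run proceeds exactly as a run from $(\fn{f},\loc',\nu)$. Hence any schedulers $\pi_1$ for the run from $(\fn{g},\lin{\fn{g}},h(\nu))$ and $\pi_2$ for the run from $(\fn{f},\loc',\nu)$ can be spliced into a scheduler $\pi$ for the run from $(\fn{f},\loc,\nu)$ with $T((\fn{f},\loc,\nu),\pi)=1+T((\fn{g},\lin{\fn{g}},h(\nu)),\pi_1)+T((\fn{f},\loc',\nu),\pi_2)$ (reading $1+\infty+(\cdot)=\infty$); taking suprema over $\pi_1$ and $\pi_2$ independently gives $\overline{T}(\fn{f},\loc,\nu)\ge 1+\overline{T}(\fn{g},\lin{\fn{g}},h(\nu))+\overline{T}(\fn{f},\loc',\nu)$, which is \textbf{C3}.

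I expect the main obstacle to be the rigorous treatment of the pushdown-style semantics required for \textbf{C3} (and, more lightly, for the scheduler bijections in \textbf{C2}/\textbf{C4}/\textbf{C5}): one must justify formally that a run from a call configuration decomposes as ``run the callee to completion, then resume the caller'', that the frozen bottom portion of the stack cannot affect the top portion, and that schedulers for the whole run correspond to pairs of schedulers for the two parts --- this last point being precisely what makes the worst case over schedulers distribute over the sum in \textbf{C3}. Handling $\infty$ values uniformly and checking the degenerate cases (an empty callee body with $\lin{\fn{g}}=\lout{\fn{g}}$, or a step landing directly on $\lout{\fn{f}}$) is fiddly but mechanical; the strict-decrease lemma's case analysis itself is routine once reachability has been used to unlock \textbf{C2}--\textbf{C5}.
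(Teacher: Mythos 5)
Your proposal is correct and follows essentially the same route as the paper: soundness by summing $g$ over the stack elements of a configuration and proving a strict-decrease lemma via reachability and a case analysis on C1--C5, and completeness by verifying C1--C5 for $\overline{T}$ directly, with pre/post scheduler splicing to handle the non-determinism and the decomposition of a call into callee-then-continuation. The only cosmetic difference is that the paper establishes C2--C5 for $\overline{T}$ with equality in both directions, whereas you prove only the inequality direction actually required by Definition~\ref{def:mfunc}.
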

\vspace{-0.8em}

%\smallskip\noindent{\bf Significant labels.}
By Theorem~\ref{thm:mfunc}, to obtain an upper bound on the worst-case termination-time function,
it suffices to synthesize a measure function. Below we show that it suffices to
synthesize measure functions at cut-points (which we refer as {\em significant} labels).
%%\smallskip
\vspace{-0.8em}
\begin{definition}[Significant Labels]\label{def:slocs}
Let $\fn{f}\in\fnames$.
A label $\loc\in\locs{f}$ is \emph{significant} if
%it corresponds to one of the following  two cases:
either $\loc=\lin{\fn{f}}$ or $\loc$ is the initial label to some while-loop appearing in the function body of $\fn{f}$.
\end{definition}
\vspace{-0.8em}

We denote by $\slocs{f}$ the set of significant locations in $\locs{f}$.
Informally, a significant label is a label where valuations cannot be easily deduced from other labels,
namely valuations at the start of the function-call and at the initial label of a while loop.

%\smallskip
\noindent{\bf The Expansion Construction (from $g$ to $\widehat{g}$).}
Let $g$ be a function from
$\left\{(\fn{f},\loc,\nu)\mid \fn{f}\in\fnames, \loc\in\slocs{f},  \nu\in\Aval{f}\right\}$
into $[0,\infty]$.
One can obtain from $g$ a function $\widehat{g}$ from the set of all stack elements into $[0,\infty]$ in a straightforward way through iterated application of the equality forms of C1--C5 (cf. Appendix~\ref{app:propmfunc} for details).

%\vspace{-0.8em}
%\begin{remark}
%A notion similar to measure functions is proposed in \cite{DBLP:journals/toplas/BrockschmidtE0F16}.
%In contrast to measure functions in this paper, values taken by the functions can be negative in their setting.
%\end{remark}

\newcommand{\Sat}{\mathsf{Sat}}

\vspace{-1.5em}
\section{The Synthesis Algorithm}\label{sect:synalg}
\vspace{-0.5em}
By Theorem~\ref{thm:mfunc}, measure functions are a sound approach for upper bounds of the worst-case
termination-time function, and hence synthesis of measure functions of specific
forms provide upper bounds for worst-case behaviour of recursive programs.
We first define the synthesis problem of measure functions and then
present the synthesis algorithm, where the initial stack element is integrated into the input invariant.
Informally, the input is a recursive program, an invariant for the program and technical parameters for the specific form of a measure function, and the output is a measure function if the algorithm finds one, and fail otherwise.

%\smallskip
\noindent{\bf The \recterm\ Problem.}
The \recterm\ problem is defined as follows:
\begin{compactitem}
\item \emph{Input:} a recursive program $P$, an invariant $\inv$ in disjunctive normal form and a quadruple $(d, \mathrm{op}, r,k)$ of technical parameters;
\item \emph{Output:} a measure function $h$ w.r.t the quadruple $(d, \mathrm{op}, r,k)$.
%%from the set of stack elements into $[0,\infty]$
%%(in a form determined by $(d, \mathrm{op}, r)$ to be specified later)
%such that $\overline{T}(\fn{f}, \loc, \nu)\le h(\nu)$ whenever $\nu\in\Aval{f}$ and $\nu\models I(\fn{f},\loc)$.
\end{compactitem}
The quadruple $(d, \mathrm{op}, r,k)$ specifies the form of a measure function in the way that $d\in\Nset$ is the degree of the measure function to be synthesized, $\mathrm{op}\in\{\mathrm{log}, \mathrm{exp}\}$ signals either logarithmic (when $\mathrm{op}=\log$) (e.g., $n\ln{n}$) or exponential (when $\mathrm{op}=\mathrm{exp}$) (e.g., $n^{1.6}$) measure functions, $r$ is a rational number greater than $1$ which specifies the exponent in the measure function (i.e., $n^r$) when $\mathrm{op}=\mathrm{exp}$ and $k\in\Nset$ is a technical parameter required by Theorem~\ref{thm:handelman}.

\vspace{-0.8em}
\begin{remark}
In the input for \recterm\ we fix the exponent $r$ when $\mathrm{op}=\mathrm{exp}$.
However, iterating with binary search over an input bounded range we can obtain a
measure function in the given range as precise as possible.
Moreover, the invariants can be obtained automatically through e.g.~\cite{DBLP:conf/cav/ColonSS03}.
%Hence besides the input program, the other inputs are very simple.
\qed
\end{remark}

\vspace{-0.8em}
We present our algorithm \synalgo{} for synthesizing measure
functions  for the \recterm\ problem.
The algorithm is
designed to synthesize one function over
valuations at each function name and appropriate significant
labels  so that C1--C5 are fulfilled.
Due to page limit, we illustrate the main conceptual details of our algorithm (more
details are relegated to Appendix~\ref{app:synalgo}). Below we fix an input to our algorithm.
%%We first present an overview of our solution. %%which consists of four steps.

\smallskip\noindent{\bf Overview.}
%% of the Solution.}
We present the overview of our solution which has the following five steps.
\begin{compactenum}
\item {\em Step~1.} Since one key aspect of our result is to obtain bounds of the form
$\mathcal{O}(n \log n)$ as well as $\mathcal{O}(n^r)$, where $r$ is not an integer,
we first consider general form of upper bounds that involve logarithm and exponentiation
(Step~1(a)), and then consider templates with the general form of upper bounds for significant
labels (Step~1(b)).
%We use abstraction to deal with logarithm and exponentiation (i.e., deal each such logarithm
%and exponentiation term as a variable) and use Farkas' lemma to obtain linear constraints
%over the variables.

\item {\em Step~2.} The second step considers the template generated in Step~1 for significant
labels and generate templates for all labels. This step is relatively straightforward.

\item {\em Step~3.} The third step establishes constraint triples according to the invariant
given by the input and the template obtained in Step~2. This step is also straightforward.

\item {\em Step~4.} The fourth step is the significant step which involves transforming the constraint
triples generated in Step~3 into ones without logarithmic and exponentiation terms.
The first substep (Step~4(a)) is to consider abstractions of logarithmic, exponentiation, and floored expressions as fresh variables.
The next step (Step~4(b)) requires to obtain linear constraints over the abstracted variables.
We use Farkas' lemma and Lagrange's Mean-Value Theorem (LMVT) to obtain sound linear inequalities
for those variables.

\item {\em Step~5.} The final step is to solve the unknown coefficients of the template from the constraint triples (without logarithm or exponentiation) obtained from Step 4.
This requires the solution of positive polynomials over polyhedrons through the sound form of Handelman's Theorem (Theorem~\ref{thm:handelman}) to transform into a linear program.
\end{compactenum}
%%Since our algorithm is largely technical,
We first present an informal illustration
of the key ideas through a simple example.

\vspace{-0.8em}
\begin{example}\label{ex:narrationb}
Consider the task to synthesize a measure function for Karatsuba's algorithm~\cite{KnuthAllBooks} for polynomial multiplication which runs in $c\cdot n^{1.6}$ steps, where $c$ is a coefficient to be synthesized and $n$ represents the maximal degree of the input polynomials and is a power of $2$.
%%Since Karatsuba's algorithm is much more complex than Merge-Sort,
We describe informally how our algorithm tackles Karatsuba's algorithm.
Let $n$ be the length of the two input polynomials and $c\cdot n^{1.6}$ be the template.
Since Karatsuba's algorithm involves three sub-multiplications and seven additions/subtractions, the condition C3 becomes (*) $c\cdot n^{1.6}-3\cdot c\cdot \left(\frac{n}{2}\right)^{1.6}-7\cdot n\ge 0$
for all $n\ge 2$.
The algorithm first abstracts $n^{1.6}$ as a stand-alone variable
$u$.
Then the algorithm generates the following inequalities through properties of exponentiation:
(**) $u\ge 2^{1.6}, u\ge 2^{0.6}\cdot n$.
Finally, the algorithm transforms (*) into (***)
$c\cdot u-3\cdot \left(\frac{1}{2}\right)^{1.6}\cdot c\cdot u-7\cdot n\ge 0$
and synthesizes a value for $c$ through Handelman's Theorem to ensure that (***) holds under $n\ge 2$ and (**).
One can verify that $c=1000$ is a feasible solution since
\begin{align*}
  & \left(1000-3000\cdot \left({1}/{2}\right)^{1.6}\right)\cdot u - 7\cdot n = \\
  &~~\frac{7}{2^{0.6}}\cdot \left(u-2^{0.6}\cdot n\right)+\frac{1000\cdot 2^{1.6}-3014}{2^{1.6}}\cdot \left(u-2^{1.6}\right)+\left(1000\cdot 2^{1.6}-3014\right)\cdot 1.
\end{align*}
Hence, Karatsuba's algorithm runs in $\mathcal{O}(n^{1.6})$ time.\qed
\end{example}

%Due to the complicated nature of arithmetic expressions we allow (i.e., floored, logarithmic and exponentiation expressions),
%our algorithm does not strictly follow the steps described at the start of the section.
%In detail, there are some overlaps between these steps.
%In the following, we start the formal demonstration of our algorithm.
%Below we fix an input recursive program $W$ with its CFG taking the form
%($\dag$), an input invariant $I$ in disjunctive normal form and an input pair of parameters $(d,\mathrm{op},r,k)$.
%Let us first consider our running example.

\vspace{-1.5em}
\subsection{Step~1 of \synalgo}
\vspace{-0.8em}

\noindent{\bf Step~1(a): General Form of A Measure Function.}

\noindent{\em Extended Terms.}
In order to capture non-polynomial worst-case complexity of recursive programs, our algorithm incorporates
two types of extensions of terms.
\begin{compactenum}
\item {\em Logarithmic Terms.}
The first extension, which we call $\log{}$-extension, is the extension with terms from
%\begin{equation}\label{eq:logterms}
$\ln{x},\ln{(x-y+1)}$
%\end{equation}
where $x,y$ are scalar variables appearing in the parameter list of some function name and $\ln{(\centerdot)}$ refers to the natural logarithm function with base $e$.
Our algorithm will take this extension when $\mathrm{op}$ is $\log$.

\item {\em Exponentiation Terms.}
The second extension, which we call $\mathrm{exp}$-extension, is with terms from
%\begin{equation}\label{eq:expterms}
$x^r,(x-y+1)^r$
%\end{equation}
where $x,y$ are scalar variables appearing in the parameter list of some function name.
The algorithm takes this when $\mathrm{op}=\mathrm{exp}$.

\end{compactenum}
The intuition is that $x$ (resp. $x-y+1$) may represent a positive quantity to be halved iteratively (resp. the length between array indexes $y$ and $x$).

\smallskip\noindent{\em General Form.}
The general form for any coordinate function $\eta(\fn{f}, \loc, \centerdot)$ of a measure function $\eta$ (at function name $\fn{f}$ and $\loc\in\slocs{f}$) is a finite sum
\begin{equation}\label{eq:synform}
\textstyle\mathfrak{e}=\sum_{i}c_i\cdot g_{i}
\end{equation}
where (i) each $c_i$ is a constant scalar and each $g_{i}$ is a finite product of no more than $d$ terms (i.e., with degree at most $d$) from scalar variables in $\pvars{f}$ and logarithmic/exponentiation extensions
%from either (\ref{eq:logterms}) or (\ref{eq:expterms})
(depending on $\mathrm{op}$), and (ii) all $g_{i}$'s correspond to all finite products of no more than $d$ terms.
Analogous to arithmetic expressions, for any such finite sum $\mathfrak{e}$ and any valuation $\nu\in\Aval{f}$, we denote by $\mathfrak{e}(\nu)$ the real number evaluated through replacing any scalar variable $x$ appearing in $\mathfrak{e}$ with $\nu(x)$, provided that $\mathfrak{e}(\nu)$ is well-defined.

%\smallskip\noindent{\em Function $\llbracket\mathfrak{e}\rrbracket$ from $\mathfrak{e}$.}
\smallskip\noindent{\em Semantics of General Form.}
A finite sum $\mathfrak{e}$ at $\fn{f}$ and $\loc\in\slocs{f}$ in the form (\ref{eq:synform}) defines a function $\llbracket\mathfrak{e}\rrbracket$ on $\Aval{f}$ in the way that for each  $\nu\in\Aval{f}$:
$\llbracket\mathfrak{e}\rrbracket(\nu):=\mathfrak{e}(\nu)$ if $\nu\models I(\fn{f},\loc)$, and $\llbracket\mathfrak{e}\rrbracket(\nu):=0$ otherwise.
%\[
%\llbracket\mathfrak{e}\rrbracket(\nu):=
%\begin{cases}
%\mathfrak{e}(\nu) & \mbox{if }\nu\models I(\fn{f},\loc)\\
%0 & \mbox{otherwise}
%\end{cases}\enskip.
%\]
Note that in the definition of $\llbracket\mathfrak{e}\rrbracket$, we do not consider the case when $\log$ or exponentiation is undefined.
However, we will see in Step 1(b) below that $\log{}$ or exponentiation will always be well-defined.

\smallskip\noindent{\bf Step~1(b): Templates.}
As in all previous works (cf.~\cite{DBLP:conf/tacas/ColonS01,DBLP:conf/vmcai/PodelskiR04,DBLP:conf/vmcai/Cousot05,DBLP:journals/fcsc/YangZZX10,DBLP:journals/jossac/ShenWYZ13,SriramCAV,DBLP:conf/popl/ChatterjeeFNH16,Hoffman1}),
we consider a template for measure function determined by the triple $(d,\mathrm{op},r)$ from the input parameters.
Formally, the template determined by $(d,\mathrm{op},r)$ assigns to every function name $\fn{f}$ and $\loc\in\slocs{f}$ an expression in the form (\ref{eq:synform}) (with degree $d$ and extension option $\mathrm{op}$).
Note that a template here only restricts (i) the degree and (ii) $\log$ or $\mathrm{exp}$ extension for a measure function, rather than its specific form.
Although $r$ is fixed when $\mathrm{op}=\mathrm{exp}$, one can perform a binary search over a bounded range for a suitable or optimal $r$.

In detail, the algorithm sets up a template $\eta$ for a measure function by assigning to each function name $\fn{f}$ and significant label $\loc\in\slocs{f}$ an expression $\eta(\fn{f},\loc)$ in a form similar to
(\ref{eq:synform}), except for that $c_i$'s in (\ref{eq:synform}) are interpreted as distinct \emph{template variables} whose actual values are to be synthesized.
In order to ensure that logarithm and exponentiation are well-defined over each $I(\fn{f},\loc)$, we impose the following restriction ($\S$) on our template:
\begin{quote}
(\S) $\ln{x},x^r$ (resp. $\ln{(x-y+1)},(x-y+1)^r$) appear in $\eta(\fn{f},\loc)$ only when $x-1\ge 0$ (resp. $x-y\ge 0$) can be inferred from the invariant $I(\fn{f},\loc)$.
\end{quote}
To infer $x-1\ge 0$ or $x-y\ge 0$ from $I(\fn{f},\loc)$, we utilize Farkas' Lemma.

%%\smallskip
\begin{theorem}[Farkas' Lemma~\cite{FarkasLemma,SchrijverPolyhedra}]\label{thm:farkas}
Let $\mathbf{A}\in\Rset^{m\times n}$, $\mathbf{b}\in\Rset^m$, $\mathbf{c}\in\Rset^{n}$ and $d\in\Rset$.
Assume that $\{\mathbf{x}\mid \mathbf{A}\mathbf{x}\le \mathbf{b}\}\ne\emptyset$.
Then
$\{\mathbf{x}\mid \mathbf{A}\mathbf{x}\le \mathbf{b}\}\subseteq \{\mathbf{x}\mid \mathbf{c}^{\mathrm{T}}\mathbf{x}\le d\}$
iff there exists $\mathbf{y}\in\Rset^m$ such that $\mathbf{y}\ge \mathbf{0}$, $\mathbf{A}^\mathrm{T}\mathbf{y}=\mathbf{c}$ and $\mathbf{b}^{\mathrm{T}}\mathbf{y}\le d$.
\end{theorem}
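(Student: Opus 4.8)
This is the classical affine form of Farkas' Lemma, and the plan is a self-contained derivation by elimination of variables rather than an appeal to linear-programming duality. I would dispatch the ``if'' direction first, since it is a one-line computation requiring no hypothesis: given $\mathbf{y}\ge\mathbf{0}$ with $\mathbf{A}^{\mathrm{T}}\mathbf{y}=\mathbf{c}$ and $\mathbf{b}^{\mathrm{T}}\mathbf{y}\le d$, any $\mathbf{x}$ with $\mathbf{A}\mathbf{x}\le\mathbf{b}$ satisfies $\mathbf{c}^{\mathrm{T}}\mathbf{x}=(\mathbf{A}^{\mathrm{T}}\mathbf{y})^{\mathrm{T}}\mathbf{x}=\mathbf{y}^{\mathrm{T}}(\mathbf{A}\mathbf{x})\le\mathbf{y}^{\mathrm{T}}\mathbf{b}=\mathbf{b}^{\mathrm{T}}\mathbf{y}\le d$, where the middle step uses $\mathbf{y}\ge\mathbf{0}$ and $\mathbf{A}\mathbf{x}\le\mathbf{b}$ componentwise.

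For the ``only if'' direction I would first establish the standard \emph{certificate of infeasibility}: for every $\mathbf{A}'\in\Rset^{p\times q}$ and $\mathbf{b}'\in\Rset^{p}$, either $\{\mathbf{x}\mid\mathbf{A}'\mathbf{x}\le\mathbf{b}'\}\ne\emptyset$, or there is $\mathbf{z}\ge\mathbf{0}$ with $(\mathbf{A}')^{\mathrm{T}}\mathbf{z}=\mathbf{0}$ and $(\mathbf{b}')^{\mathrm{T}}\mathbf{z}<0$. This I would prove by Fourier--Motzkin elimination: eliminate the variables $x_1,\dots,x_q$ one at a time, at each step pairing every lower bound on the current variable with every upper bound (and discarding variables occurring with only one sign); each inequality of the new system is by construction a \emph{nonnegative} combination of inequalities of the previous one, so after all $q$ variables are gone the remaining inequalities have the form $0\le\beta_r$ with $\beta_r=\mathbf{z}_r^{\mathrm{T}}\mathbf{b}'$ for some $\mathbf{z}_r\ge\mathbf{0}$ satisfying $(\mathbf{A}')^{\mathrm{T}}\mathbf{z}_r=\mathbf{0}$. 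The original system is feasible iff all $\beta_r\ge0$; otherwise a suitable $\mathbf{z}_r$ is the desired certificate, and the two alternatives exclude each other by the same computation as above.

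To derive the affine form, I would homogenize. Assume $\{\mathbf{x}\mid\mathbf{A}\mathbf{x}\le\mathbf{b}\}\ne\emptyset$ and $\mathbf{A}\mathbf{x}\le\mathbf{b}\Rightarrow\mathbf{c}^{\mathrm{T}}\mathbf{x}\le d$, and introduce a fresh scalar $\tau$. I claim the system
\[
\mathbf{A}\mathbf{x}-\tau\,\mathbf{b}\le\mathbf{0},\qquad -\tau\le 0,\qquad -\mathbf{c}^{\mathrm{T}}\mathbf{x}+\tau\, d\le -1
\]
is infeasible: if $\tau>0$ then $\mathbf{x}/\tau$ lies in $\{\mathbf{z}\mid\mathbf{A}\mathbf{z}\le\mathbf{b}\}$, so $\mathbf{c}^{\mathrm{T}}\mathbf{x}\le\tau d<\tau d+1$; and if $\tau=0$, fixing any $\mathbf{x}_0$ with $\mathbf{A}\mathbf{x}_0\le\mathbf{b}$ the ray $\mathbf{x}_0+t\,\mathbf{x}$ ($t\ge0$) is feasible, so $\mathbf{c}^{\mathrm{T}}\mathbf{x}_0+t\,\mathbf{c}^{\mathrm{T}}\mathbf{x}\le d$ for all $t\ge0$, forcing $\mathbf{c}^{\mathrm{T}}\mathbf{x}\le0<1$. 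Applying the certificate yields $\mathbf{y}\ge\mathbf{0}$, $\sigma\ge0$, $\lambda\ge0$ with $\mathbf{A}^{\mathrm{T}}\mathbf{y}=\lambda\mathbf{c}$ and $\mathbf{b}^{\mathrm{T}}\mathbf{y}+\sigma=\lambda d$, while the combination of right-hand sides being negative forces $\lambda>0$ (the only nonzero right-hand side is the $-1$). Then $\mathbf{y}':=\mathbf{y}/\lambda\ge\mathbf{0}$ satisfies $\mathbf{A}^{\mathrm{T}}\mathbf{y}'=\mathbf{c}$ and $\mathbf{b}^{\mathrm{T}}\mathbf{y}'=d-\sigma/\lambda\le d$, which is exactly the required certificate.

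The main obstacle is this last passage. The naive route---observing that $\mathbf{A}\mathbf{x}\le\mathbf{b}\wedge\mathbf{c}^{\mathrm{T}}\mathbf{x}\ge d+1$ is infeasible---produces only $\mathbf{b}^{\mathrm{T}}\mathbf{y}'<d+1$ after normalization, which is too weak, and taking a limit over thresholds $d+\varepsilon$ need not converge. Homogenizing with $\tau$ and handling the recession direction $\tau=0$ separately is what sharpens the bound to $\mathbf{b}^{\mathrm{T}}\mathbf{y}'\le d$; the nonemptiness hypothesis enters exactly in the $\tau=0$ case and is genuinely necessary, since if $\mathbf{A}\mathbf{x}\le\mathbf{b}$ is infeasible the implication holds vacuously yet $\mathbf{c}$ need not belong to $\{\mathbf{A}^{\mathrm{T}}\mathbf{y}\mid\mathbf{y}\ge\mathbf{0}\}$.
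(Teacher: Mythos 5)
Your proof is correct. Note, however, that the paper does not prove this statement at all: Farkas' Lemma is imported as a classical result with citations to the literature, so there is no proof of record to compare yours against. Your self-contained derivation is sound: the ``if'' direction is the standard one-line computation; the infeasibility certificate via Fourier--Motzkin is the standard elementary route; and the homogenization with the auxiliary scalar $\tau$ is exactly the right device to obtain the sharp bound $\mathbf{b}^{\mathrm{T}}\mathbf{y}\le d$ rather than the weaker $\mathbf{b}^{\mathrm{T}}\mathbf{y}< d+1$ that the naive two-system argument yields. You also correctly identify that the nonemptiness hypothesis enters precisely in the recession ($\tau=0$) case and is genuinely needed for the ``only if'' direction. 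The only point worth stating slightly more explicitly is the bookkeeping in the Fourier--Motzkin step, namely that inequalities not involving the variable being eliminated are carried over with multiplier $1$, so that every inequality of every intermediate system is indeed a nonnegative combination of the original rows; but this is routine and does not affect correctness.
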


By Farkas' Lemma, there exists an algorithm that infers whether $x-1\ge 0$ (or $x-y\ge 0$) holds under $I(\fn{f},\loc)$ in polynomial time through emptiness checking of polyhedra (cf.~\cite{DBLP:books/daglib/0090562}) since  $I(\fn{f},\loc)$ involves only linear (degree-$1$) polynomials in our setting.

Then $\eta$ naturally induces a function $\llbracket\eta\rrbracket$ from
$\left\{(\fn{f},\loc,\nu)\mid \fn{f}\in\fnames, \loc\in\slocs{f},  \nu\in\Aval{f}\right\}$
into $[0,\infty]$ parametric over template variables such that  $\llbracket\eta\rrbracket(\fn{f},\loc,\nu)={\llbracket}{\eta(\fn{f},\loc)}{\rrbracket}(\nu)$ for all appropriate stack elements $(\fn{f},\loc,\nu)$.
Note that $\llbracket\eta\rrbracket$ is well-defined since logarithm and exponentiation is well-defined over satisfaction sets given by $I$.

\vspace{-1.5em}
\subsection{Step~2 of \synalgo}
\vspace{-0.8em}

\noindent{\bf Step~2: Computation of $\widehat{\llbracket\eta\rrbracket}$.} Let $\eta$ be the template constructed from Step 1.
This step computes $\widehat{\llbracket\eta\rrbracket}$ from $\eta$ by the expansion construction of
significant labels (Section~\ref{sect:mfunc})
%%in Section~\ref{sect:significantlabels}
which transforms a function $g$ into $\widehat{g}$.
Recall the function $\llbracket\mathfrak{e} \rrbracket$ for $\mathfrak{e}$ is defined in Step 1(a).
Formally, based on the template $\eta$ from Step 1, the algorithm computes $\widehat{\llbracket\eta\rrbracket}$, with the exception that template variables appearing in $\eta$ are treated as undetermined constants.
Then $\widehat{\llbracket\eta\rrbracket}$ is a function parametric over the template variables in $\eta$.

By an easy induction, each $\widehat{\llbracket\eta\rrbracket}(\fn{f},\loc,\centerdot)$ can be represented by an expression in the form:
\begin{equation}\label{eq:hatform}
\max\big\{\textstyle\sum_{j}\mathbf{1}_{\phi_{1j}}\cdot h_{1j},~\dots,~\textstyle\sum_{j}\mathbf{1}_{\phi_{mj}}\cdot h_{mj}\big\}
\end{equation}
%%where
\begin{compactenum}
\item each $\phi_{ij}$ is a propositional arithmetic predicate over $\pvars{f}$ such that for each $i$, $\bigvee_j\phi_{ij}$ is tautology and $\phi_{ij_1}\wedge\phi_{ij_2}$ is unsatisfiable whenever $j_1\ne j_2$, and
\item each $h_{ij}$ takes the form similar to (\ref{eq:synform}) with the difference that  (i) each $c_i$ is either a scalar or a template variable appearing in $\eta$ and (ii) each $g_i$ is a finite product whose every multiplicand is either some $x\in\pvars{f}$, or some $\lfloor\mathfrak{e}\rfloor$ with $\mathfrak{e}$ being an instance of $\langle\mathit{expr}\rangle$, or some $\ln{\mathfrak{e}}$ (or $\mathfrak{e}^r$, depending on $\mathrm{op}$) with $\mathfrak{e}$ being an instance of $\langle\mathit{expr}\rangle$.
\end{compactenum}
For this step we use the fact that all propositional arithmetic predicates can be put in disjunctive normal form.
For detailed description see Appendix~\ref{app:synalgo}.

\vspace{-1.5em}
\subsection{Step~3 of \synalgo}
\vspace{-0.8em}

This step generates constraint triples from $\widehat{\llbracket\eta\rrbracket}$ computed in Step 2.
By applying non-negativity and C2-C5 to $\widehat{\llbracket\eta\rrbracket}$ (computed in Step 2), the algorithm establishes constraint triples which will be interpreted as universally-quantified logical formulas later.

\smallskip\noindent{\bf Constraint Triples.} A \emph{constraint triple} is a triple $(\fn{f}, \phi,\mathfrak{e})$ where (i) $\fn{f}\in\fnames$, (ii) $\phi$ is a propositional arithmetic predicate over $\pvars{f}$ which is a conjunction of atomic formulae of the form $\mathfrak{e}'\ge 0$ with $\mathfrak{e}'$ being an arithmetic expression, and (iii) $\mathfrak{e}$ is an expression taking the form similar to (\ref{eq:synform})
with the difference that (i) each $c_i$ is either a scalar, or a template variable $c$ appearing in $\eta$,
or its reverse $-c$, and (ii) each $g_i$ is a finite product whose every multiplicand is
either some $x\in\pvars{f}$, or some $\lfloor\mathfrak{e}\rfloor$ with $\mathfrak{e}$ being an instance of $\langle\mathit{expr}\rangle$,
or some $\ln{\mathfrak{e}}$ (or $\mathfrak{e}^r$, depending on $\mathrm{op}$) with $\mathfrak{e}$ being an instance of $\langle\mathit{expr}\rangle$.

For each constraint triple $(\fn{f}, \phi,\mathfrak{e})$, the function $\llbracket\mathfrak{e}\rrbracket$ on $\Aval{f}$ is defined in the way such that each $\llbracket\mathfrak{e}\rrbracket(\nu)$ is the evaluation result of $\mathfrak{e}$ when assigning $\nu(x)$ to each $x\in\pvars{f}$;
under (\S) (of Step~1(b)), logarithm and exponentiation will always be well-defined.

\smallskip\noindent{\bf Semantics of Constraint Triples.} A constraint triple $(\fn{f}, \phi,\mathfrak{e})$ encodes the following logical formula:
$\forall \nu\in\Aval{f}.\left(\nu\models\phi\rightarrow \llbracket\mathfrak{e}\rrbracket(\nu)\ge 0\right)$\enskip.
Multiple constraint triples are grouped into a single logical formula through conjunction.

\smallskip\noindent{\bf Step~3: Establishment of Constraint Triples.}
Based on $\widehat{\llbracket\eta\rrbracket}$ (computed in the previous step), the algorithm generates constraint triples at each significant label, then group all generated constraint triples together in a conjunctive way.
To be more precise, at every significant label $\loc$ of some function name $\fn{f}$, the algorithm generates constraint triples through non-negativity of measure functions and conditions C2--C5;
after generating the constraint triples for each significant label,
the algorithm groups them together in the conjunctive fashion to form a single collection of constraint triples.
For details procedure see Appendix~\ref{app:synalgo}.

\begin{example}\label{ex:new:step1to3}
Consider our running example (cf. Example~\ref{ex:new:running}).
Let the input quadruple be $(1,\log,-,1)$ and invariant (at label 1) be $n\ge 1$ (length of array should be positive).
In Step 1, the algorithm assigns the template $\eta(\mathsf{f},1,n)=c_1\cdot n+c_2\cdot \ln n+c_3$ at label 1 and $\eta(\mathsf{f},4,n)=0$ at label 4.
In Step 2, the algorithm computes template at other labels and obtains that
$\eta(\mathsf{f},2,n)=1+c_1\cdot \left\lfloor{n}/{2}\right\rfloor+c_2\cdot \ln{\left\lfloor{n}/{2}\right\rfloor}+c_3$ and $\eta(\mathsf{f},3,n)=1$.
In Step 3, the algorithm establishes the following three constraint triples $\mathfrak{q}_1,\mathfrak{q}_2,\mathfrak{q}_3$:
\begin{compactitem}
\item $\mathfrak{q}_1:=(\mathsf{f}, n-1\ge 0, c_1\cdot n+c_2\cdot \ln n+c_3)$ from the logical formula $\forall n. (n\ge 1)\rightarrow c_1\cdot n+c_2\cdot \ln n+c_3\ge 0$ for non-negativity of measure functions;
\item $\mathfrak{q}_2:=(\mathsf{f}, n-1\ge 0\wedge 1-n\ge 0, c_1\cdot n+c_2\cdot \ln n+c_3-2)$ and $\mathfrak{q}_3:=(\mathsf{f}, n-2\ge 0, c_1\cdot (n-\left\lfloor{n}/{2}\right\rfloor) +c_2\cdot (\ln n-\ln{\left\lfloor{n}/{2}\right\rfloor})-2$ from resp. logical formulae
\begin{compactitem}
\item $\forall n. (n\ge 1\wedge n\le 1)\rightarrow c_1\cdot n+c_2\cdot \ln n+c_3\ge 2$ and
\item $\forall n. (n\ge 2)\rightarrow c_1\cdot n+c_2\cdot \ln n+c_3\ge c_1\cdot \left\lfloor{n}/{2}\right\rfloor + c_2\cdot \ln{\left\lfloor{n}/{2}\right\rfloor}+c_3+2$
\end{compactitem}
for C4 (at label 1).\qed
\end{compactitem}
\end{example}

\vspace{-1.5em}
\subsection{Step~4 of \synalgo}
\vspace{-0.8em}

\noindent{\bf Step~4: Solving Constraint Triples.}
To check whether the logical formula encoded by the generated constraint triples is valid, the algorithm follows a sound method which abstracts each multiplicand other than scalar variables in the form (\ref{eq:hatform}) as a stand-alone variable, and transforms the validity of the formula into a system of linear inequalities over template variables appearing in $\eta$ through Handelman's Theorem and linear programming.
The main idea is that the algorithm establishes tight linear inequalities for those abstraction variables by investigating properties for the abstracted arithmetic expressions, and use linear programming to solve the formula based on the linear inequalities for abstraction variables.
We note that validity of such logical formulae are generally undecidable since they involve non-polynomial terms such as logarithm~\cite{Goedel1934}.

Below we describe how the algorithm transforms a constraint triple into one without logarithmic or exponentiation term.
%a collection of linear inequalities involving template variables in the template $\eta$ established in Step 1.
Given any finite set $\Gamma$ of polynomials over $n$ variables, we define
$\Sat(\Gamma):=\left\{\vec{x}\in\Rset^n\mid f(\vec{x})\ge 0\mbox{ for all }f\in\Gamma\right\}$\enskip.
In the whole step, we let $(\fn{f}, \phi,\mathfrak{e}^*)$ be any constraint triple such that
$\phi=\bigwedge_{j}\mathfrak{e}_j\ge 0$;
moreover, we maintain a finite set $\Gamma$ of linear (degree-$1$) polynomials over scalar and freshly-added variables.
Intuitively, $\Gamma$ is related to both the set of all $\mathfrak{e}_j$'s (so that $\Sat(\Gamma)$ is somehow the satisfaction set of $\phi$) and the finite subset of polynomials in Theorem~\ref{thm:handelman}.

\smallskip\noindent{\bf Step~4(a): Abstraction of Logarithmic, Exponentiation, and Floored Expressions.}
The first sub-step involves the following computational steps, where Items 2-4 handle variables for abstraction,
and Item~6 is approximation of floored expressions, and other steps are straightforward.

%The algorithm performs approximation of floored expressions early in this step in order to detect unsatisfiable logical formulae as early as possible.
%For example, in Example~\ref{ex:ctriples}, the only two valid constraint triples at label $1$ of $\mathsf{mergesort}$ will remain after the approximation of floored expressions in Item 6, while the other $7$ invalid constraint triples will be detected at Item 7 through linear inequalities obtained at Item 6.

\begin{compactenum}

\item {\em Initialization.} First, the algorithm maintains a finite set of linear (degree-$1$) polynomials $\Gamma$ and sets it initially to the empty set.

\item {\em Logarithmic, Exponentiation and Floored Expressions.} Next, the algorithm computes the following subsets of $\langle\mathit{expr}\rangle$:
\begin{compactitem}
\item $\mathcal{E}_L:=\{ \mathfrak{e}\mid \mbox{ $\ln{\mathfrak{e}}$ appears in $\mathfrak{e}^*$ (as sub-expression)} \}$ upon $\mathrm{op}=\log$.

\item $\mathcal{E}_E:=\{ \mathfrak{e}\mid \mbox{ ${\mathfrak{e}}^r$ appears in $\mathfrak{e}^*$ (as sub-expression)} \}$ upon $\mathrm{op}=\mathrm{exp}$.

\item $\mathcal{E}_F:=\{ \mathfrak{e} \mid \mbox{ ${\mathfrak{e}}$ appears in $\mathfrak{e}^*$ and takes the form $\lfloor\frac{\centerdot}{c}\rfloor$} \}$.
\end{compactitem}
Let $\mathcal{E}:=\mathcal{E}_L \cup \mathcal{E}_E \cup \mathcal{E}_F$.

\item {\em Variables for Logarithmic, Exponentiation and Floored Expressions.}
Next, for each $\mathfrak{e}\in\mathcal{E}$, the algorithm establishes fresh variables as follows:
\begin{compactitem}
\item a fresh variable $u_{\mathfrak{e}}$ which represents $\ln{\mathfrak{e}}$ for $\mathfrak{e}\in\mathcal{E}_L$;
\item two fresh variables $v_{\mathfrak{e}},v'_{\mathfrak{e}}$ such that $v_{\mathfrak{e}}$ indicates ${\mathfrak{e}}^r$ and $v'_{\mathfrak{e}}$ for ${\mathfrak{e}}^{r-1}$
for $\mathfrak{e}\in\mathcal{E}_E$;
\item a fresh variable $w_{\mathfrak{e}}$ indicating $\mathfrak{e}$ for $\mathfrak{e}\in\mathcal{E}_F$.
\end{compactitem}
We note that $v'_\mathfrak{e}$ is introduced in order to have a more accurate approximation for $v_\mathfrak{e}$ later.
After this step, the algorithm sets $N$ to be the number of all variables (i.e., all scalar variables and all fresh variables added up to this point).
In the rest of this section, we consider an implicit linear order over all scalar and freshly-added variables so that a valuation of these variables can be treated as a vector in $\Rset^{N}$.

\item {\em Variable Substitution (from ${\mathfrak{e}}$ to $\widetilde{\mathfrak{e}}$).}
Next, for each $\mathfrak{e}$ which is either $\mathfrak{t}$ or some $\mathfrak{e}_j$ or some expression in $\mathcal{E}$,
the algorithm computes $\tilde{\mathfrak{e}}$ as the expression obtained from $\mathfrak{e}$ by substituting (i) every possible $u_{\mathfrak{e}'}$ for $\ln{\mathfrak{e}'}$, (ii) every possible $v_{\mathfrak{e}'}$ for ${\left(\mathfrak{e}'\right)}^r$ and (iii) every possible $w_{\mathfrak{e}'}$ for $\mathfrak{e}'$ such that
$\mathfrak{e}'$ is a sub-expression of $\mathfrak{e}$ which does not appear as sub-expression in some other sub-expression $\mathfrak{e}''\in\mathcal{E}_F$ of $\mathfrak{e}$.
From now on, any $\mathfrak{e}$ or $\widetilde{\mathfrak{e}}$ or is deemed as a polynomial over scalar and freshly-added variables. Then any $\mathfrak{e}(\vec{x})$ or $\widetilde{\mathfrak{e}}(\vec{x})$ is the
result of polynomial evaluation under the correspondence between variables and coordinates of $\vec{x}$ specified by the linear order.

\item {\em Importing $\phi$ into $\Gamma$.} The algorithm adds all $\widetilde{\mathfrak{e}_j}$ into $\Gamma$.

\item {\em Approximation of Floored Expressions.}
For each $\mathfrak{e}\in \mathcal{E}_F$ such that $\mathfrak{e}=\lfloor\frac{\mathfrak{e}'}{c}\rfloor$,
the algorithm adds linear constraints for $w_{\mathfrak{e}}$ recursively on the nesting depth of floor operation as follows.
\begin{compactitem}
\item {\em Base Step.} If $\mathfrak{e}=\lfloor\frac{\mathfrak{e}'}{c}\rfloor$ and $\mathfrak{e}'$ involves no nested floored expression,
then the algorithm adds into $\Gamma$ either (i)~$\widetilde{\mathfrak{e}'}-c\cdot w_{\mathfrak{e}}\mbox{ and }c\cdot w_{\mathfrak{e}}-\widetilde{\mathfrak{e}'}+c-1$
when $c\ge 1$, which is derived from
$\frac{\mathfrak{e}'}{c}-\frac{c-1}{c}\le \mathfrak{e}\le \frac{\mathfrak{e}'}{c}$\enskip,
or (ii)~
$c\cdot w_{\mathfrak{e}}-\widetilde{\mathfrak{e}'}\mbox{ and } \widetilde{\mathfrak{e}'}-c\cdot w_{\mathfrak{e}}-c-1$
when $c\le -1$, which follows from
$\frac{\mathfrak{e}'}{c}-\frac{c+1}{c}\le \mathfrak{e}\le \frac{\mathfrak{e}'}{c}$\enskip.
Second, given the current $\Gamma$, the algorithm finds the largest constant $t_{\mathfrak{e}'}$ through Farkas' Lemma such that
\[
\forall \vec{x}\in\Rset^N. \left(\vec{x}\in\Sat(\Gamma)\rightarrow \widetilde{\mathfrak{e}'}(\vec{x})\ge t_{\mathfrak{e}'}\right)
\]
holds; if such $t_{\mathfrak{e}'}$ exists, the algorithm adds the constraint
$w_{\mathfrak{e}}\ge \left\lfloor
\frac{t_{\mathfrak{e}'}}{c}\right\rfloor$
into $\Gamma$.
\item {\em Recursive Step.} If $\mathfrak{e}=\lfloor\frac{\mathfrak{e}'}{c}\rfloor$ and $\mathfrak{e}'$ involves some nested floored expression, then
the algorithm proceeds almost in the same way as for the Base Step, except that
$\widetilde{\mathfrak{e}'}$ takes the role of $\mathfrak{e}'$.
(Note that $\widetilde{\mathfrak{e}'}$ does not involve nested floored expresions.)
\end{compactitem}

\item {\em Emptiness Checking.} The algorithm checks whether $\Sat(\Gamma)$ is empty or not in polynomial time in the size of $\Gamma$
(cf.~\cite{DBLP:books/daglib/0090562}).
If $\Sat(\Gamma)=\emptyset$, then the algorithm discards this constraint triple with no linear inequalities generated, and proceeds to other constraint triples; otherwise,
the algorithm proceeds to the remaining steps.

\end{compactenum}

\begin{example}\label{ex:new:step4a}
We continue with Example~\ref{ex:new:step1to3}.
In Step 4(a), the algorithm first establishes fresh variables $u:=\ln n$, $v:=\ln{\left\lfloor {n}/{2}\right\rfloor}$ and $w:=\left\lfloor {n}/{2}\right\rfloor$, then finds that (i) $n-2\cdot w\ge 0$, (ii) $2\cdot w-n+1\ge 0$ and (iii) $n-2\ge 0$ (as $\Gamma$) implies that $w-1\ge 0$.
After Step 4(a), the constraint triples after variable substitution and their $\Gamma$'s are as follows:
\begin{compactitem}
\item $\widetilde{\mathfrak{q}_1}=(\mathsf{f}, n-1\ge 0, c_1\cdot n+c_2\cdot u+c_3)$ and $\Gamma_1=\{n-1\}$;
\item $\widetilde{\mathfrak{q}_2}=(\mathsf{f}, n-1\ge 0\wedge 1-n\ge 0, c_1\cdot n+c_2\cdot u+c_3-2)$ and $\Gamma_2=\{n-1,1-n\}$;
\item $\widetilde{\mathfrak{q}_3}:=(\mathsf{f}, n-2\ge 0, c_1\cdot (n-w) +c_2\cdot (u-v)-2)$ and $\Gamma_3=\{n-2,n-2\cdot w,2\cdot w-n+1,w-1\}$. \qed
\end{compactitem}
\end{example}

For the next sub-step we will use Lagrange's Mean-Value Theorem (LMVT) to approximate logarithmic and exponentiation terms.

%%\smallskip
\begin{theorem}[Lagrange's Mean-Value Theorem{~\cite[Chapter 6]{BasicCalculus}}]\label{thm:lagrange}
Let $f:[a,b]\rightarrow\Rset$~~(for $a<b$) be a function continuous on $[a,b]$ and differentiable on $(a,b)$.
Then there exists a real number $\xi\in (a,b)$ such that $f'(\xi)=\frac{f(b)-f(a)}{b-a}$.
\end{theorem}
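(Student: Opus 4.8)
The plan is to reduce the statement to Rolle's theorem by subtracting off the secant line. First I would introduce the auxiliary function $g:[a,b]\to\Rset$ defined by $g(x) := f(x) - f(a) - \frac{f(b)-f(a)}{b-a}\cdot(x-a)$. Since $f$ is continuous on $[a,b]$ and differentiable on $(a,b)$, and the subtracted affine term is everywhere smooth, $g$ inherits continuity on $[a,b]$ and differentiability on $(a,b)$, with $g'(x) = f'(x) - \frac{f(b)-f(a)}{b-a}$ for $x\in(a,b)$. A one-line computation gives $g(a)=0$ and $g(b) = f(b) - f(a) - \bigl(f(b)-f(a)\bigr) = 0$, so $g(a)=g(b)$.

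Next I would apply Rolle's theorem to $g$: there exists $\xi\in(a,b)$ with $g'(\xi)=0$, which unwinds to $f'(\xi) = \frac{f(b)-f(a)}{b-a}$, the desired conclusion. To keep the argument self-contained I would also sketch Rolle's theorem itself: by the extreme value theorem the continuous function $g$ attains a maximum and a minimum on the compact interval $[a,b]$; if both are attained at the endpoints, then since $g(a)=g(b)$ the function $g$ is constant and any interior point serves as $\xi$; otherwise at least one of the extrema is attained at an interior point $\xi$, and by Fermat's interior-extremum criterion (comparing the left and right one-sided difference quotients at $\xi$, which have opposite weak signs) we get $g'(\xi)=0$.

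The only genuinely nontrivial ingredient here is the extreme value theorem — that a continuous real-valued function on a closed bounded interval attains its supremum and infimum — which is where completeness of $\Rset$ is used; this is the step I would regard as the main obstacle if a fully foundational proof were demanded. Everything else (the algebra defining $g$, and the one-sided difference-quotient argument underlying Fermat's criterion) is routine. Since the result is entirely classical, in the paper it suffices to cite the standard reference~\cite{BasicCalculus} and invoke it as a black box; it is used here solely to obtain the mean-value approximations of logarithmic and exponentiation terms needed in Step~4(b).
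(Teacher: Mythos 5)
Your proof is the standard and correct argument: subtract the secant line to reduce to Rolle's theorem, and prove Rolle's theorem via the extreme value theorem and Fermat's interior-extremum criterion. The paper itself gives no proof of this classical statement — it is cited from the textbook reference and used as a black box in Step~4(b) — so your write-up is strictly more detailed than what the paper provides, and there is nothing in the paper's treatment that it conflicts with.
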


%%\smallskip
\noindent{\bf Step~4(b): Linear Constraints for Abstracted Variables.}
The second sub-step consists of the following computational steps which establish into $\Gamma$ linear constraints for logarithmic or exponentiation terms.
We present the details for logarithm, while similar technical details for exponentiation terms
are in Appendix~\ref{app:synalgo}.
Below we denote by $\mathcal{E}'$ either the set $\mathcal{E}_L$ when $\mathrm{op}=\log$ or $\mathcal{E}_E$ when $\mathrm{op}=\mathrm{exp}$.
Recall the $\widetilde{\mathfrak{e}}$ notation is defined in the Variable Substitution (Item~4) of Step~4(a).

\begin{compactenum}

\item {\em Lower-Bound for Expressions in $\mathcal{E}'$.}
For each $\mathfrak{e}\in\mathcal{E}'$,
we find the largest constant $t_{\mathfrak{e}}\in\Rset$ such that the logical formula
$\forall \vec{x}\in\Rset^N. \left(\vec{x}\in\Sat(\Gamma)\rightarrow \widetilde{\mathfrak{e}}(\vec{x})\ge t_{\mathfrak{e}}\right)$
holds,
This can be solved by Farkas' Lemma and linear programming,
since $\widetilde{\mathfrak{e}}$ is linear.
Note that as long as $\Sat(\Gamma)\ne\emptyset$, it follows from (\S) (in Step~1(b))
that $t_\mathfrak{e}$ is well-defined (since $t_\mathfrak{e}$ cannot be arbitrarily large) and $t_{\mathfrak{e}}\ge 1$.

\item {\em Mutual No-Smaller-Than Inequalities over $\mathcal{E}'$.}
For each pair $(\mathfrak{e},\mathfrak{e}')\in\mathcal{E}'\times\mathcal{E}'$ such that $\mathfrak{e}\ne\mathfrak{e}'$, the algorithm finds real numbers $r_{\left(\mathfrak{e},\mathfrak{e}'\right)},b_{(\mathfrak{e},\mathfrak{e}')}$ through Farkas' Lemma and linear programming such that (i) $r_{(\mathfrak{e},\mathfrak{e}')}\ge 0$ and (ii) both the logical formulae
\[
\forall \vec{x}\in\Rset^N. \left[\vec{x}\in\Sat(\Gamma)\rightarrow \widetilde{\mathfrak{e}}(\vec{x})-\left( r_{\mathfrak{e},\mathfrak{e}'}\cdot \widetilde{\mathfrak{e}'}(\vec{x}) +b_{\mathfrak{e},\mathfrak{e}'}\right)\ge 0\right]
\quad \text{ and }
\]
%%and
\[
\forall \vec{x}\in\Rset^N. \left[\vec{x}\in\Sat(\Gamma) \rightarrow  r_{\mathfrak{e},\mathfrak{e}'}\cdot \widetilde{\mathfrak{e}'}(\vec{x}) +b_{\mathfrak{e},\mathfrak{e}'}\ge 1\right]
\]
hold.
The algorithm first finds the maximal value $r^*_{\mathfrak{e},\mathfrak{e}'}$ over all feasible $(r_{\mathfrak{e},\mathfrak{e}'},b_{\mathfrak{e},\mathfrak{e}'})$'s, then finds the maximal $b^*_{\mathfrak{e},\mathfrak{e}'}$ over all feasible $(r^*_{\mathfrak{e},\mathfrak{e}'},b_{\mathfrak{e},\mathfrak{e}'})$'s.
If such $r^*_{\mathfrak{e},\mathfrak{e}'}$ does not exist, the algorithm simply leaves $r^*_{\mathfrak{e},\mathfrak{e}'}$ undefined.
Note that once $r^*_{\mathfrak{e},\mathfrak{e}'}$ exists and  $\Sat(\Gamma)\ne\emptyset$, then $b^*_{\mathfrak{e},\mathfrak{e}'}$ exists since $b_{\mathfrak{e},\mathfrak{e}'}$ cannot be arbitrarily large once $r^*_{\mathfrak{e},\mathfrak{e}'}$ is fixed.

\item {\em Mutual No-Greater-Than Inequalities over $\mathcal{E}'$.}
For each pair $(\mathfrak{e},\mathfrak{e}')\in\mathcal{E}'\times\mathcal{E}'$ such that $\mathfrak{e}\ne\mathfrak{e}'$, the algorithm finds real numbers $\mathsf{r}_{\left(\mathfrak{e},\mathfrak{e}'\right)},\mathsf{b}_{(\mathfrak{e},\mathfrak{e}')}$ through Farkas' Lemma and linear programming such that (i) $\mathsf{r}_{(\mathfrak{e},\mathfrak{e}')}\ge 0$ and (ii) the logical formula
\[
\forall \vec{x}\in\Rset^N. \left[\vec{x}\in\Sat(\Gamma)\rightarrow \left( \mathsf{r}_{\mathfrak{e},\mathfrak{e}'}\cdot \widetilde{\mathfrak{e}'}(\vec{x}) +\mathsf{b}_{\mathfrak{e},\mathfrak{e}'}\right)-\widetilde{\mathfrak{e}}(\vec{x})\ge 0\right]
\]
holds.
The algorithm first finds the minimal value $\mathsf{r}^*_{\mathfrak{e},\mathfrak{e}'}$ over all feasible $(\mathsf{r}_{\mathfrak{e},\mathfrak{e}'},\mathsf{b}_{\mathfrak{e},\mathfrak{e}'})$'s, then finds the minimal $\mathsf{b}^*_{\mathfrak{e},\mathfrak{e}'}$ over all feasible $(\mathsf{r}^*_{\mathfrak{e},\mathfrak{e}'},\mathsf{b}_{\mathfrak{e},\mathfrak{e}'})$'s.
If such $\mathsf{r}^*_{\mathfrak{e},\mathfrak{e}'}$ does not exists, the algorithm simply leaves $\mathsf{r}^*_{\mathfrak{e},\mathfrak{e}'}$ undefined.
Note that once $\mathsf{r}^*_{\mathfrak{e},\mathfrak{e}'}$ exists and  $\Sat(\Gamma)$ is non-empty, then $\mathsf{b}^*_{\mathfrak{e},\mathfrak{e}'}$ exists since $\mathsf{b}_{\mathfrak{e},\mathfrak{e}'}$ cannot be arbitrarily small once $\mathsf{r}^*_{\mathfrak{e},\mathfrak{e}'}$ is fixed.

%{\em Mutual No-Greater-Than Inequalities over $\mathcal{E}'$.}
%For each pair $(\mathfrak{e},\mathfrak{e}')\in\mathcal{E}'\times\mathcal{E}'$ such that $\mathfrak{e}\ne\mathfrak{e}'$, the algorithm finds real numbers $\mathsf{r}_{\left(\mathfrak{e},\mathfrak{e}'\right)},\mathsf{b}_{(\mathfrak{e},\mathfrak{e}')}$ through Farkas' Lemma and linear programming such that (i) $\mathsf{r}_{(\mathfrak{e},\mathfrak{e}')}\ge 0$ and (ii) the logical formula
%\[
%\forall \vec{x}\in\Rset^N. \left[\vec{x}\in\Sat(\Gamma)\rightarrow \left( \mathsf{r}_{\mathfrak{e},\mathfrak{e}'}\cdot \widetilde{\mathfrak{e}'}(\vec{x}) +\mathsf{b}_{\mathfrak{e},\mathfrak{e}'}\right)-\widetilde{\mathfrak{e}}(\vec{x})\ge 0\right]
%\]
%holds.
%The algorithm then finds the minimal value ($\mathsf{r}^*_{\mathfrak{e},\mathfrak{e}'}$,$\mathsf{b}^*_{\mathfrak{e},\mathfrak{e}'}$) similarly as above.

\item {\em Constraints from Logarithm.} For each variable $u_{\mathfrak{e}}$, the algorithm  adds into $\Gamma$ first the polynomial expression
$\widetilde{\mathfrak{e}}-\left(\mathbf{1}_{t_{\mathfrak{e}}\le e}\cdot e+\mathbf{1}_{t_{\mathfrak{e}}> e}\cdot \frac{t_{\mathfrak{e}}}{\ln t_{\mathfrak{e}}}\right)\cdot u_{\mathfrak{e}}$
from the fact that the function $z\mapsto\frac{z}{\ln{z}}$ ($z\ge 1$) has global minima at $e$ (so that the inclusion of this polynomial expression is sound), and then the polynomial expression $u_{\mathfrak{e}}-\ln{t_{\mathfrak{e}}}$ due to the definition of $t_{\mathfrak{e}}$.

\item {\em Constraints from Exponentiation.} For each variable $v_{\mathfrak{e}}$, the algorithm adds into $\Gamma$ polynomial expressions $v_{\mathfrak{e}}-t_{\mathfrak{e}}^{r-1}\cdot \widetilde{\mathfrak{e}}$ and
$v_{\mathfrak{e}}-t_{\mathfrak{e}}^{r}$ due to the definition of $t_{\mathfrak{e}}$.
And for each variable $v'_{\mathfrak{e}}$, the algorithm adds
(i) $v'_{\mathfrak{e}}-t^{r-1}_{\mathfrak{e}}$ and
(ii) either $v'_{\mathfrak{e}}-t^{r-2}_{\mathfrak{e}}\cdot \tilde{\mathfrak{e}}$
when $r\ge 2$ or
$\widetilde{\mathfrak{e}}-t^{2-r}_{\mathfrak{e}}\cdot v'_{\mathfrak{e}}$
when $1<r<2$.

%{\em Constraints from Exponentiation.} The details are in Appendix~\ref{app:synalgo}.

\item {\em Mutual No-Smaller-Than Inequalities over $u_{\mathfrak{e}}$'s.}
For each pair $(\mathfrak{e},\mathfrak{e}')\in\mathcal{E}'\times\mathcal{E}'$ such that
$\mathfrak{e}\ne\mathfrak{e}'$ and $r^*_{\mathfrak{e},\mathfrak{e}'},b^*_{\mathfrak{e},\mathfrak{e}'}$ are successfully found and $r^*_{\mathfrak{e},\mathfrak{e}'}>0$, the algorithm adds
\[
u_{\mathfrak{e}}-\ln{r^*_{\mathfrak{e},\mathfrak{e}'}}-u_{\mathfrak{e}'}+\mathbf{1}_{b^*_{\mathfrak{e},\mathfrak{e}'}< 0}\cdot {\left(t_{\mathfrak{e}'}+\frac{b^*_{\mathfrak{e},\mathfrak{e}'}}{r^*_{\mathfrak{e},\mathfrak{e}'}}\right)}^{-1}\cdot\left(-\frac{b^*_{\mathfrak{e},\mathfrak{e}'}}{r^*_{\mathfrak{e},\mathfrak{e}'}}\right)
\]
into $\Gamma$.
This is due to the fact that
$\llbracket\mathfrak{e}\rrbracket-\left( r^*_{\mathfrak{e},\mathfrak{e}'}\cdot \llbracket\mathfrak{e}'\rrbracket +b^*_{\mathfrak{e},\mathfrak{e}'}\right)\ge 0$
%($\llbracket\mathfrak{e}\rrbracket$ is the function defined through direct evaluation for variables.)
implies the following:
\begin{eqnarray*}
\ln{\llbracket\mathfrak{e}\rrbracket} &\ge & \ln {r^*_{\mathfrak{e},\mathfrak{e}'}}+\ln \left(\llbracket\mathfrak{e}'\rrbracket+({b^*_{\mathfrak{e},\mathfrak{e}'}}/{r^*_{\mathfrak{e},\mathfrak{e}'}})\right) \\
&= & \ln{r^*_{\mathfrak{e},\mathfrak{e}'}}+\ln{\llbracket\mathfrak{e}'\rrbracket}+\left(\ln{ \left(\llbracket\mathfrak{e}'\rrbracket+({b^*_{\mathfrak{e},\mathfrak{e}'}}/{r^*_{\mathfrak{e},\mathfrak{e}'}})\right)}-\ln{\llbracket\mathfrak{e}'\rrbracket}\right) \\
&\ge & \ln {r^*_{\mathfrak{e},\mathfrak{e}'}}+\ln {\llbracket\mathfrak{e}'\rrbracket}-\mathbf{1}_{b^*_{\mathfrak{e},\mathfrak{e}'}< 0}\cdot {\left(t_{\mathfrak{e}'}+({b^*_{\mathfrak{e},\mathfrak{e}'}}/{r^*_{\mathfrak{e},\mathfrak{e}'}})\right)}^{-1}\cdot\left(-{b^*_{\mathfrak{e},\mathfrak{e}'}}/{r^*_{\mathfrak{e},\mathfrak{e}'}}\right),
\end{eqnarray*}
where the last step is obtained from LMVT (Theorem~\ref{thm:lagrange}) and by distinguishing whether $b^*_{\mathfrak{e},\mathfrak{e}'}\ge 0$ or not, using the fact that the derivative of the natural-logarithm is the reciprocal function.
Note that one has
$t_{\mathfrak{e}'}+\frac{b^*_{\mathfrak{e},\mathfrak{e}'}}{r^*_{\mathfrak{e},\mathfrak{e}'}}\ge 1$
due to the maximal choice of $t_{\mathfrak{e}'}$.

\item {\em Mutual No-Greater-Than Inequalities over $u_{\mathfrak{e}}$'s.}
For each pair $(\mathfrak{e},\mathfrak{e}')\in\mathcal{E}'\times\mathcal{E}'$ such that
$\mathfrak{e}\ne\mathfrak{e}'$ and $\mathsf{r}^*_{\mathfrak{e},\mathfrak{e}'},\mathsf{b}^*_{\mathfrak{e},\mathfrak{e}'}$ are successfully found and $\mathsf{r}^*_{\mathfrak{e},\mathfrak{e}'}>0$, the algorithm adds
\[
u_{\mathfrak{e}'}+\ln{\mathsf{r}^*_{\mathfrak{e},\mathfrak{e}}}-u_{\mathfrak{e}}+\mathbf{1}_{\mathsf{b}^*_{\mathfrak{e},\mathfrak{e}'}\ge 0}\cdot t_{\mathfrak{e}'}^{-1}\cdot\frac{b^*_{\mathfrak{e},\mathfrak{e}'}}{r^*_{\mathfrak{e},\mathfrak{e}'}}
\]
into $\Gamma$.
This is because
$\left( \mathsf{r}^*_{\mathfrak{e},\mathfrak{e}'}\cdot \llbracket\mathfrak{e}'\rrbracket +\mathsf{b}^*_{\mathfrak{e},\mathfrak{e}'}\right)-\llbracket\mathfrak{e}\rrbracket\ge 0$
implies
\begin{eqnarray*}
\ln{\llbracket\mathfrak{e}\rrbracket} &\le & \ln {\mathsf{r}^*_{\mathfrak{e},\mathfrak{e}'}}+\ln \left(\llbracket\mathfrak{e}'\rrbracket+\frac{\mathsf{b}^*_{\mathfrak{e},\mathfrak{e}'}}{\mathsf{r}^*_{\mathfrak{e},\mathfrak{e}'}}\right) \\
&= & \ln{\mathsf{r}^*_{\mathfrak{e},\mathfrak{e}'}}+\ln{\llbracket\mathfrak{e}'\rrbracket}+\left(\ln{ \left(\llbracket\mathfrak{e}'\rrbracket+\frac{\mathsf{b}^*_{\mathfrak{e},\mathfrak{e}'}}{\mathsf{r}^*_{\mathfrak{e},\mathfrak{e}'}}\right)}-\ln{\llbracket\mathfrak{e}'\rrbracket}\right) \\
&\le & \ln {\mathsf{r}^*_{\mathfrak{e},\mathfrak{e}'}}+\ln {\llbracket\mathfrak{e}'\rrbracket}+\mathbf{1}_{\mathsf{b}^*_{\mathfrak{e},\mathfrak{e}'}\ge 0}\cdot t_{\mathfrak{e}'}^{-1}\cdot\frac{\mathsf{b}^*_{\mathfrak{e},\mathfrak{e}'}}{\mathsf{r}^*_{\mathfrak{e},\mathfrak{e}'}},
\end{eqnarray*}
where the last step is obtained from Lagrange's Mean-Value Theorem and by distinguishing whether $\mathsf{b}^*_{\mathfrak{e},\mathfrak{e}'}\ge 0$ or not.
Note that one has
\[
t_{\mathfrak{e}'}+\frac{\mathsf{b}^*_{\mathfrak{e},\mathfrak{e}'}}{\mathsf{r}^*_{\mathfrak{e},\mathfrak{e}'}}\ge 1
\]
due to the maximal choice of $t_{\mathfrak{e}'}$ and the fact that $\widetilde{\mathfrak{e}}$ (as a polynomial function) is everywhere greater than or equal to $1$ under $\Sat(\Gamma)$ (cf. (\S)).
%{\em Mutual No-Greater-Than Inequalities over $u_{\mathfrak{e}}$'s.}
%Similar to the previous item, the algorithm establishes mutual no-greater-than inequalities over $u_{\mathfrak{e}}$'s.

\item {\em Mutual No-Smaller-Than Inequalities over $v_{\mathfrak{e}}$'s.} For each pair of variables of the form $(v_{\mathfrak{e}},v_{\mathfrak{e}'})$ such that $\mathfrak{e}\ne\mathfrak{e}'$, $r^*_{\mathfrak{e},\mathfrak{e}'},b^*_{\mathfrak{e},\mathfrak{e}'}$ are successfully found and $r^*_{\mathfrak{e},\mathfrak{e}'}>0, b^*_{\mathfrak{e},\mathfrak{e}'}\ge 0$,  the algorithm adds
\[
v_{\mathfrak{e}}-{\left(r^*_{\mathfrak{e},\mathfrak{e}'}\right)}^r\cdot\left(v_{\mathfrak{e}'}+r\cdot \frac{b^*_{\mathfrak{e},\mathfrak{e}'}}{r^*_{\mathfrak{e},\mathfrak{e}'}}\cdot v'_{\mathfrak{e}'}\right)
\]
into $\Gamma$.
This is due to the fact that $\llbracket\mathfrak{e}\rrbracket-\left( r^*_{\mathfrak{e},\mathfrak{e}'}\cdot \llbracket\mathfrak{e}'\rrbracket +b^*_{\mathfrak{e},\mathfrak{e}'}\right)\ge 0$
implies
\begin{eqnarray*}
{\llbracket\mathfrak{e}\rrbracket}^r &\ge & \left(r^*_{\mathfrak{e},\mathfrak{e}'}\right)^r\cdot {\left(\llbracket\mathfrak{e}'\rrbracket+\frac{b^*_{\mathfrak{e},\mathfrak{e}'}}{r^*_{\mathfrak{e},\mathfrak{e}'}}\right)}^r \\
&\ge & \left(r^*_{\mathfrak{e},\mathfrak{e}'}\right)^r\cdot\left({\llbracket\mathfrak{e}'\rrbracket}^r +{\left({\llbracket\mathfrak{e}'\rrbracket}+\frac{b^*_{\mathfrak{e},\mathfrak{e}'}}{r^*_{\mathfrak{e},\mathfrak{e}'}}\right)}^r-{\llbracket\mathfrak{e}'\rrbracket}^r\right) \\
&\ge &  {\left(r^*_{\mathfrak{e},\mathfrak{e}'}\right)}^r\cdot\left({\llbracket\mathfrak{e}'\rrbracket}^r +r\cdot{\llbracket\mathfrak{e}'\rrbracket}^{r-1}\cdot \frac{b^*_{\mathfrak{e},\mathfrak{e}'}}{r^*_{\mathfrak{e},\mathfrak{e}'}}\right)\enskip.
\end{eqnarray*}
where the last step is obtained from Lagrange's Mean-Value Theorem.

\item {\em Mutual No-Greater-Than Inequalities over $v_{\mathfrak{e}}$'s.} For each pair of variables of the form $(v_{\mathfrak{e}},v_{\mathfrak{e}'})$ such that $\mathfrak{e}\ne\mathfrak{e}'$, $\mathsf{r}^*_{\mathfrak{e},\mathfrak{e}'},\mathsf{b}^*_{\mathfrak{e},\mathfrak{e}'}$ are successfully found and $\mathsf{r}^*_{\mathfrak{e},\mathfrak{e}'}>0, \mathsf{b}^*_{\mathfrak{e},\mathfrak{e}'}\ge 0$,  the algorithm adds
\begin{align*}
{\left(\mathsf{r}^*_{\mathfrak{e},\mathfrak{e}'}\right)}^r\cdot\left(v_{\mathfrak{e}'}+
\left(\mathbf{1}_{\mathsf{b}^*_{\mathfrak{e},\mathfrak{e}'}\le 0} +\mathbf{1}_{\mathsf{b}^*_{\mathfrak{e},\mathfrak{e}'}> 0}\cdot M^{r-1}\right)\cdot r\cdot \frac{\mathsf{b}^*_{\mathfrak{e},\mathfrak{e}'}}{\mathsf{r}^*_{\mathfrak{e},\mathfrak{e}'}}\cdot v'_{\mathfrak{e}'}\right)-v_{\mathfrak{e}}
\end{align*}
into $\Gamma$, where $M:=\frac{\mathsf{b}^*_{\mathfrak{e},\mathfrak{e}'}}{\mathsf{r}^*_{\mathfrak{e},\mathfrak{e}'}\cdot t_{\mathfrak{e}'}}+1$.
This is due to the fact that $\left(\mathsf{r}^*_{\mathfrak{e},\mathfrak{e}'}\cdot \llbracket\mathfrak{e}'\rrbracket +\mathsf{b}^*_{\mathfrak{e},\mathfrak{e}'}\right)-\llbracket\mathfrak{e}\rrbracket\ge 0$
implies
\begin{eqnarray*}
{\llbracket\mathfrak{e}\rrbracket}^r &\le & \left(\mathsf{r}^*_{\mathfrak{e},\mathfrak{e}'}\right)^r\cdot {\left(\llbracket\mathfrak{e}'\rrbracket+\frac{\mathsf{b}^*_{\mathfrak{e},\mathfrak{e}'}}{\mathsf{r}^*_{\mathfrak{e},\mathfrak{e}'}}\right)}^r \\
&\le & \left(\mathsf{r}^*_{\mathfrak{e},\mathfrak{e}'}\right)^r\cdot\left({\llbracket\mathfrak{e}'\rrbracket}^r +{\left({\llbracket\mathfrak{e}'\rrbracket}+\frac{\mathsf{b}^*_{\mathfrak{e},\mathfrak{e}'}}{\mathsf{r}^*_{\mathfrak{e},\mathfrak{e}'}}\right)}^r-{\llbracket\mathfrak{e}'\rrbracket}^r\right) \\
&\le &  {\left(\mathsf{r}^*_{\mathfrak{e},\mathfrak{e}'}\right)}^r\cdot\bigg({\llbracket\mathfrak{e}'\rrbracket}^r +\left(\mathbf{1}_{\mathsf{b}^*_{\mathfrak{e},\mathfrak{e}'}\le 0} +\mathbf{1}_{\mathsf{b}^*_{\mathfrak{e},\mathfrak{e}'}> 0}\cdot M^{r-1}\right)\cdot r\cdot \frac{\mathsf{b}^*_{\mathfrak{e},\mathfrak{e}'}}{\mathsf{r}^*_{\mathfrak{e},\mathfrak{e}'}}\cdot {\llbracket\mathfrak{e}'\rrbracket}^{r-1} \bigg)
\end{eqnarray*}
where the last step is obtained from Lagrange's Mean-Value Theorem and the fact that $\llbracket\mathfrak{e}'\rrbracket\ge t_{\mathfrak{e}'}$ implies $\llbracket\mathfrak{e}'\rrbracket+\frac{\mathsf{b}^*_{\mathfrak{e},\mathfrak{e}'}}{\mathsf{r}^*_{\mathfrak{e},\mathfrak{e}'}}\le M\cdot \llbracket\mathfrak{e}'\rrbracket$.
\end{compactenum}
Although in Item~4 and Item~6 above, we have logarithmic terms such as
$\ln t_{\mathfrak{e}}$ and $\ln{r^*_{\mathfrak{e},\mathfrak{e}'}}$,
both $t_{\mathfrak{e}}$ and ${r^*_{\mathfrak{e},\mathfrak{e}'}}$ are already
determined constants, hence their approximations can be used.
%(recall Example~\ref{ex:narrationa}).
After Step 4, the constraint triple $(\fn{f}, \phi,\mathfrak{e}^*)$ is transformed into $(\fn{f}, \bigwedge_{h\in\Gamma}h\ge 0,\widetilde{\mathfrak{e}^*})$.

\begin{example}\label{ex:new:step4b}
We continue with Example~\ref{ex:new:step4a}.
In Step 4(b), the algorithm establishes the following non-trivial inequalities:
\begin{compactitem}
\item ({\em From Item 2,3 in Step 4(b) for $\widetilde{\mathfrak{q}_3}$}) $w\ge 0.5\cdot n-0.5, w\le 0.5\cdot n$ and $n\ge 2\cdot w, n\le 2\cdot w+1$;
\item ({\em From Item 4 in Step 4(b) for $\widetilde{\mathfrak{q}_1},\widetilde{\mathfrak{q}_2}$}) $n-e\cdot u\ge 0$ and $u\ge 0$;
\item ({\em From Item 4 in Step 4(b) for $\widetilde{\mathfrak{q}_3}$}) $n-e\cdot u\ge 0, u-\ln{2}\ge 0$ and $w-e\cdot v\ge 0, v\ge 0$;
\item ({\em From Item 6,7 in Step 4(b) for $\widetilde{\mathfrak{q}_3}$}) $u-v-\ln{2}\ge 0$ and $v-u+\ln{2}+\frac{1}{2}\ge 0$.
\end{compactitem}
After Step 4(b), $\Gamma_i$'s ($1\le i\le 3$) are updated as follows:
\begin{compactitem}
\item $\Gamma_1=\{n-1, n-e\cdot u, u\}$ and $\Gamma_2=\{n-1, 1-n, n-e\cdot u, u\}$;
\item $\Gamma_3=\{n-2,n-2\cdot w,2\cdot w-n+1,w-1,n-e\cdot u, u-\ln{2}, w-e\cdot v, v, u-v-\ln{2}, v-u+\ln{2}+\frac{1}{2}\}$.\qed
\end{compactitem}
\end{example}

\begin{remark}\label{remark:step4}
The key difficulty %for the synthesis algorithm
is to handle logarithmic and exponentiation
terms.
In Step~4(a) we abstract such terms with fresh variables and perform sound approximation
of floored expressions.
In Step~4(b) we use Farkas' Lemma and LMVT to soundly transform
%%expressions with
logarithmic or exponentiation terms to polynomials.
\qed
\end{remark}

\begin{remark}
The aim of Step 4(b) is to approximate logarithmic and exponentiation terms by linear inequalities they satisfy.
In the final step, those linear inequalities suffice to solve our problem.
As to extensibility, Step 4(b) may be extended to other non-polynomial terms by constructing similar linear inequalities they satisfy.\qed
\end{remark}

\vspace{-1.5em}
\subsection{Step~5 of \synalgo}
\vspace{-0.8em}

This step is to solve the template variables in the template established in Step 1, based on the sets $\Gamma$ computed in Step 4. For this step, we use Handelman's Theorem.
%which we present below.

%\smallskip
\begin{definition}[Monoid]
Let $\Gamma$ be a finite subset of some polynomial ring $\POLYS$ such that all elements of $\Gamma$ are polynomials of degree $1$.
The \emph{monoid} of $\Gamma$ is defined by:
$\mbox{\sl Monoid}(\Gamma):=\left\{\prod_{i=1}^k h_i \mid k\in\Nset_0\mbox{ and }h_1,\dots,h_k\in\Gamma\right\}$~~.
\end{definition}

%%\smallskip
\begin{theorem}[Handelman's Theorem~\cite{HandelmanTheorem}]
\label{thm:handelman}
Let $\POLYS$ be the polynomial ring with variables $x_1,\dots, x_m$ (for $m\ge 1$).
Let $g\in\POLYS$ and $\Gamma$ be a finite subset of $\POLYS$ such that all elements of $\Gamma$ are polynomials of degree $1$.
If (i) the set $\Sat(\Gamma)$ is compact and non-empty and (ii) $g(\vec{x})>0$ for all $\vec{x}\in \Sat(\Gamma)$, then
\begin{equation}\label{eq:handelman}
\textstyle g=\sum_{i=1}^n c_i\cdot u_i
\end{equation}
for some $n\in\Nset$, non-negative real numbers $c_1,\dots,c_n\ge 0$ and $u_1,\dots,u_n\in\mbox{\sl Monoid}(\Gamma)$.
\end{theorem}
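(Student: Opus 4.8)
The plan is to read Theorem~\ref{thm:handelman} as a \emph{denominator-free Positivstellensatz for polytopes} and to prove it in two stages: an elementary reduction, using only Farkas' Lemma (Theorem~\ref{thm:farkas}), showing that the relevant ``monoid cone'' is \emph{archimedean}, and then an appeal to the abstract representation theorem for archimedean semirings. Concretely, put $M:=\{\sum_{i=1}^n c_i u_i\mid n\in\Nset_0,\ c_1,\dots,c_n\ge 0,\ u_1,\dots,u_n\in\mbox{\sl Monoid}(\Gamma)\}$. Since the empty product equals $1$ and $\mbox{\sl Monoid}(\Gamma)$ is closed under multiplication, $M$ is a sub-semiring of $\POLYS$ that contains $\Rset_{\ge 0}$ and all of $\Gamma$, and one checks at once that $\Sat(\Gamma)=\{\vec x\mid q(\vec x)\ge 0\text{ for all }q\in M\}$ (``$\subseteq$'' because sums and products of nonnegative reals are nonnegative, ``$\supseteq$'' because $\Gamma\subseteq M$). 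Hence the theorem is precisely the assertion: if $g(\vec x)>0$ on the compact, nonempty set $\Sat(\Gamma)$, then $g\in M$.

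\emph{Stage 1 (archimedean-ness of $M$).} The claim is that for every $q\in\POLYS$ there is $N\in\Nset$ with $N-q\in M$. First take $q=\pm x_j$: compactness of $\Sat(\Gamma)$ gives $N\in\Nset$ with $N-x_j\ge 0$ and $N+x_j\ge 0$ on $\Sat(\Gamma)$. The function $N-x_j$ is affine and nonnegative on the nonempty polyhedron $\Sat(\Gamma)=\{\vec x\mid h(\vec x)\ge 0,\ h\in\Gamma\}$, and every $h\in\Gamma$ has degree $1$; so Farkas' Lemma yields $\lambda_0\ge 0$ and $\lambda_h\ge 0$ ($h\in\Gamma$) with $N-x_j=\lambda_0+\sum_{h\in\Gamma}\lambda_h\,h$, an identity in which every summand lies in $M$, whence $N-x_j\in M$, and symmetrically $N+x_j\in M$. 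The general case follows by a routine bootstrap: the set $B:=\{q\in\POLYS\mid\exists N\in\Nset:\ N-q\in M\text{ and }N+q\in M\}$ contains $\Rset$ and every $x_j$, and is closed under addition and multiplication (for products use $N_1N_2-q_1q_2=\tfrac{1}{2}\big[(N_1-q_1)(N_2+q_2)+(N_1+q_1)(N_2-q_2)\big]$, and likewise with the signs reversed), so $B=\POLYS$.

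\emph{Stage 2 (representation).} With $M$ archimedean and $g>0$ on the compact, nonempty set $K:=\Sat(\Gamma)=\Sat(M)$, I would invoke the abstract archimedean Positivstellensatz (Kadison--Dubois; Krivine): if $g\notin M$, separate $g$ from the cone $M$ by a linear functional $L\colon\POLYS\to\Rset$ with $L\ge 0$ on $M$, $L(1)=1$ and $L(g)\le 0$; archimedean-ness forces $L$ to be bounded ($|L(q)|\le N$ whenever $N\pm q\in M$), and the standard argument then identifies $L$ with evaluation at a point $\xi\in K$, giving the contradiction $0\ge L(g)=g(\xi)>0$. Equivalently, and more concretely, this stage can instead be run through P\'olya's theorem on eventually-nonnegative coefficients, after homogenizing $g$ with a fresh variable and passing to the cone over the polytope (which is pointed exactly because $\Sat(\Gamma)$ is bounded). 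This second stage is the heart of the proof and the step I expect to be the main obstacle: $M$ is not closed, so the separation must be carried out in finite-dimensional, degree-bounded slices, where the archimedean property rules out a separating ``functional at infinity'' and lets one normalize $L(1)=1$ (this is also where the input parameter $k$ enters, bounding the degree $|\alpha|\le k$ of the admissible products $u_i$); and one must ensure the representation is genuinely denominator-free --- an identity $g=\sum_i c_i u_i$ rather than one valid only after multiplying by some element of $M$ --- which is precisely the place where boundedness of $\Sat(\Gamma)$ (a polytope, not a general polyhedron) is essential.
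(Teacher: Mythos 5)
The paper does not prove this statement at all: Handelman's Theorem is imported as a black box via the citation, so there is no in-paper argument to compare yours against. Judged on its own terms, your two-stage reduction is the standard modern proof of Handelman's result and is essentially sound. Stage~1 is complete and correct: the affine form of Farkas' Lemma (exactly Theorem~\ref{thm:farkas}, whose hypothesis is met because $\Sat(\Gamma)$ is nonempty) expresses $N\pm x_j$ as $\lambda_0+\sum_{h\in\Gamma}\lambda_h\,h$ with nonnegative coefficients, and your polarization identities show that the set $B$ of ``bounded'' elements is a subring of $\POLYS$ containing $\Rset$ and all $x_j$, so the preprime $M$ is archimedean. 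Stage~2 then correctly reduces the theorem to the Kadison--Dubois/Krivine representation theorem for archimedean preprimes; invoking that result is methodologically no worse than the paper invoking Handelman, but you should be clear that your sketch of its proof (separating $g$ from the non-closed cone $M$ and identifying the normalized functional $L$ with evaluation at a point of $K$) is where all the real work hides, and as written it is a plan rather than an argument. Two smaller corrections: the parameter $k$ from the \recterm{} input plays no role in the theorem or in any proof of it --- it is purely an algorithmic truncation used in Step~5 to bound the number of multiplicands $u_i$ the implementation enumerates, whereas the theorem itself places no degree bound on the $u_i$ --- so it should not appear in your Stage~2; and the P\'olya-theorem alternative you mention works cleanly only over a simplex, so for a general polytope it requires the Powers--Reznick style refinement rather than a bare homogenization.
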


Basically, Handelman's Theorem gives a characterization of positive polynomials over  polytopes.
In this paper, we concentrate on Eq.~(\ref{eq:handelman}) which provides a sound form for a non-negative polynomial over a general (i.e. possibly unbounded) polyhedron.
%In Proposition~\ref{prop:handelman} (in Appendix~\ref{app:synalgo}), we show that Eq.~(\ref{eq:handelman}) encompasses a simple proof system for non-negative polynomials over polyhedra.
The following proposition shows that Eq.~(\ref{eq:handelman}) encompasses a simple proof system for non-negative polynomials over polyhedra.

%\smallskip
\begin{proposition}\label{prop:handelman}
Let $\Gamma$ be a finite subset of some polynomial ring $\POLYS$ such that all elements of $\Gamma$ are polynomials of degree $1$.
Let the collection of deduction systems $\{\vdash_{k}\}_{k\in\Nset}$ be generate by the following rules:
\begin{eqnarray*}
\cfrac{h\in\Gamma}{\vdash_1 h\ge 0}\qquad\cfrac{c\in\Rset, c\ge 0}{\vdash_1 c\ge 0}\qquad\cfrac{\vdash_k h\ge 0, c\in\Rset, c\ge 0}{\vdash_k c\cdot h\ge 0} \\
\cfrac{\vdash_k h_1\ge 0, \vdash_k h_2\ge 0}{\vdash_k h_1+h_2\ge 0}\qquad  \cfrac{\vdash_{k_1} h_1\ge 0, \vdash_{k_2} h_2\ge 0}{\vdash_{k_1+k_2} h_1\cdot h_2\ge 0}\enskip.
\end{eqnarray*}
Then for all $k\in\Nset$ and polynomials $g\in\POLYS$, if $\vdash_k g\ge 0$ then
$g=\sum_{i=1}^n c_i\cdot u_i$ for some $n\in\Nset$, non-negative real numbers $c_1,\dots,c_n\ge 0$ and $u_1,\dots,u_n\in\mbox{\sl Monoid}(\Gamma)$ such that every $u_i$ is a product of no more than $k$ polynomials in $\Gamma$.
\end{proposition}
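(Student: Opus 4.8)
The plan is to prove Proposition~\ref{prop:handelman} by induction on the structure of the derivation $\vdash_k g\ge 0$, i.e.\ by induction on the length of the derivation tree, strengthening the statement slightly so that the index $k$ tracks exactly the maximum number of factors from $\Gamma$ that appear in any single monomial $u_i$. Concretely I would prove: whenever $\vdash_k g\ge 0$ is derivable, then $g=\sum_{i=1}^n c_i\cdot u_i$ for some $n\in\Nset$, reals $c_i\ge 0$, and $u_i\in\mbox{\sl Monoid}(\Gamma)$ each of which is a product of \emph{at most} $k$ elements of $\Gamma$ (counted with multiplicity, with the empty product $1$ allowed so that constants are covered by $k$ arbitrary). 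This is exactly the conclusion asked for, so no real strengthening is needed beyond making the bookkeeping on $k$ explicit.

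The base cases correspond to the two axiom rules. For $\cfrac{h\in\Gamma}{\vdash_1 h\ge 0}$, we take $n=1$, $c_1=1$, $u_1=h$, which is a product of exactly one element of $\Gamma$, so the bound $k=1$ holds. For $\cfrac{c\in\Rset,\,c\ge 0}{\vdash_1 c\ge 0}$, we take $n=1$, $c_1=c\ge 0$, $u_1=1$ (the empty product, which lies in $\mbox{\sl Monoid}(\Gamma)$ since $k=0$ is allowed there), and again it is a product of $0\le 1$ elements of $\Gamma$. Then I would handle the three inductive rules in turn. For scaling, $\cfrac{\vdash_k h\ge 0,\,c\ge 0}{\vdash_k c\cdot h\ge 0}$: by the induction hypothesis $h=\sum_i c_i u_i$ with each $u_i$ a product of $\le k$ elements of $\Gamma$ and $c_i\ge 0$, so $c\cdot h=\sum_i (c\cdot c_i)\,u_i$ with $c\cdot c_i\ge 0$ and the same $u_i$, and the bound $k$ is preserved. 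For addition, $\cfrac{\vdash_k h_1\ge 0,\,\vdash_k h_2\ge 0}{\vdash_k h_1+h_2\ge 0}$: apply the induction hypothesis to each summand, obtaining representations $h_1=\sum_{i} c_i u_i$ and $h_2=\sum_{j} c'_j u'_j$ all with factor-count $\le k$ and nonnegative coefficients; concatenating the two lists gives a valid representation of $h_1+h_2$ still respecting the bound $k$.

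The only rule requiring a genuine (but still routine) argument is the product rule $\cfrac{\vdash_{k_1} h_1\ge 0,\,\vdash_{k_2} h_2\ge 0}{\vdash_{k_1+k_2} h_1\cdot h_2\ge 0}$. By the induction hypotheses, $h_1=\sum_{i=1}^{n_1} c_i\, u_i$ with each $u_i$ a product of at most $k_1$ elements of $\Gamma$ and $c_i\ge 0$, and $h_2=\sum_{j=1}^{n_2} c'_j\, u'_j$ with each $u'_j$ a product of at most $k_2$ elements of $\Gamma$ and $c'_j\ge 0$. Multiplying out, $h_1\cdot h_2=\sum_{i=1}^{n_1}\sum_{j=1}^{n_2}(c_i c'_j)\,(u_i u'_j)$. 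Here $c_i c'_j\ge 0$, and $u_i u'_j$ is a product of at most $k_1+k_2$ elements of $\Gamma$, hence lies in $\mbox{\sl Monoid}(\Gamma)$; this is exactly the claimed representation with index $k_1+k_2$. Taking $n=n_1 n_2$ and reindexing the double sum as a single sum completes this case. I expect this product case to be the ``main obstacle'' only in the very mild sense that one must be careful that $\mbox{\sl Monoid}(\Gamma)$ is closed under products and that factor-counts add — both immediate from the definition of the monoid as the set of all finite products of elements of $\Gamma$. Since every derivation is built from these five rules, induction on derivation length establishes the claim, and in particular every $u_i$ appearing in the final representation is a product of no more than $k$ polynomials in $\Gamma$, which is precisely the statement of Proposition~\ref{prop:handelman}. (Note the proposition does not assert the converse, i.e.\ completeness of the deduction system relative to Handelman's Theorem, so no appeal to Theorem~\ref{thm:handelman} or to compactness is needed here; this direction is purely syntactic.)
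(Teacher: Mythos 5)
Your proof is correct and follows essentially the same route as the paper, which disposes of this proposition with ``an easy induction on $k$''; your version simply makes explicit the structural induction over the five derivation rules, with the product rule being the only case where the factor-count bookkeeping (indices adding as $k_1+k_2$) matters. No gaps.
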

\begin{proof}
By an easy induction on $k$.\qed
\end{proof}

\smallskip\noindent{\bf Step~5: Solving Unknown Coefficients in the Template.} Now we use the input parameter $k$ as the maximal number of multiplicands in each summand at the right-hand-side of Eq.~(\ref{eq:handelman}).
For any constraint triple $(\fn{f},\phi,\mathfrak{e}^*)$ which is generated in Step 3 and passes the emptiness checking in Item 7 of Step 4(a), the algorithm performs the following steps.
\begin{compactenum}
\item {\em Preparation for Eq.~(\ref{eq:handelman}).} The algorithm reads the set $\Gamma$ for $(\fn{f},\phi,\mathfrak{e}^*)$ computed in Step 4, and computes $\widetilde{\mathfrak{e}^*}$ from Item 4 of Step 4(a).
\item {\em Application of Handelman's Theorem.} First, the algorithm establishes a fresh coefficient variable $\lambda_h$ for each polynomial $h$ in $\mbox{Monoid}(\Gamma)$ with no more than $k$ multiplicands from $\Gamma$.
    Then, the algorithm establishes linear equalities over coefficient variables $\lambda_h$'s and template variables in the template $\eta$ established in Step 1 by
equating coefficients of the same monomials at the left- and right-hand-side of the following polynomial equality
$\widetilde{\mathfrak{\mathfrak{e}^*}}=\sum_{h} \lambda_h\cdot h$\enskip.
Second, the algorithm incorporates all constraints of the form $\lambda_h\ge 0$.
\end{compactenum}
Then the algorithm collects all linear equalities and inequalities established in Item 2 above conjunctively as a single system of linear inequalities and solves it through linear-programming algorithms; if no feasible solution exists, the algorithm fails without output, otherwise the algorithm outputs the function $\widehat{\llbracket\eta\rrbracket}$ where all template variables in the template $\eta$ are resolved
by their values in the solution.
We now state the soundness of our approach for synthesis of measure functions (proof in Appendix~\ref{app:synalgo}).

%%\smallskip
\begin{theorem}\label{thm:mainthmsynth}
Our algorithm, \synalgo, is a sound
approach for the \recterm\ problem, i.e., if \synalgo\ succeeds to
synthesize a function $g$ on $\left\{(\fn{f},\loc,\nu)\mid \fn{f}\in\fnames, \loc\in\slocs{f},  \nu\in\Aval{f}\right\}$,
then $\widehat{g}$ is a measure function and hence an upper bound on
the termination-time function.
\end{theorem}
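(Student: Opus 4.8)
By Theorem~\ref{thm:mfunc}(1) it suffices to show that the function returned by \synalgo{} --- namely $\widehat{\llbracket\eta\rrbracket}$ with the synthesized values substituted for the template variables, which is exactly $\widehat{g}$ for $g=\llbracket\eta\rrbracket$ restricted to significant labels --- is a measure function w.r.t.\ $\fn{f}^*,\phi^*$, i.e.\ satisfies \textbf{C1}--\textbf{C5}. The first move is to reduce this to the validity of the Step~3 constraint triples. By the expansion construction of Section~\ref{sect:mfunc}, $\widehat{g}$ is built from $g$ by iterating the \emph{equality} forms of \textbf{C1}--\textbf{C5}; hence $\widehat{g}$ satisfies \textbf{C1} and satisfies \textbf{C2}--\textbf{C5} at every non-significant label by construction. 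So only \textbf{C2}--\textbf{C5} at significant labels $\loc\in\slocs{f}$ remain, and there they need hold only for $\nu\in D_{\fn{f},\loc}$, which is contained in the satisfaction set of $I(\fn{f},\loc)$. Writing each $\widehat{\llbracket\eta\rrbracket}(\fn{f},\loc,\cdot)$ in the normal form~(\ref{eq:hatform}), substituting into \textbf{C2}--\textbf{C5}, and distributing over the disjuncts using that each $\bigvee_j\phi_{ij}$ is a tautology with pairwise-unsatisfiable $\phi_{ij}$'s, one checks --- routine but somewhat tedious bookkeeping --- that the resulting universally quantified inequalities are precisely the conjunction of the logical formulas encoded by the Step~3 constraint triples. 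Thus the theorem reduces to: \emph{if the linear program assembled in Step~5 is feasible, then every constraint triple $(\fn{f},\phi,\mathfrak{e}^*)$ produced in Step~3 is valid}, i.e.\ $\llbracket\mathfrak{e}^*\rrbracket(\nu)\ge 0$ whenever $\nu\in\Aval{f}$ and $\nu\models\phi$.

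Second, I would fix such a triple, with $\phi=\bigwedge_j\mathfrak{e}_j\ge 0$, and a valuation $\nu\models\phi$, and exhibit a corresponding point of $\Sat(\Gamma)$. Define $\vec{x}_\nu\in\Rset^N$ by assigning $\nu(x)$ to each scalar $x$, $\ln(\mathfrak{e}(\nu))$ to $u_{\mathfrak{e}}$, $(\mathfrak{e}(\nu))^r$ to $v_{\mathfrak{e}}$, $(\mathfrak{e}(\nu))^{r-1}$ to $v'_{\mathfrak{e}}$, and the integer $\mathfrak{e}(\nu)$ to $w_{\mathfrak{e}}$; these are well-defined thanks to restriction~($\S$) of Step~1(b). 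The core claim is $\vec{x}_\nu\in\Sat(\Gamma)$, which I would verify by going item-by-item through Steps~4(a)--4(b): the linearized $\widetilde{\mathfrak{e}_j}$ evaluate to $\mathfrak{e}_j(\nu)\ge 0$; the Base/Recursive floor constraints hold by $\frac{\mathfrak{e}'}{c}-\frac{c-1}{c}\le\lfloor\frac{\mathfrak{e}'}{c}\rfloor\le\frac{\mathfrak{e}'}{c}$ (and its $c\le-1$ analogue) and because the Farkas-derived $t_{\mathfrak{e}'}$ is a genuine lower bound on $\widetilde{\mathfrak{e}'}$, so $w_{\mathfrak{e}}\ge\lfloor t_{\mathfrak{e}'}/c\rfloor$; and the logarithm/exponentiation constraints hold by exactly the facts displayed in Step~4(b) --- the global minimum of $z\mapsto z/\ln z$ on $(1,\infty)$, the definitions of $t_{\mathfrak{e}}$ (giving $u_{\mathfrak{e}}\ge\ln t_{\mathfrak{e}}$, $v_{\mathfrak{e}}\ge t_{\mathfrak{e}}^{r}$, $v_{\mathfrak{e}}\ge t_{\mathfrak{e}}^{r-1}\widetilde{\mathfrak{e}}$, etc.), and the mutual bounds obtained from Lagrange's Mean-Value Theorem (Theorem~\ref{thm:lagrange}) with the sign case analyses on $b^*_{\mathfrak{e},\mathfrak{e}'}$ and $\mathsf{b}^*_{\mathfrak{e},\mathfrak{e}'}$. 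Along the way I would record, as the text already notes, that whenever $\Sat(\Gamma)\ne\emptyset$ the extremal constants $t_{\mathfrak{e}},r^*_{\mathfrak{e},\mathfrak{e}'},b^*_{\mathfrak{e},\mathfrak{e}'},\mathsf{r}^*_{\mathfrak{e},\mathfrak{e}'},\mathsf{b}^*_{\mathfrak{e},\mathfrak{e}'}$ are finite (Farkas' Lemma, Theorem~\ref{thm:farkas}, together with ($\S$)), so every added polynomial is well-defined. If instead $\Sat(\Gamma)=\emptyset$, the triple was discarded in Item~7 of Step~4(a); but then the intermediate $\Gamma$ of Items~1--6 is already infeasible, and since the natural extension of any $\nu\models\phi$ would lie in it, $\phi$ is unsatisfiable and the triple is vacuously valid.

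Third, feasibility of the Step~5 linear program yields non-negative reals $\lambda_h$ (one per $h\in\mbox{\sl Monoid}(\Gamma)$ with at most $k$ factors from $\Gamma$) with $\widetilde{\mathfrak{e}^*}=\sum_h\lambda_h\cdot h$ as an identity of polynomials, obtained by equating monomial coefficients; this is the sound direction of Handelman's Theorem (Theorem~\ref{thm:handelman}, cf.\ Proposition~\ref{prop:handelman}) and needs no compactness of $\Sat(\Gamma)$. Evaluating the identity at $\vec{x}_\nu\in\Sat(\Gamma)$ and using $h(\vec{x}_\nu)\ge 0$, $\lambda_h\ge 0$ gives $\widetilde{\mathfrak{e}^*}(\vec{x}_\nu)\ge 0$. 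Since $\mathfrak{e}^*\mapsto\widetilde{\mathfrak{e}^*}$ merely replaces each relevant sub-expression by a fresh variable whose value at $\vec{x}_\nu$ is exactly that sub-expression's value at $\nu$, an easy induction gives $\widetilde{\mathfrak{e}^*}(\vec{x}_\nu)=\mathfrak{e}^*(\nu)$; and $\nu\models\phi$ entails $\nu\models I(\fn{f},\loc)$ (the triples are generated so that $\phi$ carries the relevant part of the invariant), so $\llbracket\mathfrak{e}^*\rrbracket(\nu)=\mathfrak{e}^*(\nu)\ge 0$. Hence the triple is valid; by the first part $\widehat{g}$ then satisfies \textbf{C1}--\textbf{C5}, so it is a measure function, and Theorem~\ref{thm:mfunc}(1) yields $\overline{T}(\fn{f}^*,\lin{\fn{f}^*},\nu)\le\widehat{g}(\fn{f}^*,\lin{\fn{f}^*},\nu)$.

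The main obstacle is the second part: proving $\vec{x}_\nu\in\Sat(\Gamma)$, i.e.\ that \emph{every} inequality deposited in $\Gamma$ in Step~4 is a correct statement about the intended values $\ln(\mathfrak{e}(\nu))$, $(\mathfrak{e}(\nu))^r$, $\lfloor\cdot\rfloor$, etc. The delicate cases are the LMVT-based ``mutual no-smaller-than / no-greater-than'' inequalities over the $u_{\mathfrak{e}}$'s and the $v_{\mathfrak{e}}$'s, where one must justify the displayed chains of inequalities, treat separately the signs of $b^*_{\mathfrak{e},\mathfrak{e}'}$ and $\mathsf{b}^*_{\mathfrak{e},\mathfrak{e}'}$, and check the auxiliary facts $t_{\mathfrak{e}'}+b^*_{\mathfrak{e},\mathfrak{e}'}/r^*_{\mathfrak{e},\mathfrak{e}'}\ge 1$ and $\llbracket\mathfrak{e}'\rrbracket+\mathsf{b}^*_{\mathfrak{e},\mathfrak{e}'}/\mathsf{r}^*_{\mathfrak{e},\mathfrak{e}'}\le M\cdot\llbracket\mathfrak{e}'\rrbracket$ that make the chosen coefficients legitimate. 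A secondary, purely bookkeeping obstacle is the reduction in the first part --- verifying that inserting the max-normal-form~(\ref{eq:hatform}) into \textbf{C2}--\textbf{C5} and distributing over the disjuncts $\phi_{ij}$ reproduces exactly the Step~3 constraint triples.
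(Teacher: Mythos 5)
Your plan is correct and follows essentially the same route as the paper, whose own proof is just a three-fact sketch: (i) validity of the Step~3 constraint triples implies the hypotheses of Proposition~\ref{prop:mfunc}, (ii) the Step~4 abstraction is a sound over-approximation, and (iii) Handelman's Theorem gives a sound certificate of non-negativity over the polyhedron. Your three parts elaborate exactly these three facts (including the correct observations that only the sound direction of Handelman is needed, hence no compactness, and that discarded triples with $\Sat(\Gamma)=\emptyset$ are vacuously valid), so there is nothing methodologically different to compare.
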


\begin{remark}\label{remark:step5}
%The key difficulty for Step~5 is as follows.
While Step~4 transforms logarithmic and exponentiation terms to polynomials,
we need in Step~5 a sound method to solve polynomials with linear programming.
We achieve this with Handelman's Theorem.
\qed
\end{remark}

\begin{example}\label{ex:new:step5}
Continue with Example~\ref{ex:new:step4b}.
In the final step (Step 5), the unknown coefficients $c_i$'s ($1\le i\le 3$) are to be resolved through (\ref{eq:handelman}) so that logical formulae encoded by $\widetilde{\mathfrak{q}_i}$'s are valid (w.r.t updated $\Gamma_i$'s).
Since to present the whole technical detail would be too cumbersome, we present directly a feasible solution for $c_i$'s and how they fulfill (\ref{eq:handelman}).
Below we choose the solution that $c_1=0$, $c_2=\frac{2}{\ln{2}}$ and $c_3=2$.
Then we have that
\begin{compactitem}
\item (From $\widetilde{\mathfrak{q}_1}$) $c_2\cdot u+c_3=\lambda_1\cdot u+ \lambda_2$ where $\lambda_1:=\frac{2}{\ln{2}}$ and $\lambda_2:=2$;
\item (From $\widetilde{\mathfrak{q}_2}$) $c_2\cdot u+c_3-2=\lambda_1\cdot u$;
\item (From $\widetilde{\mathfrak{q}_3}$) $c_2\cdot (u-v)-2=\lambda_1\cdot (u-v-\ln{2})$.
\end{compactitem}
Hence by Theorem~\ref{thm:mfunc}, $\overline{T}(\mathsf{f}, 1, n)\le \eta(\mathsf{f},1,n)=\frac{2}{\ln{2}}\cdot \ln{n}+2$.
It follows that {\sc Binary-Search} runs in $\mathcal{O}(\log{n})$ in worst-case. \qed
\end{example}

%%\smallskip
\begin{remark}\label{rem:novelty}
We remark two aspects of our algorithm.
\begin{compactenum}
\item {\em Scalability.}
Our algorithm only requires solving linear inequalities.
Since linear-programming solvers have been widely studied and experimented,
the scalability of our approach directly depends on the linear-programming
solvers.
Hence the approach we present is a relatively scalable one.

\item {\em Novelty.} A key novelty of our approach is to obtain %non-trivial
non-polynomial bounds (such as $\mathcal{O}(n \log n)$ , $\mathcal{O}(n^r)$, where
$r$ is not integral) through linear programming.
The novel technical steps are: (a)~use of abstraction variables;
(b)~use of LMVT and Farkas' lemma to obtain sound
linear constraints over abstracted variables; and
(c)~use of Handelman's Theorem to solve the unknown coefficients in
polynomial time.
\qed
\end{compactenum}
\end{remark}

%%the program

\vspace{-2em}
\section{Experimental Results} \label{sec:experimental-results}
\vspace{-1em}
For worst-case upper bounds of non-trivial form, we consider four classical examples.
%%from the literature.

\smallskip\noindent{\em Merge-Sort and Closest-Pair.}
We consider the classical Merge-Sort problem~\cite[Chapter~2]{DBLP:books/daglib/0023376} and the closest pair
problem consider a set of $n$ two-dimensional points and asks for the pair of points that have shortest
Euclidean distance between them (cf.~\cite[Chapter~33]{DBLP:books/daglib/0023376})
(see Appendix~\ref{app:experiments} for the pseudo-code).
For both problems we obtain an $\mathcal{O}(n \log n)$ bound.

\smallskip\noindent{\em Strassen's Algorithm.}
We consider one of the classic sub-cubic algorithm for Matrix multiplication.
The Strassen algorithm (cf.~\cite[Chapter 4]{DBLP:books/daglib/0023376}) has a worst-case running time of $n^{\log_27}$.
We present the pseudo-code of Strassen's algorithm in our programming language
in Appendix~\ref{app:experiments}.
Using a template with $n^{2.9}$, our algorithm synthesizes a measure function.

\smallskip\noindent{\em Karatsuba's Algorithm.}
We consider two polynomials
$p_1=a_0 + a_1 x + a_2 x^2 + \ldots + a_n x^{n-1}$
and $p_2=b_0 + b_1 x + b_2 x^2 + \ldots + b_n x^{n-1}$,
where the coefficients $a_i$'s and $b_i$'s are represented as arrays.
The computational problem asks to compute the coefficients of the polynomial
obtained by multiplication of $p_1$ and $p_2$, and considers that $n$ is a power of~2.
While the most naive algorithm is quadratic, Karatsuba's algorithm (cf.~\cite{KnuthAllBooks})
is a classical sub-quadratic algorithm for the problem with running
time $n^{\log_23}$.
We present the pseudo-code Karatsuba's algorithm in our programming language in Appendix~\ref{app:experiments}.
Using a template with $n^{1.6}$, our algorithm synthesizes a measure function
(basically, using constraints as illustrated in Example~\ref{ex:narrationb}).

The above four examples show that our sound approach can synthesize
non-trivial worst-case complexity bounds for several classical algorithms.

\noindent{\em Invariants.}
In the experiments we derive simple invariants from the programs directly from the prerequisites of procedures and guards of while-loops. Alternatively, they can be derived automatically using~\cite{DBLP:conf/cav/ColonSS03}.

\smallskip\noindent{\em Results.}
Below we present experimental results on the examples explained above.
We implement our algorithm that basically generates a set of linear constraints,
where we use lp{\_}solve~\cite{lpsolve2016} for solving linear programs.
Our experimental results are presented in Table~\ref{tbl:experimentalresults}, where all numbers are rounded to $10^{-2}$ and $n$ represents
the input length.
%%$j-i+1$ for Merge-Sort and Closest-Pair.
%Moreover, since Merge-Sort is essentially similar to Closest-Pair we only
%report the results for Merge-Sort.
%For detailed description of approximation constants (as mentioned in Example~\ref{ex:narrationa})
%see Appendix~\ref{app:experiments}.
All results were obtained
%quite efficiently (within few minutes), and were obtained
on an Intel i3-4130 CPU 3.4 GHz 8 GB of RAM.
%% over Windows 7 with JRE 8.
%%on a 2.5GHz AMD processor over Debian 3.2.78 OS.
%Note that our main contribution is algorithmic, and our implementation can be made more
%efficient with optimizations.
%Thus we report the final outcome of our result.

%%We use the interval $[2.7182,2.7183]$ (resp. $[0.6931, 0.6932]$) for tight approximation of $e$ (resp. $\ln{2}$).
\vspace{-1em}
\begin{table}
\caption{Experimental results where $\eta(\loc_0)$ is the part of measure function at the initial label.}
\label{tbl:experimentalresults}
\centering
\begin{tabular}{|c|c|c|}
\hline
Example &  Time (in Seconds) & $\eta(\loc_0)$  \\
\hline
Merge-Sort & {6} & $25.02 \cdot n\cdot \ln{n} + 21.68\cdot n -20.68$    \\
\hline
Closest-Pair & {11} &  $128.85\cdot n \cdot\ln{n} + 108.95\cdot n - 53.31$\\
\hline
Karatsuba   & {3} & $2261.55\cdot n^{1.6} + 1$   \\
\hline
Strassen    & {7} & $954.2\cdot n^{2.9} + 1$  \\
\hline
\end{tabular}
%\caption{Experimental results where $\eta(\loc_0)$ is the part of measure function at the initial label.}
%\label{tbl:experimentalresults}
\end{table}
\vspace{-1em}

%\vspace{-2em}
\section{Related Work}
\vspace{-1em}
In this section we discuss the related work.
The termination of recursive programs or other temporal properties has already been extensively
studied~\cite{DBLP:conf/pldi/CookPR06,DBLP:conf/esop/KuwaharaTU014,DBLP:conf/tacas/CookSZ13,DBLP:conf/sas/Urban13,DBLP:conf/popl/CousotC12,DBLP:conf/popl/LeeJB01,DBLP:journals/toplas/Lee09,DBLP:journals/fmsd/CookPR09,DBLP:conf/vmcai/AlurC10}.
Our work is most closely related to automatic amortized analysis~\cite{DBLP:conf/aplas/HoffmannH10,DBLP:conf/esop/HoffmannH10,DBLP:conf/popl/HofmannJ03,DBLP:conf/esop/HofmannJ06,DBLP:conf/csl/HofmannR09,DBLP:conf/fm/JostLHSH09,DBLP:conf/popl/JostHLH10,Hoffman1,DBLP:conf/popl/GimenezM16,DBLP:conf/cav/SinnZV14,DBLP:conf/sas/AliasDFG10}, as well as the SPEED project~\cite{SPEED1,SPEED2,DBLP:conf/cav/GulavaniG08}.
There are two key differences of our methods as compared to previous works.
First, our methods are based on extension of ranking functions to non-deterministic recursive
programs, whereas previous works either use potential functions, abstract interpretation, or size-change.
Second, while none of the previous methods can derive non-polynomial bounds
such as $\mathcal{O}(n^r)$, where $r$ is not an integer, our approach gives
an algorithm to derive such non-polynomial bounds, and surprisingly using
linear programming.

The approach of recurrence relations for worst-case analysis is  explored in~\cite{DBLP:conf/icfp/Grobauer01,DBLP:journals/tcs/FlajoletSZ91,DBLP:journals/entcs/AlbertAGGPRRZ09,DBLP:conf/sas/AlbertAGP08,DBLP:conf/esop/AlbertAGPZ07}.
%but cannot derive complex worst-case bounds.
%Even for relatively simple programs the recurrences are quite complex, and cannot be solved using standard techniques.
A related result is by Albert~\emph{et al.}~\cite{DBLP:conf/sas/AlbertAGP08} who considered using evaluation trees for solving recurrence relations, which can derive the worst-case bound for Merge-Sort.
%their method relies on specific features such as branching factor and height of an evaluation tree, and cannot derive bounds like $\mathcal{O}(n^r)$ where $r$ is not an integer.
Another approach through theorem proving is explored in~\cite{DBLP:conf/popl/SrikanthSH17}.
The approach is to iteratively generate control-flow paths and then to obtain worst-case bounds over generated paths through theorem proving (with arithmetic theorems).
%In contrast, our approach can synthesize non-trivial complexity bounds using linear programming.

Ranking functions for intra-procedural analysis has been widely studied~\cite{BG05,DBLP:conf/cav/BradleyMS05,DBLP:conf/tacas/ColonS01,DBLP:conf/vmcai/PodelskiR04,DBLP:conf/pods/SohnG91,DBLP:conf/vmcai/Cousot05,DBLP:journals/fcsc/YangZZX10,DBLP:journals/jossac/ShenWYZ13}.
Most works have focused on linear or polynomial ranking functions~\cite{DBLP:conf/tacas/ColonS01,DBLP:conf/vmcai/PodelskiR04,DBLP:conf/pods/SohnG91,DBLP:conf/vmcai/Cousot05,DBLP:journals/fcsc/YangZZX10,DBLP:journals/jossac/ShenWYZ13}.
Such approach alone can only derive polynomial bounds for
programs.
When integrated with evaluation trees, polynomial ranking functions can derive exponential bounds such as $\mathcal{O}(2^n)$~\cite{DBLP:journals/toplas/BrockschmidtE0F16}.
In contrast, we directly synthesize non-polynomial ranking functions without the help of evaluation trees.
%In contrast, we can derive much more complex bounds for recursive programs.
Ranking functions have been extended to ranking
supermartingales~\cite{SriramCAV,HolgerPOPL,DBLP:conf/popl/ChatterjeeFNH16,DBLP:conf/cav/ChatterjeeFG16,DBLP:journals/corr/ChatterjeeF17,ChatterjeeNZ2017} %%CFGCAV2016}
for probabilistic programs without recursion.
%% SRIRAM CAV, HOLGER, POPL, CAV
These works cannot derive non-polynomial bounds. %for termination.

Several other works present proof rules for deterministic programs~\cite{DBLP:journals/fac/Hesselink94}
as well as for probabilistic programs~\cite{JonesPhdThesis,OLKMLICS2016}.
%% RELEVANT CITATIONS.
None of these works can be automated.
Other related approaches are sized types~\cite{DBLP:journals/lisp/ChinK01,DBLP:conf/icfp/HughesP99,DBLP:conf/popl/HughesPS96},
%22, 21, 8
and polynomial resource bounds~\cite{DBLP:conf/tlca/ShkaravskaKE07}.
%% 25
Again none of these approaches can yield bounds like $\mathcal{O}(n \log n)$
or $\mathcal{O}(n^r)$, for $r$ non-integral.

Below we compare three most related works  ~\cite{DBLP:conf/sas/AlbertAGP08,DBLP:journals/toplas/BrockschmidtE0F16,DBLP:conf/cav/GulavaniG08}.

\noindent{\bf Comparison with \cite{DBLP:conf/sas/AlbertAGP08}.} First, \cite{DBLP:conf/sas/AlbertAGP08} uses synthesis of linear ranking functions in order to bound the number of nodes or the height of an evaluation tree for a cost relation system. We use expression abstraction to synthesize non-linear bounds.
Second, \cite{DBLP:conf/sas/AlbertAGP08} uses branching factor of a cost relation system to bound the number of nodes, which typically leads to exponential bounds. Our approach produces efficient bounds.
Third, \cite{DBLP:conf/sas/AlbertAGP08} treats Merge-sort in a very specific way: first, there is a ranking function with a discount-factor 2 to bound the height of an evaluation tree logarithmically, then there is a linear bound for levels in an evaluation tree which does not increase when the levels become deeper, and multiplying them together produces $\mathcal{O}(n \log n)$ bound. We do not rely on these heuristics.
Finally, \cite{DBLP:conf/sas/AlbertAGP08} is not applicable to non-direct-recursive cost relations, while our approach is applicable to all recursive programs.

\noindent{\bf Comparison with \cite{DBLP:journals/toplas/BrockschmidtE0F16}.}
To derive non-polynomial bounds, \cite{DBLP:journals/toplas/BrockschmidtE0F16} integrate polynomial ranking functions with evaluation trees for recursive programs so that exponential bounds such as $\mathcal{O}(2^n)$ can be derived for Fibonacci numbers.
In contrast, we directly synthesize non-polynomial ranking functions without the help of evaluation trees.

\noindent{\bf Comparison with \cite{DBLP:conf/cav/GulavaniG08}.} \cite{DBLP:conf/cav/GulavaniG08} generates bounds through abstract interpretation using inference systems over expression abstraction with logarithm, maximum, exponentiation, etc.
In contrast, we employ different method through ranking functions, also with linear-inequality system over expression abstraction with logarithm and exponentiation. The key difference w.r.t expression abstraction is that \cite{DBLP:conf/cav/GulavaniG08} handles in extra maximum and square root, while we consider in extra floored expressions and finer linear inequalities between e.g., $\log n,\log(n+1)$ or $n^{1.6},(n+1)^{1.6}$ through Lagrange's Mean-Value Theorem.

\vspace{-1.5em}
\section{Conclusion}
\vspace{-1em}
In this paper, we developed an approach to obtain non-polynomial
worst-case bounds for recursive programs through
(i) abstraction of logarithmic and exponentiation terms and (ii) Farkas' Lemma, LMVT, and Handelman's Theorem.
Moreover our approach obtains such bounds using linear programming, which thus
is an efficient approach.
Our approach obtains non-trivial worst-case complexity bounds for classical recursive programs:
$\mathcal{O}(n\log{n})$-complexity for both Merge-Sort
and the divide-and-conquer Closest-Pair algorithm,
$\mathcal{O}(n^{1.6})$ for Karatsuba's algorithm for polynomial multiplication,
and $\mathcal{O}(n^{2.9})$ for Strassen's algorithm for matrix multiplication.
The bounds we obtain for Karatsuba's and Strassen's algorithm are close to the optimal
bounds known.
%In this work, we do not incorporate data structures.
An interesting future direction is to extend our technique to data-structures.
Another future direction is to investigate the application of our approach to invariant generation.

\subsubsection*{Acknowledgements}
We thank all reviewers for valuable comments.
The research is partially supported by Vienna Science and Technology Fund (WWTF) ICT15-003,
Austrian Science Fund (FWF) NFN Grant No. S11407-N23 (RiSE/SHiNE), ERC Start grant (279307: Graph Games), the Natural Science Foundation of China (NSFC) under Grant No. 61532019 and the CDZ project CAP (GZ 1023).

\clearpage
{\scriptsize
\bibliographystyle{splncs03}
\bibliography{PL}

\begin{thebibliography}{10}
\providecommand{\url}[1]{\texttt{#1}}
\providecommand{\urlprefix}{URL }

\bibitem{DBLP:journals/entcs/AlbertAGGPRRZ09}
Albert, E., Arenas, P., Genaim, S., G{\'{o}}mez{-}Zamalloa, M., Puebla, G.,
  Ram{\'{\i}}rez{-}Deantes, D.V., Rom{\'{a}}n{-}D{\'{\i}}ez, G., Zanardini, D.:
  Termination and cost analysis with {COSTA} and its user interfaces. Electr.
  Notes Theor. Comput. Sci.  258(1),  109--121 (2009),
  \url{http://dx.doi.org/10.1016/j.entcs.2009.12.008}

\bibitem{DBLP:conf/sas/AlbertAGP08}
Albert, E., Arenas, P., Genaim, S., Puebla, G.: Automatic inference of upper
  bounds for recurrence relations in cost analysis. In: Alpuente, M., Vidal, G.
  (eds.) Static Analysis, 15th International Symposium, {SAS} 2008, Valencia,
  Spain, July 16-18, 2008. Proceedings. Lecture Notes in Computer Science, vol.
  5079, pp. 221--237. Springer (2008),
  \url{http://dx.doi.org/10.1007/978-3-540-69166-2_15}

\bibitem{DBLP:conf/esop/AlbertAGPZ07}
Albert, E., Arenas, P., Genaim, S., Puebla, G., Zanardini, D.: Cost analysis of
  java bytecode. In: Nicola, R.D. (ed.) Programming Languages and Systems, 16th
  European Symposium on Programming, {ESOP} 2007, Held as Part of the Joint
  European Conferences on Theory and Practics of Software, {ETAPS} 2007, Braga,
  Portugal, March 24 - April 1, 2007, Proceedings. Lecture Notes in Computer
  Science, vol. 4421, pp. 157--172. Springer (2007),
  \url{http://dx.doi.org/10.1007/978-3-540-71316-6_12}

\bibitem{DBLP:conf/sas/AliasDFG10}
Alias, C., Darte, A., Feautrier, P., Gonnord, L.: Multi-dimensional rankings,
  program termination, and complexity bounds of flowchart programs. In: Cousot,
  R., Martel, M. (eds.) {SAS} 2010. LNCS, vol. 6337, pp. 117--133. Springer
  (2010), \url{http://dx.doi.org/10.1007/978-3-642-15769-1_8}

\bibitem{DBLP:conf/vmcai/AlurC10}
Alur, R., Chaudhuri, S.: Temporal reasoning for procedural programs. In:
  Barthe, G., Hermenegildo, M.V. (eds.) Verification, Model Checking, and
  Abstract Interpretation, 11th International Conference, {VMCAI} 2010, Madrid,
  Spain, January 17-19, 2010. Proceedings. Lecture Notes in Computer Science,
  vol. 5944, pp. 45--60. Springer (2010),
  \url{http://dx.doi.org/10.1007/978-3-642-11319-2_7}

\bibitem{BasicCalculus}
Bartle, R.G., Sherbert, D.R.: {I}ntroduction to {R}eal {A}nalysis. John Wiley
  \& Sons, Inc., 4th edn. (2011)

\bibitem{DBLP:conf/popl/2016}
Bod{\'{\i}}k, R., Majumdar, R. (eds.): Proceedings of the 43rd Annual {ACM}
  {SIGPLAN-SIGACT} Symposium on Principles of Programming Languages, {POPL}
  2016, St. Petersburg, FL, USA, January 20 - 22, 2016. {ACM} (2016),
  \url{http://dl.acm.org/citation.cfm?id=2837614}

\bibitem{BG05}
Bournez, O., Garnier, F.: Proving positive almost-sure termination. In: RTA.
  pp. 323--337 (2005)

\bibitem{DBLP:conf/cav/BradleyMS05}
Bradley, A.R., Manna, Z., Sipma, H.B.: Linear ranking with reachability. In:
  Etessami, K., Rajamani, S.K. (eds.) Computer Aided Verification, 17th
  International Conference, {CAV} 2005, Edinburgh, Scotland, UK, July 6-10,
  2005, Proceedings. Lecture Notes in Computer Science, vol. 3576, pp.
  491--504. Springer (2005)

\bibitem{DBLP:journals/toplas/BrockschmidtE0F16}
Brockschmidt, M., Emmes, F., Falke, S., Fuhs, C., Giesl, J.: Analyzing runtime
  and size complexity of integer programs. {ACM} Trans. Program. Lang. Syst.
  38(4),  13:1--13:50 (2016), \url{http://dl.acm.org/citation.cfm?id=2866575}

\bibitem{DBLP:conf/popl/2017}
Castagna, G., Gordon, A.D. (eds.): Proceedings of the 44th {ACM} {SIGPLAN}
  Symposium on Principles of Programming Languages, {POPL} 2017, Paris, France,
  January 18-20, 2017. {ACM} (2017), \url{http://doi.acm.org/10.1145/3009837}

\bibitem{SriramCAV}
Chakarov, A., Sankaranarayanan, S.: Probabilistic program analysis with
  martingales. In: Sharygina, N., Veith, H. (eds.) Computer Aided Verification
  - 25th International Conference, {CAV} 2013, Saint Petersburg, Russia, July
  13-19, 2013. Proceedings. Lecture Notes in Computer Science, vol. 8044, pp.
  511--526. Springer (2013)

\bibitem{DBLP:journals/corr/ChatterjeeF17}
Chatterjee, K., Fu, H.: Termination of nondeterministic recursive probabilistic
  programs. CoRR  abs/1701.02944 (2017), \url{http://arxiv.org/abs/1701.02944}

\bibitem{DBLP:conf/cav/ChatterjeeFG16}
Chatterjee, K., Fu, H., Goharshady, A.K.: Termination analysis of probabilistic
  programs through positivstellensatz's. In: Chaudhuri, S., Farzan, A. (eds.)
  Computer Aided Verification - 28th International Conference, {CAV} 2016,
  Toronto, ON, Canada, July 17-23, 2016, Proceedings, Part {I}. Lecture Notes
  in Computer Science, vol. 9779, pp. 3--22. Springer (2016),
  \url{http://dx.doi.org/10.1007/978-3-319-41528-4_1}

\bibitem{DBLP:conf/popl/ChatterjeeFNH16}
Chatterjee, K., Fu, H., Novotn{\'{y}}, P., Hasheminezhad, R.: Algorithmic
  analysis of qualitative and quantitative termination problems for affine
  probabilistic programs. In: Bod{\'{\i}}k and Majumdar
  \cite{DBLP:conf/popl/2016}, pp. 327--342,
  \url{http://doi.acm.org/10.1145/2837614.2837639}

\bibitem{ChatterjeeNZ2017}
Chatterjee, K., Novotn{\'{y}}, P., \v{Z}ikeli\'{c}, {\DJ}.: Stochastic
  invariants for probabilistic termination. In: Castagna and Gordon
  \cite{DBLP:conf/popl/2017}, pp. 145--160,
  \url{http://doi.acm.org/10.1145/3009837}

\bibitem{DBLP:journals/lisp/ChinK01}
Chin, W., Khoo, S.: Calculating sized types. Higher-Order and Symbolic
  Computation  14(2-3),  261--300 (2001),
  \url{http://dx.doi.org/10.1023/A:1012996816178}

\bibitem{DBLP:conf/cav/ColonSS03}
Col{\'{o}}n, M., Sankaranarayanan, S., Sipma, H.: Linear invariant generation
  using non-linear constraint solving. In: Jr., W.A.H., Somenzi, F. (eds.)
  Computer Aided Verification, 15th International Conference, {CAV} 2003,
  Boulder, CO, USA, July 8-12, 2003, Proceedings. Lecture Notes in Computer
  Science, vol. 2725, pp. 420--432. Springer (2003),
  \url{http://dx.doi.org/10.1007/978-3-540-45069-6_39}

\bibitem{DBLP:conf/tacas/ColonS01}
Col{\'{o}}n, M., Sipma, H.: Synthesis of linear ranking functions. In:
  Margaria, T., Yi, W. (eds.) Tools and Algorithms for the Construction and
  Analysis of Systems, 7th International Conference, {TACAS} 2001 Held as Part
  of the Joint European Conferences on Theory and Practice of Software, {ETAPS}
  2001 Genova, Italy, April 2-6, 2001, Proceedings. Lecture Notes in Computer
  Science, vol. 2031, pp. 67--81. Springer (2001)

\bibitem{DBLP:conf/pldi/CookPR06}
Cook, B., Podelski, A., Rybalchenko, A.: Termination proofs for systems code.
  In: Schwartzbach, M.I., Ball, T. (eds.) {PLDI} 2006. pp. 415--426. {ACM}
  (2006), \url{http://doi.acm.org/10.1145/1133981.1134029}

\bibitem{DBLP:journals/fmsd/CookPR09}
Cook, B., Podelski, A., Rybalchenko, A.: Summarization for termination: no
  return! Formal Methods in System Design  35(3),  369--387 (2009),
  \url{http://dx.doi.org/10.1007/s10703-009-0087-8}

\bibitem{DBLP:conf/tacas/CookSZ13}
Cook, B., See, A., Zuleger, F.: Ramsey vs. lexicographic termination proving.
  In: Piterman, N., Smolka, S.A. (eds.) {TACAS} 2013. LNCS, vol. 7795, pp.
  47--61. Springer (2013), \url{http://dx.doi.org/10.1007/978-3-642-36742-7_4}

\bibitem{DBLP:books/daglib/0023376}
Cormen, T.H., Leiserson, C.E., Rivest, R.L., Stein, C.: Introduction to
  Algorithms {(3.} ed.). {MIT} Press (2009),
  \url{http://mitpress.mit.edu/books/introduction-algorithms}

\bibitem{DBLP:conf/vmcai/Cousot05}
Cousot, P.: Proving program invariance and termination by parametric
  abstraction, {L}agrangian relaxation and semidefinite programming. In:
  Cousot, R. (ed.) Verification, Model Checking, and Abstract Interpretation,
  6th International Conference, {VMCAI} 2005, Paris, France, January 17-19,
  2005, Proceedings. Lecture Notes in Computer Science, vol. 3385, pp. 1--24.
  Springer (2005)

\bibitem{DBLP:conf/popl/CousotC77}
Cousot, P., Cousot, R.: Abstract interpretation: {A} unified lattice model for
  static analysis of programs by construction or approximation of fixpoints.
  In: Graham, R.M., Harrison, M.A., Sethi, R. (eds.) Conference Record of the
  Fourth {ACM} Symposium on Principles of Programming Languages, Los Angeles,
  California, USA, January 1977. pp. 238--252. {ACM} (1977)

\bibitem{DBLP:conf/popl/CousotC12}
Cousot, P., Cousot, R.: An abstract interpretation framework for termination.
  In: Field, J., Hicks, M. (eds.) {POPL} 2012. pp. 245--258. {ACM} (2012),
  \url{http://doi.acm.org/10.1145/2103656.2103687}

\bibitem{FarkasLemma}
Farkas, J.: A fourier-f\'{e}le mechanikai elv alkalmaz\'{a}sai ({H}ungarian).
  Mathematikai\'{e}s Term\'{e}szettudom\'{a}nyi \'{E}rtesit\"{o}  12,  457--472
  (1894)

\bibitem{HolgerPOPL}
Fioriti, L.M.F., Hermanns, H.: Probabilistic termination: Soundness,
  completeness, and compositionality. In: Rajamani, S.K., Walker, D. (eds.)
  Proceedings of the 42nd Annual {ACM} {SIGPLAN-SIGACT} Symposium on Principles
  of Programming Languages, {POPL} 2015, Mumbai, India, January 15-17, 2015.
  pp. 489--501. {ACM} (2015)

\bibitem{DBLP:journals/tcs/FlajoletSZ91}
Flajolet, P., Salvy, B., Zimmermann, P.: Automatic average-case analysis of
  algorithm. Theor. Comput. Sci.  79(1),  37--109 (1991),
  \url{http://dx.doi.org/10.1016/0304-3975(91)90145-R}

\bibitem{rwfloyd1967programs}
Floyd, R.W.: Assigning meanings to programs. Mathematical Aspects of Computer
  Science  19,  19--33 (1967)

\bibitem{DBLP:conf/popl/GimenezM16}
Gimenez, S., Moser, G.: The complexity of interaction. In: Bod{\'{\i}}k and
  Majumdar  \cite{DBLP:conf/popl/2016}, pp. 243--255,
  \url{http://doi.acm.org/10.1145/2837614.2837646}

\bibitem{Goedel1934}
G\"{o}del, K., Kleene, S.C., Rosser, J.B.: On undecidable propositions of
  formal mathematical systems. Institute for Advanced Study Princeton, NJ
  (1934)

\bibitem{DBLP:conf/icfp/Grobauer01}
Grobauer, B.: Cost recurrences for {DML} programs. In: Pierce, B.C. (ed.)
  Proceedings of the Sixth {ACM} {SIGPLAN} International Conference on
  Functional Programming {(ICFP} '01), Firenze (Florence), Italy, September
  3-5, 2001. pp. 253--264. {ACM} (2001),
  \url{http://doi.acm.org/10.1145/507635.507666}

\bibitem{DBLP:conf/cav/GulavaniG08}
Gulavani, B.S., Gulwani, S.: A numerical abstract domain based on expression
  abstraction and max operator with application in timing analysis. In: Gupta,
  A., Malik, S. (eds.) Computer Aided Verification, 20th International
  Conference, {CAV} 2008, Princeton, NJ, USA, July 7-14, 2008, Proceedings.
  Lecture Notes in Computer Science, vol. 5123, pp. 370--384. Springer (2008),
  \url{http://dx.doi.org/10.1007/978-3-540-70545-1_35}

\bibitem{SPEED1}
Gulwani, S.: {SPEED:} symbolic complexity bound analysis. In: Bouajjani, A.,
  Maler, O. (eds.) Computer Aided Verification, 21st International Conference,
  {CAV} 2009, Grenoble, France, June 26 - July 2, 2009. Proceedings. Lecture
  Notes in Computer Science, vol. 5643, pp. 51--62. Springer (2009),
  \url{http://dx.doi.org/10.1007/978-3-642-02658-4_7}

\bibitem{SPEED2}
Gulwani, S., Mehra, K.K., Chilimbi, T.M.: {SPEED:} precise and efficient static
  estimation of program computational complexity. In: Shao, Z., Pierce, B.C.
  (eds.) Proceedings of the 36th {ACM} {SIGPLAN-SIGACT} Symposium on Principles
  of Programming Languages, {POPL} 2009, Savannah, GA, USA, January 21-23,
  2009. pp. 127--139. {ACM} (2009),
  \url{http://doi.acm.org/10.1145/1480881.1480898}

\bibitem{HandelmanTheorem}
Handelman, D.: Representing polynomials by positive linear functions on compact
  convex polyhedra. Pacific J. Math.  132,  35--62 (1988)

\bibitem{DBLP:journals/fac/Hesselink94}
Hesselink, W.H.: Proof rules for recursive procedures. Formal Asp. Comput.
  5(6),  554--570 (1993), \url{http://dx.doi.org/10.1007/BF01211249}

\bibitem{Hoffman1}
Hoffmann, J., Aehlig, K., Hofmann, M.: Multivariate amortized resource
  analysis. {ACM} Trans. Program. Lang. Syst.  34(3), ~14 (2012),
  \url{http://doi.acm.org/10.1145/2362389.2362393}

\bibitem{Hoffman2}
Hoffmann, J., Aehlig, K., Hofmann, M.: Resource aware {ML}. In: Madhusudan, P.,
  Seshia, S.A. (eds.) Computer Aided Verification - 24th International
  Conference, {CAV} 2012, Berkeley, CA, USA, July 7-13, 2012 Proceedings.
  Lecture Notes in Computer Science, vol. 7358, pp. 781--786. Springer (2012),
  \url{http://dx.doi.org/10.1007/978-3-642-31424-7_64}

\bibitem{DBLP:conf/aplas/HoffmannH10}
Hoffmann, J., Hofmann, M.: Amortized resource analysis with polymorphic
  recursion and partial big-step operational semantics. In: Ueda, K. (ed.)
  Programming Languages and Systems - 8th Asian Symposium, {APLAS} 2010,
  Shanghai, China, November 28 - December 1, 2010. Proceedings. Lecture Notes
  in Computer Science, vol. 6461, pp. 172--187. Springer (2010),
  \url{http://dx.doi.org/10.1007/978-3-642-17164-2_13}

\bibitem{DBLP:conf/esop/HoffmannH10}
Hoffmann, J., Hofmann, M.: Amortized resource analysis with polynomial
  potential. In: Gordon, A.D. (ed.) Programming Languages and Systems, 19th
  European Symposium on Programming, {ESOP} 2010, Held as Part of the Joint
  European Conferences on Theory and Practice of Software, {ETAPS} 2010,
  Paphos, Cyprus, March 20-28, 2010. Proceedings. Lecture Notes in Computer
  Science, vol. 6012, pp. 287--306. Springer (2010),
  \url{http://dx.doi.org/10.1007/978-3-642-11957-6_16}

\bibitem{DBLP:conf/popl/HofmannJ03}
Hofmann, M., Jost, S.: Static prediction of heap space usage for first-order
  functional programs. In: Aiken, A., Morrisett, G. (eds.) Conference Record of
  {POPL} 2003: The 30th {SIGPLAN-SIGACT} Symposium on Principles of Programming
  Languages, New Orleans, Louisisana, USA, January 15-17, 2003. pp. 185--197.
  {ACM} (2003), \url{http://doi.acm.org/10.1145/640128.604148}

\bibitem{DBLP:conf/esop/HofmannJ06}
Hofmann, M., Jost, S.: Type-based amortised heap-space analysis. In: Sestoft,
  P. (ed.) Programming Languages and Systems, 15th European Symposium on
  Programming, {ESOP} 2006, Held as Part of the Joint European Conferences on
  Theory and Practice of Software, {ETAPS} 2006, Vienna, Austria, March 27-28,
  2006, Proceedings. Lecture Notes in Computer Science, vol. 3924, pp. 22--37.
  Springer (2006), \url{http://dx.doi.org/10.1007/11693024_3}

\bibitem{DBLP:conf/csl/HofmannR09}
Hofmann, M., Rodriguez, D.: Efficient type-checking for amortised heap-space
  analysis. In: Gr{\"{a}}del, E., Kahle, R. (eds.) Computer Science Logic, 23rd
  international Workshop, {CSL} 2009, 18th Annual Conference of the EACSL,
  Coimbra, Portugal, September 7-11, 2009. Proceedings. Lecture Notes in
  Computer Science, vol. 5771, pp. 317--331. Springer (2009),
  \url{http://dx.doi.org/10.1007/978-3-642-04027-6_24}

\bibitem{DBLP:conf/icfp/HughesP99}
Hughes, J., Pareto, L.: Recursion and dynamic data-structures in bounded space:
  Towards embedded {ML} programming. In: R{\'{e}}mi, D., Lee, P. (eds.)
  Proceedings of the fourth {ACM} {SIGPLAN} International Conference on
  Functional Programming {(ICFP} '99), Paris, France, September 27-29, 1999.
  pp. 70--81. {ACM} (1999), \url{http://doi.acm.org/10.1145/317636.317785}

\bibitem{DBLP:conf/popl/HughesPS96}
Hughes, J., Pareto, L., Sabry, A.: Proving the correctness of reactive systems
  using sized types. In: Boehm, H., Jr., G.L.S. (eds.) Conference Record of
  POPL'96: The 23rd {ACM} {SIGPLAN-SIGACT} Symposium on Principles of
  Programming Languages, Papers Presented at the Symposium, St. Petersburg
  Beach, Florida, USA, January 21-24, 1996. pp. 410--423. {ACM} Press (1996),
  \url{http://doi.acm.org/10.1145/237721.240882}

\bibitem{JonesPhdThesis}
Jones, C.: Probabilistic Non-Determinism. Ph.D. thesis, The University of
  Edinburgh (1989)

\bibitem{DBLP:conf/popl/JostHLH10}
Jost, S., Hammond, K., Loidl, H., Hofmann, M.: Static determination of
  quantitative resource usage for higher-order programs. In: Hermenegildo,
  M.V., Palsberg, J. (eds.) Proceedings of the 37th {ACM} {SIGPLAN-SIGACT}
  Symposium on Principles of Programming Languages, {POPL} 2010, Madrid, Spain,
  January 17-23, 2010. pp. 223--236. {ACM} (2010),
  \url{http://doi.acm.org/10.1145/1706299.1706327}

\bibitem{DBLP:conf/fm/JostLHSH09}
Jost, S., Loidl, H., Hammond, K., Scaife, N., Hofmann, M.: "carbon credits" for
  resource-bounded computations using amortised analysis. In: Cavalcanti, A.,
  Dams, D. (eds.) {FM} 2009: Formal Methods, Second World Congress, Eindhoven,
  The Netherlands, November 2-6, 2009. Proceedings. Lecture Notes in Computer
  Science, vol. 5850, pp. 354--369. Springer (2009),
  \url{http://dx.doi.org/10.1007/978-3-642-05089-3_23}

\bibitem{KnuthAllBooks}
Knuth, D.E.: The Art of Computer Programming, Volume {I}--{III}. Addison-Wesley
  (1973)

\bibitem{DBLP:conf/esop/KuwaharaTU014}
Kuwahara, T., Terauchi, T., Unno, H., Kobayashi, N.: Automatic termination
  verification for higher-order functional programs. In: Shao, Z. (ed.) {ESOP}
  2014. {LNCS}, vol. 8410, pp. 392--411. Springer (2014),
  \url{http://dx.doi.org/10.1007/978-3-642-54833-8_21}

\bibitem{DBLP:journals/toplas/Lee09}
Lee, C.S.: Ranking functions for size-change termination. {ACM} Trans. Program.
  Lang. Syst.  31(3),  10:1--10:42 (2009),
  \url{http://doi.acm.org/10.1145/1498926.1498928}

\bibitem{DBLP:conf/popl/LeeJB01}
Lee, C.S., Jones, N.D., Ben{-}Amram, A.M.: The size-change principle for
  program termination. In: Hankin, C., Schmidt, D. (eds.) {POPL} 2001. pp.
  81--92. {ACM} (2001), \url{http://doi.acm.org/10.1145/360204.360210}

\bibitem{lpsolve2016}
{lp\_solve 5.5.2.3}. {http://lpsolve.sourceforge.net/5.5/} (2016)

\bibitem{OLKMLICS2016}
Olmedo, F., Kaminski, B.L., Katoen, J., Matheja, C.: Reasoning about recursive
  probabilistic programs. In: Grohe, M., Koskinen, E., Shankar, N. (eds.)
  Proceedings of the 31st Annual {ACM/IEEE} Symposium on Logic in Computer
  Science, {LICS} '16, New York, NY, USA, July 5-8, 2016. pp. 672--681. {ACM}
  (2016), \url{http://doi.acm.org/10.1145/2933575.2935317}

\bibitem{DBLP:conf/vmcai/PodelskiR04}
Podelski, A., Rybalchenko, A.: A complete method for the synthesis of linear
  ranking functions. In: Steffen, B., Levi, G. (eds.) Verification, Model
  Checking, and Abstract Interpretation, 5th International Conference, {VMCAI}
  2004, Venice, January 11-13, 2004, Proceedings. Lecture Notes in Computer
  Science, vol. 2937, pp. 239--251. Springer (2004)

\bibitem{DBLP:books/daglib/0090562}
Schrijver, A.: {T}heory of {L}inear and {I}nteger {P}rogramming.
  Wiley-Interscience series in discrete mathematics and optimization, Wiley
  (1999)

\bibitem{SchrijverPolyhedra}
Schrijver, A.: Combinatorial Optimization - Polyhedra and Efficiency. Springer
  (2003)

\bibitem{DBLP:journals/jossac/ShenWYZ13}
Shen, L., Wu, M., Yang, Z., Zeng, Z.: Generating exact nonlinear ranking
  functions by symbolic-numeric hybrid method. J. Systems Science {\&}
  Complexity  26(2),  291--301 (2013)

\bibitem{DBLP:conf/tlca/ShkaravskaKE07}
Shkaravska, O., van Kesteren, R., van Eekelen, M.C.J.D.: Polynomial size
  analysis of first-order functions. In: Rocca, S.R.D. (ed.) Typed Lambda
  Calculi and Applications, 8th International Conference, {TLCA} 2007, Paris,
  France, June 26-28, 2007, Proceedings. Lecture Notes in Computer Science,
  vol. 4583, pp. 351--365. Springer (2007),
  \url{http://dx.doi.org/10.1007/978-3-540-73228-0_25}

\bibitem{DBLP:conf/cav/SinnZV14}
Sinn, M., Zuleger, F., Veith, H.: A simple and scalable static analysis for
  bound analysis and amortized complexity analysis. In: Biere, A., Bloem, R.
  (eds.) {CAV} 2014. LNCS, vol. 8559, pp. 745--761. Springer (2014),
  \url{http://dx.doi.org/10.1007/978-3-319-08867-9_50}

\bibitem{DBLP:conf/pods/SohnG91}
Sohn, K., Gelder, A.V.: Termination detection in logic programs using argument
  sizes. In: Rosenkrantz, D.J. (ed.) Proceedings of the Tenth {ACM}
  {SIGACT-SIGMOD-SIGART} Symposium on Principles of Database Systems, May
  29-31, 1991, Denver, Colorado, {USA}. pp. 216--226. {ACM} Press (1991)

\bibitem{DBLP:conf/popl/SrikanthSH17}
Srikanth, A., Sahin, B., Harris, W.R.: Complexity verification using guided
  theorem enumeration. In: Castagna and Gordon  \cite{DBLP:conf/popl/2017}, pp.
  639--652, \url{http://dl.acm.org/citation.cfm?id=3009864}

\bibitem{DBLP:conf/sas/Urban13}
Urban, C.: The abstract domain of segmented ranking functions. In: Logozzo, F.,
  F{\"{a}}hndrich, M. (eds.) {SAS} 2013. LNCS, vol. 7935, pp. 43--62. Springer
  (2013), \url{http://dx.doi.org/10.1007/978-3-642-38856-9_5}

\bibitem{DBLP:journals/tecs/WilhelmEEHTWBFHMMPPSS08}
Wilhelm, R., Engblom, J., Ermedahl, A., Holsti, N., Thesing, S., Whalley, D.B.,
  Bernat, G., Ferdinand, C., Heckmann, R., Mitra, T., Mueller, F., Puaut, I.,
  Puschner, P.P., Staschulat, J., Stenstr{\"{o}}m, P.: The worst-case
  execution-time problem - overview of methods and survey of tools. {ACM}
  Trans. Embedded Comput. Syst.  7(3) (2008),
  \url{http://doi.acm.org/10.1145/1347375.1347389}

\bibitem{DBLP:journals/fcsc/YangZZX10}
Yang, L., Zhou, C., Zhan, N., Xia, B.: Recent advances in program verification
  through computer algebra. Frontiers of Computer Science in China  4(1),
  1--16 (2010), \url{http://dx.doi.org/10.1007/s11704-009-0074-7}

\end{thebibliography}
}

\clearpage
\appendix

\section{Evaluation of Arithmetic Expressions}\label{app:arith}

Below we fix a countable set $\mathcal{X}$ of scalar variables.
Given an arithmetic expression $\mathfrak{e}$ over $\mathcal{X}$ and a valuation on $\mathcal{X}$, the element $\mathfrak{e}(\nu)$ is defined inductively on the structure of $\mathfrak{e}$ as follows:
\begin{compactitem}
\item $c(\nu):=c$;
\item $x(\nu):=\nu(x)$;
\item ${\left\lfloor \frac{\mathfrak{e}}{c}\right\rfloor}(\nu):={\left\lfloor \frac{\mathfrak{e}(\nu)}{c}\right\rfloor}$; (Note that $c\ne 0$ by our assumption.)
\item ${(\mathfrak{e}+\mathfrak{e}')}(\nu):=\mathfrak{e}(\nu)+\mathfrak{e}'(\nu)$;
\item ${(\mathfrak{e}-\mathfrak{e}')}(\nu):=\mathfrak{e}(\nu)-\mathfrak{e}'(\nu)$;
\item ${(c*\mathfrak{e}')}(\nu):=c\cdot\mathfrak{e}'(\nu)$.
\end{compactitem}

\section{Semantics of Propositional Arithmetic Predicates}\label{app:predicate}

Let $\mathcal{X}$ be the set of scalar variables.
The satisfaction relation $\models$ between valuations $\nu$ and propositional arithmetic predicates $\phi$ is defined inductively as follows:
\begin{compactitem}
\item $\nu\models \mathfrak{e}\Join\mathfrak{e}'$ ($\Join\in\{\le,\ge\}$) if $\mathfrak{e}(\nu)\Join\mathfrak{e}'(\nu)$;
\item $\nu\models\neg\phi$ iff $\nu\not\models\phi$;
\item $\nu\models\phi_1\wedge\phi_2$ iff $\nu\models\phi_1$ and $\nu\models\phi_2$;
\item $\nu\models\phi_1\vee\phi_2$ iff $\nu\models\phi_1$ or $\nu\models\phi_2$.
\end{compactitem}

\section{Detailed Syntax for Recursive Programs}\label{app:detailedsyntax}

In the sequel, we fix a countable set of \emph{scalar variables}; and we also fix a countable set of \emph{function names}.
W.l.o.g, these two sets are pairwise disjoint.
Each scalar variable holds an integer upon instantiation.

\smallskip\noindent{\bf The Syntax.} The syntax of our recursive programs is illustrated by the grammar in Fig.~\ref{fig:syntax}.
Below we briefly explain the grammar.
\begin{compactitem}
\item \emph{Variables:} Expressions $\langle\mathit{pvar}\rangle$ range over scalar variables.
\item \emph{Function Names:} Expressions $\langle\mathit{fname}\rangle$ range over function names.
\item \emph{Constants:} Expressions $\langle\mathit{int}\rangle$ range over integers represented as decimal numbers, while expressions $\langle\mathit{nonzero}\rangle$ range over non-zero integers represented as decimal numbers.
\item \emph{Arithmetic Expressions:} Expressions $\langle\mathit{expr}\rangle$ range over linear arithmetic expressions consisting of scalar variables, floor operation (cf. $\left\lfloor\langle\mathit{expr}\rangle\slash\langle\mathit{nonzero}\rangle\right\rfloor$) and arithmetic operations.
\item \emph{Parameters:} Expressions $\langle\mathit{plist}\rangle$ range over lists of scalar variables, and expressions $\langle\mathit{vlist}\rangle$ range over lists of $\langle\mathit{expr}\rangle$ expressions.
\item \emph{Boolean Expressions:} Expressions $\langle\mathit{bexpr}\rangle$ range over propositional arithmetic predicates over scalar variables.
\item \emph{Statements:} Various types of assignment statements are indicated by `$:=$'; `\textbf{skip}' is the statement that does nothing;
conditional branch or demonic non-determinism is indicated by the keyword `\textbf{if}', while $\langle\mathit{bexpr}\rangle$ indicates conditional branch and $\star$ indicates demonic non-determinism;
while-loops are indicated by the keyword `\textbf{while}';
sequential compositions are indicated by semicolon;
finally, function calls are indicated by $\langle\mathit{fname}\rangle\left(\langle\mathit{vlist}\rangle\right)$.
\item \emph{Programs:} Each recursive program $\langle\mathit{prog}\rangle$ is a sequence of function entities, for which each function entity $\langle\mathit{func}\rangle$ consists of a function name followed by a list of parameters (composing a function declaration) and a curly-braced statement.
\end{compactitem}

\begin{figure}
\begin{minipage}{0.50\textwidth}
\centering
\begin{align*}
&\langle \mathit{prog}\rangle ::= \,\langle\mathit{func}\rangle\langle\mathit{prog}\rangle \mid \langle\mathit{func}\rangle
\\
&\langle \mathit{func}\rangle ::= \,\langle\mathit{fname}\rangle\mbox{`$($'}\langle plist\rangle\mbox{`$)$'}\mbox{`$\{$'}\langle stmt\rangle\mbox{`$\}$'}
\\
&\langle plist\rangle ::= \,\langle \mathit{pvar}\rangle\mid \langle \mathit{pvar}\rangle\mbox{`$,$'}\langle \mathit{plist}\rangle
\\
\vspace{\baselineskip}
\\
&\langle \mathit{stmt}\rangle ::= \mbox{`\textbf{skip}'}\mid \langle\mathit{pvar}\rangle \,\mbox{`$:=$'}\, \langle\mathit{expr} \rangle\\
& \mid \langle\mathit{fname}\rangle\mbox{`$($'}\langle\mathit{vlist}\rangle\mbox{`$)$'}\\
& \mid \mbox{`\textbf{if}'} \, \langle\mathit{bexpr}\rangle\,\mbox{`\textbf{then}'} \, \langle \mathit{stmt}\rangle \, \mbox{`\textbf{else}'} \, \langle \mathit{stmt}\rangle \,\mbox{`\textbf{fi}'}
\\
& \mid \mbox{`\textbf{if}'} \, \star\,\mbox{`\textbf{then}'} \, \langle \mathit{stmt}\rangle \, \mbox{`\textbf{else}'} \, \langle \mathit{stmt}\rangle \,\mbox{`\textbf{fi}'}
\\
&\mid  \mbox{`\textbf{while}'}\, \langle\mathit{bexpr}\rangle \, \text{`\textbf{do}'} \, \langle \mathit{stmt}\rangle \, \text{`\textbf{od}'}
\\
&\mid \langle\mathit{stmt}\rangle \, \text{`;'} \, \langle \mathit{stmt}\rangle
\end{align*}
\end{minipage}
\begin{minipage}{0.50\textwidth}
\begin{align*}
&\langle\mathit{expr} \rangle ::= \langle \mathit{int} \rangle \mid \langle\mathit{pvar}\rangle \\
&  \mid \langle\mathit{expr} \rangle\, \mbox{`$+$'} \,\langle\mathit{expr} \rangle \mid \langle\mathit{expr} \rangle\, \mbox{`$-$'} \,\langle\mathit{expr} \rangle \\
&
\mid \langle \mathit{int} \rangle \,\mbox{`$*$'} \, \langle\mathit{expr}\rangle \mid \mbox{`$\Big\lfloor$'}\frac{\langle\mathit{expr}\rangle}{\langle\mathit{nonzero}\rangle}\mbox{`$\Big\rfloor$'}
\\
&\langle\mathit{vlist}\rangle ::= \langle\mathit{expr}\rangle \mid  \langle\mathit{expr}\rangle\mbox{`,'} \langle\mathit{vlist}\rangle
\\
\vspace{\baselineskip}
\\
&\langle\mathit{literal} \rangle ::= \langle\mathit{expr} \rangle\, \mbox{`$\leq$'} \,\langle\mathit{expr} \rangle \mid \langle\mathit{expr} \rangle\, \mbox{`$\geq$'} \,\langle\mathit{expr} \rangle
\\
&\langle \mathit{bexpr}\rangle ::= \langle \mathit{literal} \rangle \mid \neg \langle\mathit{bexpr}\rangle\\
&\mid \langle \mathit{bexpr} \rangle \, \mbox{`\textbf{or}'} \, \langle\mathit{bexpr}\rangle
\mid \langle \mathit{bexpr} \rangle \, \mbox{`\textbf{and}'} \, \langle\mathit{bexpr}\rangle
\end{align*}
\end{minipage}
\caption{Syntax of Recursive Programs}
\label{fig:syntax}
\end{figure}

\smallskip\noindent{\bf Assumptions.}
W.l.o.g, we consider further syntactical restrictions
%assumptions on instances of our grammar
for simplicity:
%%(i.e., recursive programs) as follows:
\begin{compactitem}
\item \emph{Function Entities:} we consider that every parameter list $\langle\mathit{plist}\rangle$ contains no duplicate scalar variables, and the function names from function entities are distinct.
\item \emph{Function Calls:} we consider that no function call involves some function name without function entity (i.e., undeclared function names).
\end{compactitem}

\section{Control-Flow Graphs for Recursive Programs}\label{app:cfg}

\noindent\textbf{Intuitive Description.} It is intuitively clear that any recursive program can be transformed into a corresponding CFG: one first constructs each $\transitions{\fn{f}}$ (for $\fn{f}\in\fnames$) for each of its function bodies and then groups them together.
To construct each $\transitions{\fn{f}}$, we first construct the partial relation $\transitions{P,\fn{f}}$ inductively on the structure of $P$ for each statement $P$ appearing in the function body of $\fn{f}$, then define $\transitions{\fn{f}}$ as $\transitions{P_\fn{f},\fn{f}}$ for which $P_\fn{f}$ is the function body of $\fn{f}$.
Each relation $\transitions{P,\fn{f}}$ involves two distinguished labels, namely $\lin{P,\fn{f}}$ and $\lout{P,\fn{f}}$, that intuitively represent the label assigned to the first instruction to be executed in $P$ and the terminal program counter of $P$, respectively;
after the inductive construction, $\lin{\fn{f}},\lout{\fn{f}}$ are defined as $\lin{P_{\fn{f}},\fn{f}},\lout{P_{\fn{f}},\fn{f}}$, respectively.

\noindent\textbf{From Programs to CFG's.} In this part, we demonstrate inductively how the control-flow graph of a recursive program can be constructed.
Below we fix a recursive program $W$ and denote by $\fnames$ the set of function names appearing in $W$.
For each function name $\fn{f}\in\fnames$, we define $P_\fn{f}$ to be the function body of
$\fn{f}$, and define $\pvars{f}$ to be the set of scalar variables appearing in $P_\fn{f}$ and the parameter list of $\fn{f}$.

The control-flow graph of $W$ is constructed by first constructing the counterparts $\{\transitions{\fn{f}}\}_{\fn{f}\in\fnames}$ for each of its function bodies and then grouping them together.
To construct each $\transitions{\fn{f}}$, we first construct the partial relation $\transitions{P,\fn{f}}$ inductively on the structure of $P$ for each statement $P$ which involves variables solely from $\pvars{f}$, then define $\transitions{\fn{f}}$ as $\transitions{P_\fn{f},\fn{f}}$.

Let $\fn{f}\in\fnames$. Given an assignment statement of the form ${x}{:=}{\mathfrak{e}}$ involving variables solely from $\pvars{f}$ and a valuation $\nu\in\Val{f}$, we denote by ${\nu}{\assgn{\mathfrak{e}}{x}}$ the valuation over $\pvars{f}$ such that
\[
\left({\nu}{\assgn{\mathfrak{e}}{x}}\right)(q)=
\begin{cases} \nu(q) & \mbox{ if } q\in\pvars{f}\backslash\{x\} \\
\mathfrak{e}(\nu) & \mbox{ if }q=x
\end{cases}\enskip.
\]
Given a function call $\fn{g}(\mathfrak{e}_1,\dots,\mathfrak{e}_k)$ with variables solely from $\pvars{f}$ and its declaration being $\fn{g}(q_1,\dots,q_k)$, and
a valuation $\nu\in\Val{f}$, we define ${\nu}{[\fn{g},\{\mathfrak{e}_j\}_{1\le j\le k}]}$ to be a valuation over $\pvars{g}$ by:
\[
{\nu}{[\fn{g},\{\mathfrak{e}_j\}_{1\le j\le k}]}(q):=\begin{cases}
\mathfrak{e}_j(\nu) & \mbox{ if }q=q_j\mbox{ for some }j \\
0 & \mbox{ if }q\in \pvars{g}\backslash\{q_1,\dots,q_k\}
\end{cases}\enskip.
\]

Now the inductive construction for each $\transitions{P,\fn{f}}$ is demonstrated as follows.
For each statement $P$ which involves variables solely from $\pvars{f}$, the relation $\transitions{P,\fn{f}}$ involves two distinguished labels, namely $\lin{P,\fn{f}}$ and $\lout{P,\fn{f}}$, that intuitively represent the label assigned to the first instruction to be executed in $P$ and the terminal program counter of $P$, respectively.
After the inductive construction, $\lin{\fn{f}},\lout{\fn{f}}$ are defined as $\lin{P_{\fn{f}},\fn{f}},\lout{P_{\fn{f}},\fn{f}}$, respectively.

\begin{compactenum}
\item {\em Assignments.}
For $P$ of the form ${x}{:=}{\mathfrak{e}}$ or $\mbox{\textbf{skip}}$, $\transitions{P,\fn{f}}$ involves a new assignment label $\lin{P,\fn{f}}$ (as the initial label) and a new branching label $\lout{P,\fn{f}}$ (as the terminal label), and contains a sole triple $\left(\lin{P,\fn{f}},\nu\mapsto{\nu}{\assgn{\mathfrak{e}}{x}},\lout{P,\fn{f}}\right)$
or
$\left(\lin{P,\fn{f}},\nu\mapsto\nu,\lout{P,\fn{f}}\right)$,
respectively.
\item {\em Function Calls.}
For $P$ of the form $\fn{g}(\mathfrak{e}_1,\dots,\mathfrak{e}_k)$,
$\transitions{P,\fn{f}}$ involves a new call label $\lin{P,\fn{f}}$ and a new branching label $\lout{P,\fn{f}}$, and contains a sole triple $\left(\lin{P,\fn{f}},\left(\fn{g},\nu\mapsto\nu[\fn{g},\{\mathfrak{e}_j\}_{1\le j\le k}]\right),\lout{P,\fn{f}}\right)$.

\item {\em Sequential Statements.}
For ${P}{=}{Q_1;Q_2}$, we take the disjoint union of $\transitions{Q_1,\fn{f}}$ and  $\transitions{Q_2,\fn{f}}$, while redefining $\lout{Q_1,\fn{f}}$ to be $\lin{Q_2,\fn{f}}$ and putting $\lin{P,\fn{f}}:=\lin{Q_1,\fn{f}}$ and $\lout{P,\fn{f}}:=\lout{Q_2,\fn{f}}$.
\item {\em If-Branches.}
For ${P}{=}{\textbf{if $\phi$ then }Q_1 \textbf{ else } Q_2 \textbf{ fi}}$ with $\phi$ being a propositional arithmetic predicate, we first add two new branching labels $\lin{P},\lout{P}$, then take the disjoint union of $\transitions{Q_1,\fn{f}}$ and $\transitions{Q_2,\fn{f}}$ while simultaneously identifying both $\lout{Q_1,\fn{f}}$ and $\lout{Q_2,\fn{f}}$ with $\lout{P,\fn{f}}$, and finally obtain $\transitions{P,\fn{f}}$ by adding two triples $(\lin{P,\fn{f}},\phi,\lin{Q_1,\fn{f}})$ and $(\lin{P,\fn{f}},\neg\phi,\lin{Q_2,\fn{f}})$ into the disjoint union of $\transitions{Q_1,\fn{f}}$ and $\transitions{Q_2,\fn{f}}$.
\item {\em While-Loops.}
For ${P}{=}{\mbox{ \textbf{while} } \phi \mbox{ \textbf{do} }Q \mbox{ \textbf{od}}}$,
we add a new branching label $\lout{P,\fn{f}}$ as a terminal label and obtain $\transitions{P,\fn{f}}$ by adding triples $(\lout{Q,\fn{f}},\phi,\lin{Q,\fn{f}})$ and $(\lout{Q,\fn{f}},\neg\phi,\lout{P,\fn{f}})$ into $\transitions{Q,\fn{f}}$, and define $\lin{P,\fn{f}}:=\lout{Q,\fn{f}}$.
\item {\em Demonic Nondeterminism.} For ${P}{=}{\textbf{if $\star$ then }Q_1 \textbf{ else } Q_2 \textbf{ fi}}$,
we first add a new demonic label $\lin{P}$ and a new branching labels $\lout{P}$, then take the disjoint union of $\transitions{Q_1,\fn{f}}$ and $\transitions{Q_2,\fn{f}}$ while simultaneously identifying both $\lout{Q_1,\fn{f}}$ and $\lout{Q_2,\fn{f}}$ with $\lout{P,\fn{f}}$, and finally obtain $\transitions{P,\fn{f}}$ by adding two triples $(\lin{P,\fn{f}},\star,\lin{Q_1,\fn{f}})$ and $(\lin{P,\fn{f}},\star,\lin{Q_2,\fn{f}})$ into the disjoint union of $\transitions{Q_1,\fn{f}}$ and $\transitions{Q_2,\fn{f}}$.
\end{compactenum}

\section{Definition for Reachability}

\begin{definition}[Reachability]\label{def:reachability}
Let $\fn{f}^*$ be a function name and $\phi^*$ be a propositional arithmetic predicate over $\pvars{f^*}$.
A configuration $w$ is \emph{reachable} w.r.t $\fn{f}^*,\phi^*$ if there exist a scheduler $\pi$ and a
stack element $(\fn{f}^*,\lin{\fn{f}^*},\nu)$ such that (i) $\nu\models\phi^*$ and (ii)
$w$ appears in the run $\rho((\fn{f}^*,\lin{\fn{f}^*},\nu),\pi)$.
A stack element $\mathfrak{c}$ is \emph{reachable} w.r.t $\fn{f}^*,\phi^*$
if there exists a configuration $w$ reachable w.r.t $\fn{f}^*,\phi^*$ such that $w=\mathfrak{c}\cdot w'$ for some configuration $w'$.
\end{definition}

\section{Proof for Theorem~\ref{thm:mfunc}: Soundness}\label{app:thmmfuncsoundness}

\noindent\textbf{Theorem~\ref{thm:mfunc}.}
{\em (Soundness).}
For all measure functions $g$ w.r.t $\fn{f}^*,\phi^*$, it holds that for all valuations $\nu\in\Aval{\fn{f}^*}$ such that $\nu\models\phi^*$, we have $\overline{T}(\fn{f}^*,\lin{\fn{f}^*},\nu)\le g(\fn{f}^*,\lin{\fn{f}^*},\nu)$.
\begin{proof}
Define the function $h$ from the set of configurations into $[0,\infty]$ as follows:
\[
h(w):=\sum_{k=0}^n g(\fn{f}_k,\loc_k,\nu_k)\mbox{ for }w= \left\{(\fn{f}_k,\loc_k,\nu_k)\right\}_{0\le k\le n}
\]
where $h(\varepsilon):=0$.
We show that $h$ can be deemed as a ranking function over the set of reachable configurations w.r.t $\fn{f}^*,\phi^*$.

Let $\nu\in\Aval{f^*}$ be any valuation such that $\nu\models\phi^*$ and $\pi$ be any scheduler for $P$.
Moreover, let $\rho\left((\fn{f}^*,\lin{\fn{f}^*}, \nu),\pi\right)=\{w_n\}_{n\in\mathbb{N}_0}$.
Since the case $g(\fn{f}^*,\lin{\fn{f}^*},\nu)=\infty$ is trivial, we only consider the case $g(\fn{f}^*,\lin{\fn{f}^*},\nu)<\infty$.

By Definition~\ref{def:reachability}, every $w_n$ is reachable w.r.t $\fn{f}^*,\phi^*$.
Hence, by Definition~\ref{def:mfunc}, (a) for all $n\in\Nset_0$, if $w_n\ne\varepsilon$ then $h(w_n)\ge 1$, and $h(w_n)=0$ otherwise.
Furthermore, by Definition~\ref{def:mfunc} and our semantics,% (cf. below Definition~\ref{def:scheduler}),
one can easily verify that (b) for all $n\in\Nset_0$ , if $w_n\ne\varepsilon$ then $h(w_{n+1})\le h(w_{n})-1$.

To see (b), consider for example the function-call case where $w_n=(\fn{f},\loc,\nu)\cdot w'$, $\loc\in\flocs{f}$ and $(\loc,(\fn{g},f),\loc')$ is the only triple in $\transitions{\fn{f}}$ with source label $\loc$.
Then by our semantics, $h(w_n)=g(\fn{f},\loc,\nu)+h(w')$ and $h(w_{n+1})=g(\fn{g},\lin{\fn{g}},f(\nu))+g(\fn{f},\loc',\nu)+h(w')$.
Thus by C3, $h(w_{n+1})+1\le h(w_n)$.
The other cases (namely assignment, branching and nondeterminism) can be verified similarly through a direct investigation of our semantics and an application of C2, C4 or C5.

From (a), (b) and the fact that $h(w_0)=g(\fn{f}^*,\lin{\fn{f}^*}, \nu)<\infty$, one has that $m:=T\left((\fn{f}^*,\lin{\fn{f}^*}, \nu),\pi\right)<\infty$ (or otherwise (a) and (b) cannot hold simultaneously).
Furthermore, from an easy inductive proof based on (b), one has that $h(w_m)\le h(w_{0})-m$.
Together with $h(w_m)=0$ (from (a)), one obtains that
\[
T\left((\fn{f}^*,\lin{\fn{f}^*}, \nu),\pi\right)=m\le h(w_{0})=g(\fn{f}^*,\lin{\fn{f}^*},\nu)\enskip.
\]
Hence, $\overline{T}(\fn{f}^*,\lin{\fn{f}^*}, \nu)\le g(\fn{f}^*,\lin{\fn{f}^*},\nu)$\enskip.\qed
\end{proof}

\section{Proof for Theorem~\ref{thm:mfunc}: Completeness}\label{app:thmmfunc}

We recall that for each $\fn{f}\in\fnames$ and $\loc\in\locs{f}\backslash\{\lout{\fn{f}}\}$,
we define $D_{\fn{f},\loc}$ to be the set of all valuations $\nu$ w.r.t $\fn{f}$
such that $(\fn{f},\loc,\nu)$ is reachable w.r.t $\fn{f}^*,\phi^*$.

We introduce some notations for schedulers. Let $\pi$ be a scheduler for $P$ and $\mathfrak{c}$ be a non-terminal stack element.
We define $\mathrm{post}(\pi,\mathfrak{c})$ to be the scheduler such that for any non-empty finite word of configurations $w_0\dots w_n$ with $w_n$ being non-deterministic,
\[
\mathrm{post}(\pi,\mathfrak{c})(w_0\dots w_n)=\pi(\mathfrak{c}\cdot w_0\dots w_n)\enskip.
\]
In the case that $\mathfrak{c}$ is not non-deterministic, we define $\mathrm{pre}(\pi,\mathfrak{c})$ to be one of the  schedulers such that for any non-empty finite word of configurations $w_0\dots w_n$ with $w_n$ being non-deterministic, \[
\mathrm{pre}(\pi,\mathfrak{c})(\mathfrak{c}\cdot w_0\dots w_n)=\pi(w_0\dots w_n)\enskip;
\]
the decisions of $\mathrm{pre}(\pi,\mathfrak{c})$ at finite words not starting with $\mathfrak{c}$ will be irrelevant.
In the case that $\mathfrak{c}=(\fn{f},\loc,\nu)$ is non-deterministic, for any given $\loc'\in\locs{f}$ such that $(\loc,\star,\loc')\in\transitions{\fn{f}}$,
we define $\mathrm{pre}(\pi,(\mathfrak{c},\loc'))$ to be one of the schedulers such that (i)
\[
\mathrm{pre}(\pi,(\mathfrak{c},\loc'))(\mathfrak{c})=\loc'
\]
and
(ii) for any non-empty finite word of configurations $w_0\dots w_n$ with $w_n$ being non-deterministic,
\[
\mathrm{pre}(\pi,(\mathfrak{c},\loc'))(\mathfrak{c}\cdot w_0\dots w_n)=\pi(w_0\dots w_n)\enskip;
\]
again, the decisions of $\mathrm{pre}(\pi,(\mathfrak{c},\loc'))$ at finite words not starting with $\mathfrak{c}$ will be irrelevant.

\noindent\textbf{Theorem~\ref{thm:mfunc}.}
{\em (Completeness).}
$\overline{T}$ is a measure function w.r.t $\fn{f}^*,\phi^*$.
\begin{proof}
We prove that $\overline{T}$ satisfies the conditions C1--C5 in Definition~\ref{def:mfunc} with equality.
Condition C1 follows directly from the definition. Below we prove that C2--C5 hold.

Let $\mathfrak{c}=(\fn{f},\loc,\nu)$ be a non-terminal stack element such that $\nu\in D_{\fn{f},\loc}$.
Below we clarify several cases on $\mathfrak{c}$.

\paragraph{Case 1 (cf. C2):} $\loc\in\alocs{f}\backslash\{\lout{\fn{f}}\}$ and $(\loc,f,\loc')$ is the only triple in $\transitions{\fn{f}}$ with source label $\loc$. Consider any scheduler $\pi$ for $W$.
By our semantics, one can prove easily that
\begin{compactitem}
\item $T\left((\fn{f},\loc,\nu),\pi\right)= 1+T\left(\left(\fn{f},\loc',f(\nu)\right),\mathrm{post}(\pi,\mathfrak{c})\right)$, and
\item $T\left((\fn{f},\loc,\nu),\mathrm{pre}(\pi,\mathfrak{c})\right)= 1+T\left(\left(\fn{f},\loc',f(\nu)\right),\pi\right)$.
\end{compactitem}
Since $\pi$ is arbitrarily, by taking the supremum at the both sides of the equalities above one has that
\[
\overline{T}(\fn{f},\loc',f(\nu))+1= \overline{T}(\fn{f},\loc,\nu)\enskip.
\]

\paragraph{Case 2 (cf. C3):} $\loc\in\flocs{f}\backslash\{\lout{\fn{f}}\}$ and $(\loc,(\fn{g},f),\loc')$ is the only triple in $\transitions{\fn{f}}$ with source label $\loc$.

We first consider the case $\overline{T}(\fn{g},\lin{\fn{g}},\nu)=\infty$, meaning that schedulers $\pi$ can make $T\left((\fn{g},\lin{\fn{g}},\nu),\pi\right)$ arbitrarily large. Since for any scheduler $\pi$, $T\left((\fn{f},\loc,\nu),\mathrm{pre}(\pi,\mathfrak{c})\right)\ge 1+T\left((\fn{g},\lin{\fn{g}},\nu),\pi\right)$,
one has that
\[
\overline{T}(\fn{f},\loc,\nu)=1+\overline{T}(\fn{g},\lin{\fn{g}},f(\nu))+\overline{T}(\fn{f},\loc',\nu)(=\infty)\enskip.
\]
Then we consider the case that $\overline{T}(\fn{g},\lin{\fn{g}},\nu)<\infty$. On one hand, let $\pi$ be any scheduler for $W$.
Since $T\left((\fn{g},\lin{\fn{g}},\nu),\mathrm{post}(\pi,\mathfrak{c})\right)<\infty$, one can find a (unique) finite prefix $\gamma$ of the run $\rho\left((\fn{g},\lin{\fn{g}},\nu),\mathrm{post}(\pi,\mathfrak{c})\right)$ consisting of only non-empty (i.e. not $\varepsilon$) configurations which describes the finite execution of the function call $\fn{g}$ under the scheduler $\mathrm{post}(\pi,\mathfrak{c})$.
By our semantics, one can prove easily that
\begin{align*} T\left((\fn{f},\loc,\nu),\pi\right)=1+T\left(\left(\fn{g},\lin{\fn{g}},f(\nu)\right),\mathrm{post}(\pi,\mathfrak{c})\right) +T\left((\fn{f},\loc',\nu),\pi'\right)
\end{align*}
where $\pi'$ is the scheduler such that for any non-empty finite word of configurations $w_0\dots w_n$ with $w_n$ being non-deterministic, $\pi'(w_0\dots w_n)=\pi(\mathfrak{c}\cdot \gamma \cdot w_0\dots w_n)$.
It follows from the arbitrary choice of $\pi$ that
\[
\overline{T}(\fn{f},\loc,\nu)\le 1+\overline{T}(\fn{g},\lin{\fn{g}},f(\nu))+\overline{T}(\fn{f},\loc',\nu)\enskip.
\]
On the other hand, let $\pi_1,\pi_2$ be any two schedulers for $P$.
Since $\overline{T}\left(\fn{g},\lin{\fn{g}},\nu\right)<\infty$, one can find a (unique) finite prefix $\gamma$ of the run $\rho\left((\fn{g},\lin{\fn{g}},\nu),\pi_1\right)$ consisting of only non-empty (i.e. not $\varepsilon$) configurations which describes the finite execution of the function call $\fn{g}$ under the scheduler $\pi_1$.
Then
\[ T\left((\fn{f},\loc,\nu),\pi\right)=1+T\left(\left(\fn{g},\lin{\fn{g}},f(\nu)\right),\pi_1\right)+T\left((\fn{f},\loc',\nu),\pi_2\right)
\]
where $\pi$ is one of the schedulers such that (i) $\pi(\mathfrak{c}\cdot\beta)=\pi_1(\beta)$ whenever $\beta$ ends at a non-deterministic configuration and is a prefix of $\gamma$ (including $\gamma$) and (ii) $\pi(\mathfrak{c}\cdot\gamma\cdot \beta)=\pi_2(\beta)$ whenever $\beta$ is non-empty and ends at a non-deterministic configuration.
Thus, by the arbitrary choice of $\pi_1,\pi_2$, one has that
\[
1+\overline{T}\left(\fn{g},\lin{\fn{g}},f(\nu)\right)+\overline{T}\left(\fn{f},\loc',\nu\right)\le \overline{T}\left(\fn{f},\loc,\nu\right)\enskip.
\]
In either case, we have
\[
1+\overline{T}\left(\fn{g},\lin{\fn{g}},f(\nu)\right)+\overline{T}\left(\fn{f},\loc',\nu\right)= \overline{T}\left(\fn{f},\loc,\nu\right)\enskip.
\]

\paragraph{Case 3 (cf. C4):} $\loc\in\clocs{f}\backslash\{\lout{\fn{f}}\}$ and $(\loc, \phi,\loc_1),(\loc, \neg\phi,\loc_2)$ are namely two triples in $\transitions{\fn{f}}$ with source label $\loc$.
By our semantics, one can easily prove that
\begin{align*}
T\left((\fn{f},\loc,\nu),\pi\right)=1+\mathbf{1}_{\nu\models\phi}\cdot T\left((\fn{f},\loc_1,\nu),\mathrm{post}(\pi,\mathfrak{c})\right)+\mathbf{1}_{\nu\models\neg\phi}\cdot T\left((\fn{f},\loc_2,\nu),\mathrm{post}(\pi,\mathfrak{c})\right)
\end{align*}
for any scheduler $\pi$ for $W$, and
\begin{align*}
T\left((\fn{f},\loc,\nu),\pi\right)=1+\mathbf{1}_{\nu\models\phi}\cdot T\left((\fn{f},\loc_1,\nu),\pi_1\right)+\mathbf{1}_{\nu\models\neg\phi}\cdot T\left((\fn{f},\loc_2,\nu),\pi_2\right)
\end{align*}
for any schedulers $\pi_1,\pi_2$ for $P$,
where $\pi$ is either $\mathrm{pre}\left(\pi_1,\mathfrak{c}\right)$ if $\nu\models\phi$, or $\mathrm{pre}\left(\pi_2,\mathfrak{c}\right)$ if $\nu\models\neg\phi$.
By taking the supremum at the both sides of the equalities above, one has
\begin{align*}
\overline{T}\left(\fn{f},\loc,\nu\right)=1+\mathbf{1}_{\nu\models\phi}\cdot \overline{T}\left(\fn{f},\loc_1,\nu\right)+\mathbf{1}_{\nu\models\neg\phi}\cdot \overline{T}\left(\fn{f},\loc_2,\nu\right)\enskip.
\end{align*}

\paragraph{Case 4 (cf. C5):} $\loc\in\dlocs{f}\backslash\{\lout{\fn{f}}\}$ and $(\loc, \star,\loc_1),(\loc, \star,\loc_2)$ are namely two triples in $\transitions{\fn{f}}$ with source label $\loc$.
By our semantics, one easily proves that
\begin{align*}
T\left((\fn{f},\loc,\nu),\pi\right)=1+\mathbf{1}_{\pi(\mathfrak{c})=\loc_1}\cdot T\left((\fn{f},\loc_1,\nu), \mathrm{post}(\pi,\mathfrak{c})\right)+\mathbf{1}_{\pi(\mathfrak{c})=\loc_2}\cdot T\left((\fn{f},\loc_2,\nu), \mathrm{post}(\pi,\mathfrak{c})\right)\enskip.
\end{align*}
for any scheduler $\pi$ (for $P$), and
\begin{align*}
T\left((\fn{f},\loc,\nu),\pi\right)=1+\max\{T\left((\fn{f},\loc_1,\nu), \pi_1\right), T\left((\fn{f},\loc_2,\nu), \pi_2\right)\}\enskip.
\end{align*}
for any schedulers $\pi,\pi_1,\pi_2$ such that $\pi$ is either $\mathrm{pre}(\pi_1,(\mathfrak{c},\loc_1))$ if
$T\left((\fn{f},\loc_1,\nu), \pi_1\right)\ge T\left((\fn{f},\loc_2,\nu), \pi_2\right)$, or $\mathrm{pre}(\pi_2,(\mathfrak{c},\loc_2))$ if otherwise.
By taking the supremum at the both sides of the equalities above, one obtains
\begin{align*}
\overline{T}\left(\fn{f},\loc,\nu\right)=1+\max\left\{\overline{T}\left(\fn{f},\loc_1,\nu\right), \overline{T}\left(\fn{f},\loc_2,\nu\right)\right\}\enskip.
\end{align*}\qed
\end{proof}

\section{Omitted Details in Significant Label Construction}\label{app:propmfunc}

\paragraph{Expansion Construction (from $g$ to $\widehat{g}$)} In the following, we illustrate how one can obtain a measure function from a function defined only on significant labels.
Let $g$ be a function from
$\left\{(\fn{f},\loc,\nu)\mid \fn{f}\in\fnames, \loc\in\slocs{f},  \nu\in\Aval{f}\right\}$
into $[0,\infty]$.
The function \emph{expanded from} $g$, denoted by $\widehat{g}$, is a function from the set of all stack elements into $[0,\infty]$ inductively defined through the procedure described as follows.
\begin{compactenum}
\item {\em Initial Step.} If $\loc\in\slocs{f}$, then $\widehat{g}(\fn{f},\loc,\nu):=g(\fn{f},\loc,\nu)$\enskip.
\item {\em Termination.} If $\loc=\lout{\fn{f}}$, then $\widehat{g}(\fn{f},\loc,\nu):=0$\enskip.
\item {\em Assignment.} If $\loc\in\alocs{f}\backslash\slocs{f}$ with $(\loc,h,\loc')$ being the only triple in $\transitions{\fn{f}}$ and $\widehat{g}(\fn{f},\loc',\centerdot)$ is already defined, then
$\widehat{g}(\fn{f},\loc,\nu):=1+\widehat{g}(\fn{f},\loc',h(\nu))$.

\item {\em Branching.} If $\loc\in\clocs{f}\backslash\slocs{f}$ with $(\loc,\phi,\loc_1)$, $(\loc,\neg\phi,\loc_2)$ being namely the two triples in $\transitions{\fn{f}}$ and both $\widehat{g}(\fn{f},\loc_1,\centerdot)$ and $\widehat{g}(\fn{f},\loc_2,\centerdot)$ is already defined, then
$\widehat{g}(\fn{f},\loc,\nu):=\mathbf{1}_{\nu\models\phi}\cdot\widehat{g}(\fn{f},\loc_1,\nu)+\mathbf{1}_{\nu\models\neg\phi}\cdot\widehat{g}(\fn{f},\loc_2,\nu)+1$;
\item {\em Call.} If $\loc\in \flocs{f}\backslash\slocs{f}$ with $(\loc,(\fn{g},h),\loc')$ being the only triple in $\transitions{\fn{f}}$ and $\widehat{g}(\fn{f},\loc',\centerdot)$ is already defined, then
$\widehat{g}(\fn{f},\loc,\nu):=g(\fn{g},\lin{\fn{g}}, h(\nu))+\widehat{g}(\fn{f},\loc',\nu)+1$.
\item {\em Non-determinism.} If $\loc\in\dlocs{f}\backslash\slocs{f}$ with $(\loc,\star,\loc_1)$, $(\loc,\star,\loc_2)$ being namely the two triples in $\transitions{\fn{f}}$ with source label $\loc$ and both $\widehat{g}(\fn{f},\loc_1,\centerdot)$ and $\widehat{g}(\fn{f},\loc_2,\centerdot)$ is already defined, then
$\widehat{g}(\fn{f},\loc,\nu):=\max\left\{\widehat{g}(\fn{f},\loc_1,\nu), \widehat{g}(\fn{f},\loc_2,\nu)\right\}+1$.
\end{compactenum}

Note that in the previous expansion construction, we have not technically shown that $\widehat{g}$ is defined over all stack elements.
The following technical lemma shows that the function $\widehat{g}$ is indeed well-defined.

%%\smallskip
\begin{lemma}\label{lemm:expansion}
For each function $g$ from $\left\{(\fn{f},\loc,\nu)\mid \fn{f}\in\fnames, \loc\in\slocs{f},  \nu\in\Aval{f}\right\}$
into $[0,\infty]$, the function $\widehat{g}$ is well-defined.
\end{lemma}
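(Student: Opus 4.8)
The plan is to show the inductive procedure defining $\widehat{g}$ terminates after finitely many unfoldings at every stack element. First note that in the clauses ``Assignment'', ``Branching'', ``Call'' and ``Non-determinism'' the value $\widehat{g}(\fn{f},\loc,\nu)$ is defined in terms of $\widehat{g}(\fn{f},\loc',\centerdot)$ only for \emph{target} labels $\loc'$ of transitions of $\transitions{\fn{f}}$ out of $\loc$, i.e.\ within the \emph{same} function name $\fn{f}$; the sole reference to another function name is the summand $g(\fn{g},\lin{\fn{g}},h(\nu))$ of the ``Call'' clause, which uses the \emph{given} function $g$ at a significant label and hence creates no recursion. So it suffices, for each fixed $\fn{f}\in\fnames$, to prove that the binary relation $\loc\succ\loc'$ on $\locs{f}$ --- defined to hold iff both $\loc,\loc'\in\locs{f}\setminus(\slocs{f}\cup\{\lout{\fn{f}}\})$ and $(\loc,\centerdot,\loc')\in\transitions{\fn{f}}$ --- is well founded; the six clauses of the construction are then a definition by well-founded recursion, with ``Initial Step'' and ``Termination'' as base cases and ``Assignment''/``Branching''/``Call''/``Non-determinism'' as the recursive ones (the clauses are exhaustive and pairwise exclusive on stack elements, using the partition of $\locs{f}$ into label types, the fact that $\lout{\fn{f}}\notin\slocs{f}$, and the structural preconditions on which transitions are present at each label).

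Well-foundedness of $\succ$ amounts to saying that the finite directed graph on $\locs{f}\setminus(\slocs{f}\cup\{\lout{\fn{f}}\})$ with the $\succ$-edges is acyclic; since $\lout{\fn{f}}$ has no outgoing transitions, this is equivalent to the statement that \emph{every cycle in the control-flow graph of $\fn{f}$ passes through a significant label}. I would prove this by structural induction on statements, following the construction of $\transitions{P,\fn{f}}$ in Appendix~\ref{app:cfg} and carrying the invariant that in the graph $(\text{labels of }P,\transitions{P,\fn{f}})$ (i) $\lout{P,\fn{f}}$ has no outgoing edge and (ii) every cycle contains the initial label of some while sub-statement of $P$. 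The base cases (assignment, skip, function call) and the cases for if-branches, demonic non-determinism and sequential composition add no new cycle, so (i)--(ii) are preserved by a routine check of the triples and label identifications introduced; the only cycle-creating construct is $\mbox{\textbf{while }}\phi\mbox{ \textbf{do} }Q\mbox{ \textbf{od}}$, whose back-edge $(\lout{Q,\fn{f}},\phi,\lin{Q,\fn{f}})$ makes every new cycle pass through the loop header $\lin{P,\fn{f}}=\lout{Q,\fn{f}}$, and that header is significant by Definition~\ref{def:slocs}. Applying the invariant to the function body $P_{\fn{f}}$ and using $\lin{\fn{f}}\in\slocs{f}$ yields the claim.

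With acyclicity in hand, assign to each $\loc\in\locs{f}\setminus(\slocs{f}\cup\{\lout{\fn{f}}\})$ the rank $\mathrm{rk}(\loc)\in\Nset_0$ equal to the length of the longest $\succ$-path starting at $\loc$, which is finite since the graph is finite and acyclic. I would conclude by induction on $\mathrm{rk}(\loc)$ that $\widehat{g}(\fn{f},\loc,\centerdot)$ is defined: when $\mathrm{rk}(\loc)=0$ every transition target of $\loc$ is either significant or $\lout{\fn{f}}$, so it is handled by the ``Initial Step'' or ``Termination'' base clauses; in general every target $\loc'$ of a transition out of $\loc$ is either a base-clause label or satisfies $\mathrm{rk}(\loc')<\mathrm{rk}(\loc)$, so $\widehat{g}(\fn{f},\loc',\centerdot)$ is already defined and the matching recursive clause defines $\widehat{g}(\fn{f},\loc,\centerdot)$. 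Together with the base clauses this gives totality on all stack elements; determinism of the clauses follows from the typing of labels, so $\widehat{g}$ is well defined.

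The step I expect to be the main obstacle is the structural induction for acyclicity: one has to choose an invariant robust enough to survive the label identifications performed in sequential composition ($\lout{Q_1,\fn{f}}$ merged with $\lin{Q_2,\fn{f}}$) and in the while-loop (where $\lout{Q,\fn{f}}$ simultaneously is the loop header and the entry of the body), and the ``fresh terminal label has no outgoing edge'' clause has to be strong enough to exclude spurious cycles formed when two acyclic fragments are glued. Once the invariant is set up correctly, each inductive case reduces to inspecting the handful of triples added by the construction in Appendix~\ref{app:cfg}.
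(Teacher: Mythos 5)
Your proposal is correct and follows essentially the same route as the paper's proof: both reduce well-definedness to the fact that every cycle in the control-flow graph of a single function body must pass through a while-loop header, hence through a significant label where $\widehat{g}$ is given directly by the Initial Step (and, as you note, the Call clause refers back to $g$ at the callee's entry, so no inter-procedural recursion arises). The paper phrases this as a proof by contradiction --- an infinite chain of undefined labels must revisit a label by finiteness of $\locs{f}$, and the resulting cycle must contain a loop header --- whereas you cast the same fact as acyclicity of the transition relation on non-significant labels and do well-founded recursion on a rank; you also sketch the structural induction over the CFG construction that the paper compresses into the phrase ``it follows from our semantics''.
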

\begin{proof}
Suppose that $\widehat{g}$ is not well-defined, i.e., there exists some $\fn{f}\in\fnames$ and $\loc_0\in\locs{f}$ such that $\widehat{g}(\fn{f},\loc_0,\centerdot)$ remains undefined.
Then by the inductive procedure, there exists a triple $(\loc_0,\alpha,\loc_1)$ in $\transitions{\fn{f}}$ such that $\widehat{g}(\fn{f},\loc_1,\centerdot)$ remains undefined.
With the same reasoning, one can inductively construct an infinite sequence $\{\loc_j\}_{j\in\Nset_0}$ such that each $\widehat{g}(\fn{f},\loc_j,\centerdot)$ remains undefined.
Since $\locs{f}$ is finite, there exist $j_1,j_2$ such that $j_1\ne j_2$ and $\loc_{j_1}=\loc_{j_2}$. It follows from our semantics that there exists $j^*$ such that $j_1\le j^*\le j_2$ and $\loc_{j^*}$ corresponds to the initial label of a while-loop in $W$.
Contradiction to the fact that $\loc_{j^*}\in\slocs{f}$.\qed
\end{proof}

\begin{proposition}\label{prop:mfunc}
Let $g$ be a function from
$\left\{(\fn{f},\loc,\nu)\mid \fn{f}\in\fnames, \loc\in\slocs{f},  \nu\in\Aval{f}\right\}$
into $[0,\infty]$.
Let $I$ be an invariant w.r.t $\fn{f}^*,\phi^*$.
Consider that for all stack elements $(\fn{f},\loc,\nu)$ such that $\loc\in\slocs{f}$ and  $\nu\models I(\fn{f},\loc)$,
the following conditions hold:
\begin{compactitem}
\item \textbf{C2':} if $\loc\in\alocs{f}$ and $(\loc,f,\loc')$ is the only triple in $\transitions{\fn{f}}$ with source label $\loc$, then  $\widehat{g}(\fn{f},\loc',f(\nu))+1\le \widehat{g}(\fn{f},\loc,\nu)$;
\item \textbf{C3':} if $\loc\in\flocs{f}$ and $(\loc,(\fn{g},f),\loc')$ is the only triple in $\transitions{\fn{f}}$ with source label $\loc$, then $1+\widehat{g}(\fn{g},\lin{\fn{g}},f(\nu))+\widehat{g}(\fn{f},\loc',\nu)\le \widehat{g}(\fn{f},\loc,\nu)$;
\item \textbf{C4':} if $\loc\in\clocs{f}$ and $(\loc, \phi,\loc_1),(\loc, \neg\phi,\loc_2)$ are namely two triples in $\transitions{\fn{f}}$ with source label $\loc$, then
$\mathbf{1}_{\nu\models\phi}\cdot \widehat{g}(\fn{f},\loc_1,\nu)+\mathbf{1}_{\nu\models\neg\phi}\cdot \widehat{g}(\fn{f},\loc_2,\nu)+1\le \widehat{g}(\fn{f},\loc,\nu)$\enskip;
\item \textbf{C5':} if $\loc\in\dlocs{f}$ and $(\loc, \star,\loc_1),(\loc, \star,\loc_2)$ are namely two triples in $\transitions{\fn{f}}$ with source label $\loc$, then
$\max\{\widehat{g}(\fn{f},\loc_1,\nu), \widehat{g}(\fn{f},\loc_2,\nu)\}+1\le \widehat{g}(\fn{f},\loc,\nu)$.
\end{compactitem}
Then $\widehat{g}$ is a measure function w.r.t $\fn{f}^*,\phi^*$.
\end{proposition}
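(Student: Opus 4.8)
The plan is to verify directly that $\widehat{g}$ satisfies conditions \textbf{C1}--\textbf{C5} of Definition~\ref{def:mfunc} at every stack element, case-splitting on whether the label in question is significant. First I would recall that by Lemma~\ref{lemm:expansion} the function $\widehat{g}$ is well-defined on all stack elements, and that an immediate induction along the clauses of the expansion construction shows it has codomain $[0,\infty]$ (each clause either copies a value of $g\in[0,\infty]$, adds $1$, takes a maximum, or forms an $\mathbf{1}_{(\cdot)}$-weighted sum, all of which preserve non-negativity). This settles the non-negativity requirement and, via the \emph{Termination} clause, also \textbf{C1}: if $\loc=\lout{\fn{f}}$ then $\widehat{g}(\fn{f},\loc,\nu)=0$.

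The second part treats non-significant non-terminal labels $\loc\in\locs{f}\setminus(\slocs{f}\cup\{\lout{\fn{f}}\})$. The point here is that the expansion construction defines $\widehat{g}(\fn{f},\loc,\nu)$, for \emph{every} $\nu\in\Aval{f}$, by exactly the equality version of the clause among \textbf{C2}--\textbf{C5} that matches the type of $\loc$, since the partition of $\locs{f}$ into $\alocs{f},\flocs{f},\clocs{f},\dlocs{f}$ lines up one-to-one with the \emph{Assignment}, \emph{Call}, \emph{Branching}, and \emph{Non-determinism} clauses of the construction. Hence the matching inequality holds with equality at such $\loc$, in particular for $\nu\in D_{\fn{f},\loc}$, while \textbf{C1} is vacuous there since $\loc\ne\lout{\fn{f}}$.

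The third part treats significant non-terminal labels $\loc\in\slocs{f}\setminus\{\lout{\fn{f}}\}$, and this is where the hypothesis of the proposition is used. Fix such a $\loc$ and a $\nu\in D_{\fn{f},\loc}$. Since $\inv$ is an invariant w.r.t.\ $\fn{f}^*,\phi^*$ (Definition~\ref{def:inv}) and $(\fn{f},\loc,\nu)$ is reachable, we have $\nu\models I(\fn{f},\loc)$; thus the premise of whichever of \textbf{C2'}--\textbf{C5'} applies to the type of $\loc$ is met, and that premise yields precisely the inequality \textbf{C2}, \textbf{C3}, \textbf{C4}, or \textbf{C5} at $(\fn{f},\loc,\nu)$. (Note that the value-passing target $\lin{\fn{g}}$ appearing in \textbf{C3'} is itself significant, so $\widehat{g}(\fn{g},\lin{\fn{g}},h(\nu))$ coincides with $g(\fn{g},\lin{\fn{g}},h(\nu))$, consistently with the \emph{Initial Step} of the construction.) Combining the three parts gives \textbf{C1}--\textbf{C5} for all stack elements, so $\widehat{g}$ is a measure function w.r.t.\ $\fn{f}^*,\phi^*$.

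I expect the main obstacle to be bookkeeping rather than any deep argument: one must make sure that the case split of the expansion construction aligns, clause by clause and label-type by label-type, with conditions \textbf{C1}--\textbf{C5}, and that the sole place an extra hypothesis is needed — namely at significant labels — is correctly bridged by the invariant inclusion $D_{\fn{f},\loc}\subseteq\{\nu\mid\nu\models I(\fn{f},\loc)\}$. The subtlety that a significant label may carry any of the four label types (so all of \textbf{C2'}--\textbf{C5'} can be invoked, not just the function-entry case) also has to be tracked carefully.
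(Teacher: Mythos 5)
Your proof is correct and follows essentially the same route as the paper's, which is only a three-point sketch: (i) $D_{\fn{f},\loc}$ is contained in the satisfaction set of $I(\fn{f},\loc)$, so C2'--C5' yield C2--C5 at significant labels; (ii) at non-significant labels the expansion construction enforces the equality versions of C1--C5 by definition; and (iii) well-definedness comes from Lemma~\ref{lemm:expansion}. Your write-up just makes these three points explicit (including the correct observation that the \emph{Call} clause uses $g(\fn{g},\lin{\fn{g}},\cdot)=\widehat{g}(\fn{g},\lin{\fn{g}},\cdot)$ via the \emph{Initial Step}), so nothing further is needed.
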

\begin{proof}
The proof follows directly from the fact that (i) all valuations in $D_{\fn{f},\loc}$ satisfy $I(\fn{f},\loc)$, for all $\fn{f}\in\fnames$ and $\loc\in\locs{f}\backslash\{\lout{\fn{f}}\}$, (ii) C2'-C5' directly implies C2-C5 for $\loc\in\slocs{f}$ and
(iii) C1-C5 are automatically satisfied for $\loc\not\in\slocs{f}$ by the expansion construction of
significant labels.
\qed
\end{proof}

\section{Omitted Details for Sect.~\ref{sect:synalg}}\label{app:synalgo}

In this section, we present the omitted details on our algorithm for synthesizing measure functions.
As mentioned before, the synthesis algorithm is designed to synthesize one function over valuations at each function name and appropriate significant label, so that conditions C2'-C5' in Proposition~\ref{prop:mfunc} are fulfilled.

Below we fix an input recursive program $P$ with its CFG taking the form
($\dag$), an input invariant $I$ in disjunctive normal form and an input pair of parameters $(d,\mathrm{op},r,k)$.
We demonstrate our algorithm step in step as follows.

\subsection{Step~1 of \synalgo}
All details are presented in the main article.

\subsection{Step~2 of \synalgo}

\paragraph*{Computation of $\widehat{\llbracket\eta\rrbracket}$}

In the computation, we use the fact that for real-valued functions $\{f_i\}_{1\le i\le m}$, $\{g_j\}_{1\le j\le n}$ and $h$, it holds that
\[
\max{{\{f_i\}}_{i}}+\max{{\{g_j\}}_{j}}=\max\{f_i+g_j\}_{i,j}
\]
and
\[
h\cdot \max\{f_i\}_i=\max\{h\cdot f_i\}_i
\]
provided that $h$ is everywhere non-negative,
where the maximum function over a finite set of functions is defined in pointwise fashion.
Moreover, we use the facts that (i)
\[
\mathbf{1}_{\nu\models\phi_1}\cdot\mathbf{1}_{\nu\models\phi_2}=\mathbf{1}_{\nu\models\phi_1\wedge\phi_2}
\]
for all propositional arithmetic predicates $\phi_1,\phi_2$ and valuations $\nu$, and (ii)
\[
\left(\sum_{i=1}^{m}\mathbf{1}_{\phi_i}\cdot g_i\right)+\left(\sum_{j=1}^n\mathbf{1}_{\psi_j}\cdot h_j\right)=\sum_{i=1}^{m}\sum_{j=1}^n \mathbf{1}_{\phi_i\wedge\psi_j}\cdot (h_i+g_j)
\]
provided that
(a) $\bigvee_i\phi_i,\bigvee_j\psi_j$ are both tautology and (b) $\phi_{i_1}\wedge\phi_{i_2},\psi_{j_1}\wedge\psi_{j_2}$ are both unsatisfiable whenever $i_1\ne i_2$ and $j_1\ne j_2$, and (iii) propositional arithmetic predicates are closed under substitution of expressions in $\langle\mathit{expr}\rangle$ for scalar variables.

\subsection{Step~3 of \synalgo}

\paragraph*{Establishment of Constraint Triples.}
Based on $\widehat{\llbracket\eta\rrbracket}$, the algorithm generates constraint triples at each significant label, then group all generated constraint triples together in a conjunctive way.

Let $\loc$ be any significant label at any function name $\fn{f}$, the algorithm generates constraint triples at $\fn{f},\loc$ as follows.
W.l.o.g, let $I(\fn{f},\loc)=\bigvee_{l}\Psi_l$ where each $\Psi_l$ is a conjunction of atomic formulae of the form $\mathfrak{e}\ge 0$.
The algorithm first generate constraint triples related to non-negativity of measure functions.
\begin{compactitem}
\item \textbf{Non-negativity}: The algorithm generates the collection of constraint triples
\[
\left\{\left(\fn{f},\Psi_l, \eta(\fn{f},\loc)\right)\right\}_{l}\enskip.
\]
\end{compactitem}
Then the algorithm generates constraint triples through C2'-C5' as follows.
\begin{compactitem}
\item \textbf{Case 1 (C2')}: $\loc\in\slocs{f}\cap\alocs{f}$ and $(\loc,f,\loc')$ is the sole triple in $\transitions{\fn{f}}$ with source label $\loc$.
As $\widehat{\llbracket\eta\rrbracket}(\fn{f},\loc',\centerdot)$ can be represented in the form (\ref{eq:hatform}), $\widehat{\llbracket\eta\rrbracket}(\fn{f},\loc',f(\centerdot))$ can also be represented by an expression
\[
\max\left\{\sum_{j}\mathbf{1}_{\phi_{1j}}\cdot h_{1j},~\dots,~\sum_{j}\mathbf{1}_{\phi_{mj}}\cdot h_{mj}\right\}
\]
in the form (\ref{eq:hatform}).
Let a disjunctive normal form of each formula $I(\fn{f},\loc)\wedge \phi_{ij}$ be $\bigvee_{l}\Phi_{ij}^l$, where each $\Phi_{ij}^l$ is a conjunction of atomic formulae of the form $\mathfrak{e}'\ge 0$.
Then the algorithm generates the collection of constraint triples
\[
\left\{\left(\fn{f},\Phi_{ij}^l, \eta(\fn{f},\loc)-h_{ij}-1\right)\right\}_{i,j,l}\enskip.
\]
\item \textbf{Case 2 (C3')}: $\loc\in\flocs{f}$ and $(\loc,(\fn{g},f),\loc')$ is the sole triple in $\transitions{\fn{f}}$ with source label $\loc$.
Let $\widehat{\llbracket\eta\rrbracket}\left(\fn{g},\lin{\fn{g}},f(\centerdot)\right)+\widehat{\llbracket\eta\rrbracket}(\fn{f},\loc',\centerdot)$
be represented by the expression
\[
\max\left\{\sum_{j}\mathbf{1}_{\phi_{1j}}\cdot h_{1j},~\dots,~\sum_{j}\mathbf{1}_{\phi_{mj}}\cdot h_{mj}\right\}
\]
in the form (\ref{eq:hatform}).
Let a disjunctive normal form of each formula $I(\fn{f},\loc)\wedge \phi_{ij}$ be $\bigvee_{l}\Phi_{ij}^l$, where each $\Phi_{ij}^l$ is a conjunction of atomic formulae of the form $\mathfrak{e}'\ge 0$.
Then the algorithm generates the collection of constraint triples
\[
\left\{\left(\fn{f},\Phi_{ij}^l, \eta(\fn{f},\loc)-h_{ij}-1\right)\right\}_{i,j,l}\enskip.
\]
\item \textbf{Case 3 (C4')}: $\loc\in\clocs{f}$ and $(\loc, \phi,\loc_1),(\loc, \neg\phi,\loc_2)$ are namely two triples in $\transitions{\fn{f}}$ with source label $\loc$.
Let $h$ be the function (parametric on template variables)
\[
\mathbf{1}_\phi\cdot \widehat{\llbracket\eta\rrbracket}(\fn{f},\loc_1,\centerdot)+\mathbf{1}_{\neg\phi}\cdot \widehat{\llbracket\eta\rrbracket}(\fn{f},\loc_2,\centerdot)\enskip.
\]
Then $h$ can be represented by an expression
\[
\max\left\{\sum_{j}\mathbf{1}_{\phi_{1j}}\cdot h_{1j},~\dots,~\sum_{j}\mathbf{1}_{\phi_{mj}}\cdot h_{mj}\right\}
\]
in the form (\ref{eq:hatform}).
Let a disjunctive normal form of each formula $I(\fn{f},\loc)\wedge \phi_{ij}$  be $\bigvee_{l}\Phi_{ij}^l$, where each $\Phi_{ij}^l$ is a conjunction of atomic formulae of the form $\mathfrak{e}'\ge 0$.
Then the algorithm generates the collection of constraint triples
\[
\left\{\left(\fn{f},\Phi_{ij}^l, \eta(\fn{f},\loc)-h_{ij}-1\right)\right\}_{i,j,l}\enskip.
\]
\item \textbf{Case 4 (C5')} $\loc\in\dlocs{f}$ and $(\loc, \star,\loc_1),(\loc, \star, \loc_2)$ are namely two triples in $\transitions{\fn{f}}$ with source label $\loc$.
Let $h$ be the function (parametric on template variables)
\[
\max\left\{\widehat{\llbracket\eta\rrbracket}(\fn{f},\loc_1,\centerdot), \widehat{\llbracket\eta\rrbracket}(\fn{f},\loc_2,\centerdot)\right\}
\]
otherwise.
Then $h$ can be represented by an expression
\[
\max\left\{\sum_{j}\mathbf{1}_{\phi_{1j}}\cdot h_{1j},~\dots,~\sum_{j}\mathbf{1}_{\phi_{mj}}\cdot h_{mj}\right\}
\]
in the form (\ref{eq:hatform}).
Let a disjunctive normal form of each formula $I(\fn{f},\loc)\wedge \phi_{ij}$  be $\bigvee_{l}\Phi_{ij}^l$.
Then the algorithm generates the collection of constraint triples
\[
\left\{\left(\fn{f},\Phi_{ij}^l, \eta(\fn{f},\loc)-h_{ij}-1\right)\right\}_{i,j,l}\enskip.
\]
\end{compactitem}
After generating the constraint triples for each significant label,
the algorithm group them together in the conjunctive fashion to form a single collection of constraint triples.

\subsection{Step~4 of \synalgo}

All details of Step~4 is in the main text.

%\begin{compactenum}
%\item[9.] {\em Approximation of Floored Expressions.} For each $\mathfrak{e}\in \mathcal{E}_2$ such that $\mathfrak{e}=\lfloor\frac{\mathfrak{e}'}{c}\rfloor$,
%the algorithm either adds into $\Gamma$
%$\widetilde{\mathfrak{e}'}-c\cdot w_{\mathfrak{e}}\mbox{ and }c\cdot w_{\mathfrak{e}}-\widetilde{\mathfrak{e}'}+c-1$
%when $c\ge 1$ which is derived from
%\[
%\frac{\mathfrak{e}'}{c}-\frac{c-1}{c}\le \mathfrak{e}\le \frac{\mathfrak{e}'}{c}\enskip,
%\]
%or adds into $\Gamma$
%$c\cdot w_{\mathfrak{e}}-\widetilde{\mathfrak{e}'}\mbox{ and } \widetilde{\mathfrak{e}'}-c\cdot w_{\mathfrak{e}}-c-1$
%when $c\le -1$ which follows from
%\[
%\frac{\mathfrak{e}'}{c}-\frac{c+1}{c}\le \mathfrak{e}\le \frac{\mathfrak{e}'}{c}\enskip.
%\]
%\end{compactenum}

\vspace{-1em}
\subsection{Step~5 of \synalgo}
\vspace{-0.5em}

All details are presented in the main article.

\smallskip
\noindent\textbf{Theorem~\ref{thm:mainthmsynth}.}
Our algorithm, \synalgo, is a sound
approach for the \recterm\ problem, i.e., if \synalgo\ succeeds to
synthesize a function $g$ on $\left\{(\fn{f},\loc,\nu)\mid \fn{f}\in\fnames, \loc\in\slocs{f},  \nu\in\Aval{f}\right\}$,
then $\widehat{g}$ is a measure function and hence an upper bound on
the pessimistic termination time.
\begin{proof}
The proof follows from the facts that (i) once the logical formulae encoded by the constraint triples (cf. semantics of constraint triples specified in Step 3) are fulfilled by template variables, then $\widehat{\llbracket\eta\rrbracket}$ (obtained in Step 2, with $\eta$ being the template established in Step 1) satisfies the conditions specified in Proposition~\ref{prop:mfunc}, (ii) the variable abstraction (i.e., the linear inequalities) in Step 4 is a sound over-approximation for floored expressions, logarithmic terms and exponentiation terms, and (iii) Handelman's Theorem provides a sound form for positive polynomials over polyhedra.\qed
\end{proof}

\section{Illustration on Merge-Sort}
We now present a step-by-step illustration of our entire method
(from syntax, to semantics, to all the steps of the algorithm) on the classical
Merge-Sort. This illustrates the quite involved aspects
of our approach.

%We illustrate an example of a recursive program together with its labelling, which is an abstraction of the Merge-Sort algorithm that treats guards with array entries as demonic non-determinism and array-assignment statements as skips.

%%\smallskip
\label{ex:mergesort}
\subsection{Program Implementation}
 Figure~\ref{fig:mergesort} represents an implementation for the Merge-Sort algorithm~\cite[Chapter 2]{DBLP:books/daglib/0023376} in our language.
The numbers on the leftmost side are the labels assigned to statements which represent program counters,
where the function $\mathsf{mergesort}$ starts from label $1$ and ends at $7$, and $\mathsf{merge}$ starts from label $8$ and ends at $22$.

In detail, the scalar variable $i$ (resp. $j$) in both the parameter list of $\mathsf{mergesort}$ and that of $\mathsf{merge}$ represents the start (resp. the end) of the array index; the scalar variable $k$ in the parameter list of $\mathsf{merge}$ represents the separating index for merging two sub-arrays between $i,j$; moreover, in $\mathsf{merge}$, the demonic nondeterministic branch (at program counter $12$) abstracts the comparison between array entries and the \textbf{skip}'s
at program counters $13,15,20$ represent corresponding array assignment statements in a real implementation for Merge-Sort.

The replacement of array-relevant operations with either \textbf{skip} or demonic non-determinism preserves worst-case complexity as demonic non-determinism over-approximates conditional branches involving array entries.

\lstset{language=prog}
\lstset{tabsize=3}
\newsavebox{\progmergesort}
\begin{lrbox}{\progmergesort}
\begin{lstlisting}[mathescape]
$\mathsf{mergesort}(i, j)$ {
1:  if $1\le i$ and $i\le j-1$ then
2:      $k:=\lfloor \frac{i+j}{2}\rfloor$;
3:      $\mathsf{mergesort}(i, k)$;
4:      $\mathsf{mergesort}(k+1, j)$;
5:      $\mathsf{merge}(i,j,k)$
6:    else  skip
    fi
7: }
\end{lstlisting}
\end{lrbox}

\lstset{language=prog}
\lstset{tabsize=3}
\newsavebox{\progmerge}
\begin{lrbox}{\progmerge}
\begin{lstlisting}[mathescape]
$\mathsf{merge}(i,j,k)$ {
8:  $m:=i$;  
9:  $n:=k+1$;  
10: $l:=i$;
11:  while $l\le j$ do
12:    if $\star$ then
13:        skip;
14:		   $m:=m+1$
      else
15:        skip;  
16:		   $n:=n+1$
      fi;
17:   $l:=l+1$
    od;
18: $l:=i$;
19: while $l\le j$ do
20:   skip;
21:   $l:=l+1$
    od
22: }
\end{lstlisting}
\end{lrbox}

\begin{figure}
\begin{minipage}{0.50\textwidth}
\centering
\usebox{\progmergesort}
\end{minipage}
\begin{minipage}{0.50\textwidth}
\centering
\usebox{\progmerge}
\end{minipage}
\caption{A program that implements Merge-Sort}
\label{fig:mergesort}
\end{figure}

\subsection{Control Flow Graph and Significant Labels}
To begin the analysis, the algorithm must first obtain the control-flow graph (CFG) of the program. This is depicted in Figures~\ref{tbl:illustrations},
~\ref{fig:cfgmergesort} and ~\ref{fig:cfgmerge}.

 For brevity we define $\phi:= 1 \le i \wedge i \le j-1$, $\psi:=l\le j$, and let $\mbox{\sl id}$ indicate the identity function. The update functions ($f_i$'s and $g_i$'s) are given in Figure~\ref{tbl:illustrations}.
Moreover, in Figure~\ref{tbl:illustrations},
any $\overline{q}$ is the concrete entity held by the scalar variable $q$ under the valuation at runtime, and every function is represented in the form ``$p\leftarrow q$'' meaning ``$q$ assigned to $p$'', where only the relevant variable is shown for assignment functions.

Figure~\ref{fig:cfgmergesort} shows that $\mathsf{mergesort}$ (cf.~Fig.~\ref{fig:mergesort}) starts from the if-branch with guard $\phi$ (label $1$).
Then if $\phi$ is satisfied (i.e., the length of the array is greater than one), the program steps into the recursive step which is composed of labels 2--5; otherwise, the program proceeds to terminal label $7$ through $6$ with nothing done.

Figure~\ref{fig:cfgmerge} shows that $\mathsf{merge}$ (cf. Fig.~\ref{fig:mergesort}) starts from a series of assignments (labels 8--10) and then enters the while-loop (labels 11--17) which does the merging of two sub-arrays; after the while-loop starting from $11$, the program enters the part for copying array-content back (labels 18--21) and finally enters the terminal label $22$.

The types of labels are straightforward. In $\mathsf{mergesort}$, labels $2,6$ are assignment labels, label $1$ is a branching label, labels $3\mbox{--}5$ are call labels. In $\mathsf{merge}$, labels $8\mbox{--}10,13\mbox{--}18,20\mbox{--}21$ are assignment labels, labels $11,19$ are branching labels, and label $12$ is a demonic non-deterministic point. 

\begin{figure}
\begin{minipage}{0.30\textwidth}
\centering
\begin{tabular}{|c|c|c|}
\hline
$i$ & $f_i$ \\
\hline
$1$ & $k\leftarrow \lfloor (\overline{i} + \overline{j})/2\rfloor$  \\
\hline
$2$ & $(i, j)\leftarrow (\overline{i}, \overline{k})$  \\
\hline
$3$ & $(i,j)\leftarrow (\overline{k}+1, \overline{j})$ \\
\hline
$4$ & $(i,j,k)\leftarrow(\overline{i},\overline{j},\overline{k})$   \\
\hline
\hline
$i$ & $g_i$ \\
\hline
1   & $m\leftarrow \overline{i}$  \\
\hline
2   &  $n\leftarrow \overline{k}+1$ \\
\hline
3   &  $l\leftarrow \overline{i}$ \\
\hline
4   &  $m\leftarrow \overline{m}+1$ \\
\hline
5   &  $n\leftarrow \overline{n}+1$ \\
\hline
6   &  $l\leftarrow \overline{l}+1$ \\
\hline
\end{tabular}
\caption{Illustration for Fig.~\ref{fig:cfgmergesort} and Fig.~\ref{fig:cfgmerge}}
\label{tbl:illustrations}
\end{minipage}
\begin{minipage}{0.05\textwidth}
~
\end{minipage}
\begin{minipage}{0.20\textwidth}
\centering
\begin{tikzpicture}[x = 1.5cm]

\node[label] (if)         at (0,0)         {$1$};
\node[label] (body)       at (0,-1)        {$2$};
\node[label] (firsthalf)  at (0,-2)        {$3$};
\node[label] (secondhalf) at (0,-3)        {$4$};
\node[label] (merge)      at (0,-4)        {$5$};
\node[label] (skip)       at (-0.7,-2)       {$6$};
\node[label] (end)        at (0,-5)        {$7$};

\draw[tran] (if) to node[auto, font=\scriptsize] {$\phi$} (body);
\draw[tran] (body) to node[auto, font=\scriptsize] {$f_1$} (firsthalf);

\draw[tran] (firsthalf) to node[auto, font=\scriptsize] {$\left(\mathsf{mergesort},f_2\right)$} (secondhalf);

\draw[tran] (secondhalf) to node[auto, font=\scriptsize] {$\left(\mathsf{mergesort},f_3\right)$} (merge);

\draw[tran] (merge) to node[auto, font=\scriptsize] {$(\mathsf{merge}, f_4)$}  (end);

\draw[tran] (if) to node[left, font=\scriptsize] {$\neg\phi$} (skip);

\draw[tran] (skip) to node[left, font=\scriptsize] {$\mbox{\sl id}$} (end);
\end{tikzpicture}
\caption{The part of CFG for $\mathsf{mergesort}$ in Fig.~\ref{fig:mergesort}}
\label{fig:cfgmergesort}
\end{minipage}
\begin{minipage}{0.05\textwidth}
~
\end{minipage}
\begin{minipage}{0.35\textwidth}
\centering
\begin{tikzpicture}[x = 1.5cm]
\node[label] (massgn)         at (-2.1,0)       {$8$};
\node[label] (nassgn)         at (-1.4,0)       {$9$};
\node[label] (lassgn)         at (-0.7,0)        {$10$};
\node[label] (merge)          at (0,0)        {$11$};
\node[label] (lreassgn)       at (-1,-1)       {$18$};
\node[label] (arcomp)         at (1,-1)        {$12$};
\node[label] (tmpassgn1)      at (1,-2)        {$13$};
\node[label] (minc)           at (1,-3)        {$14$};
\node[label] (tmpassgn2)      at (0,-2)        {$15$};
\node[label] (ninc)           at (0,-3)        {$16$};
\node[label] (linc)           at (1,-4)        {$17$};
\node[label] (copyback)       at (-1,-2)       {$19$};
\node[label] (arcopy)         at (-1,-3)       {$20$};
\node[label] (lreinc)         at (-1,-4)       {$21$};
\node[label] (end)            at (-2,-3)       {$22$};

\node        (temp1)          at (1.7,1)         {};
\node        (temp2)          at  (-0.5,1)     {};

\draw[tran] (massgn) to node[auto, font=\scriptsize] {$g_1$} (nassgn);
\draw[tran] (nassgn) to node[auto, font=\scriptsize] {$g_2$} (lassgn);
\draw[tran] (lassgn) to node[auto, font=\scriptsize] {$g_3$} (merge);
\draw[tran] (merge)  to node[auto, font=\scriptsize] {$\psi$} (arcomp);
\draw[tran] (merge)  to node[left, font=\scriptsize] {$\neg\psi$} (lreassgn);
\draw[tran] (arcomp)  to node[auto, font=\scriptsize] {$\star$} (tmpassgn1);
\draw[tran] (arcomp)  to node[left, font=\scriptsize] {$\star$} (tmpassgn2);

\draw[tran] (tmpassgn1)  to node[auto, font=\scriptsize] {$\mbox{\sl id}$} (minc);
\draw[tran] (tmpassgn2)  to node[auto, font=\scriptsize] {$\mbox{\sl id}$} (ninc);
\draw[tran] (minc)  to node[auto, font=\scriptsize] {$g_4$} (linc);
\draw[tran] (ninc)  to node[auto, font=\scriptsize] {$g_5$} (linc);
\draw[tran] (linc)  -- (linc-|temp1) -- node[auto, font=\scriptsize] {$g_6$}  (merge-|temp1) --  (merge);

\draw[tran] (lreassgn) to node[auto, font=\scriptsize] {$g_3$} (copyback);

\draw[tran] (copyback) to node[auto, font=\scriptsize] {$\psi$} (arcopy);
\draw[tran] (copyback) to node[left, font=\scriptsize] {$\neg\psi$} (end);
\draw[tran] (arcopy) to node[auto, font=\scriptsize] {$\mbox{\sl id}$} (lreinc);
\draw[tran] (lreinc) -- (lreinc-|temp2) -- node[right, font=\scriptsize] {$g_6$}  (copyback-|temp2) --  (copyback);
\end{tikzpicture}
\caption{The part of CFG for $\mathsf{merge}$ in Fig.~\ref{fig:mergesort}}
\label{fig:cfgmerge}
\end{minipage}
\end{figure}

In this program, significant labels are $1$ and $8$ (beginning points of functions) and $11$ and $19$ (while loop headers).

\subsection{Step 1: Invariants and Templates}

As mentioned in Section \ref{sect:mfunc}, automatic obtaining of invariants is a standard problem with several known techniques\cite{DBLP:conf/popl/CousotC77,DBLP:conf/cav/ColonSS03}. Since invariant generation is not a main part of our algorithm, we simply use the straightforward invariants in Table \ref{tbl:invar} to demonstrate our algorithm.

\begin{table}[]
	\centering

	\begin{tabular}{|c|c|c|}
		\hline
		\textbf{$\fn{f}$} & \textbf{$\loc$} & \textbf{$I(\fn{f}, \loc)$} \\ \hline \hline
		$\mathsf{mergesort}$    & $1$                  & $i \ge 0 \wedge j \ge i$       \\ \hline
		$\mathsf{merge}$        & $8$                  & $i \ge 0 \wedge j \ge i$   \\ \hline
		$\mathsf{merge}$		& $11$				   & $l \ge i \wedge l \le j + 1$ \\ \hline
		$\mathsf{merge}$        & $19$                 & $l \ge i \wedge l \le j + 1$   \\ \hline
	\end{tabular}
\caption{Invariants Used for Significant Labels}
\label{tbl:invar}
\end{table}

The algorithm constructs a template $\eta(\fn{f}, \loc, \centerdot)$ of the form shown in (\ref{eq:synform}) at every significant label. Due to the large amount of space needed to illustrate the method on the complete template, for the sake of this example, we restrict our templates to the forms presented in Table \ref{tbl:templates} instead. Since a template with this restricted form is feasible, the whole template consisting of all products of pairs of terms is also feasible. Here the $c_i$'s are variables that the algorithm needs to synthesize.

Note that the results reported in Section \ref{sec:experimental-results} were obtained by our implementation in the general case, where the templates were not given as part of input and were generated automatically by our algorithm in their full form according to (\ref{eq:synform}).

\begin{table}[]
	\centering
	
	\begin{tabular}{|c|c|c|}
		\hline
		\textbf{$\fn{f}$}    & \textbf{$\loc$} & \textbf{$\eta(\fn{f}, \loc)$}          \\ \hline \hline
		$\mathsf{mergesort}$ & $1$          & $c_1 (j - i + 1) \ln(j - i + 1) + c_2$ \\ \hline
		$\mathsf{merge}$     & $8$          & $c_3 (j - i + 1) + c_4$                \\ \hline
		$\mathsf{merge}$	 & $11$			& $c_5 l + c_6 j + c_7 i + c_8$ \\ \hline
		$\mathsf{merge}$     & $19$         & $c_9 l + c_{10} j + c_{11}$                  \\ \hline
	\end{tabular}
\caption{Templates at Significant Labels}
\label{tbl:templates}
\end{table}

\newcommand{\etahat}{\widehat{\llbracket\eta\rrbracket}}
\newcommand{\ex}{\mathfrak{e}}
\subsection{Step~2: Computation of $\etahat$}

In this step, the algorithm expands the $\eta$ generated for significant labels in the previous step to obtain $\etahat$ for all labels. This results in a function of form (\ref{eq:hatform}). In order to make equations easier to read, we use the notation $\binom{\phi}{p}$ to denote $\mathbf{1}_{\phi}\cdot p$ in this section. We also use $\etahat(\fn{f},\loc)[p \leftarrow \ex]$ to denote the result of replacing every occurrence of the variable $p$ in $\etahat(\fn{f}, \loc)$ with the expression $\ex$.

\emph{End points of functions.} The algorithm sets $\etahat(\fn{mergesort}, 7)$ and $\etahat(\fn{merge},22)$ to $0$, since these correspond to the terminal labels of their respective functions.

\emph{Significant Labels.} For each significant label $\loc$ of a function $\fn{f}$, the algorithm, by definition, sets $\etahat(\fn{f},\loc)$ to:
$$
\binom{I(\fn{f},\loc)}{\eta(\fn{f},\loc)} + \binom{\neg I(\fn{f}, \loc)}{0}.
$$

Therefore we have:

$$	\etahat(\fn{mergesort}, 1) = \binom{i \ge 0 \wedge j \ge i}{c_1 (j-i+1) \ln(j-i+1) + c_2} + \binom{i<0 \vee j<i}{0},$$
$$	\etahat(\fn{merge}, 8) = \binom{i \ge 0 \wedge j \ge i}{c_3 (j-i+1) + c_4} + \binom{i<0 \vee j<i}{0},$$
$$	\etahat(\fn{merge}, 11) = \binom{l \ge i \wedge l \le j+1}{c_5 l + c_6 j + c_7 i + c_8} + \binom{l < i \vee l > j+1}{0},$$
$$	\etahat(\fn{merge}, 19) = \binom{l \ge i \wedge l \le j+1}{c_9 l + c_{10} j + c_{11}} + \binom{l < i \vee l > j+1}{0}.$$

\emph{Expansion to Other Labels.} By applying C1-C5 the algorithm calculates $\etahat$ for all labels in the following order:

$$
\begin{matrix*}[l]
\etahat(\fn{mergesort},6) = 1 + \etahat(\fn{mergesort}, 7), \\
\etahat(\fn{mergesort}, 5) = 1 + \etahat(\fn{mergesort}, 7) + \etahat(\fn{merge}, 8),\\
\etahat(\fn{mergesort}, 4) = 1 + \etahat(\fn{mergesort}, 5) + \etahat({\fn{mergesort},1})[i \leftarrow k+1], \\
\etahat(\fn{mergesort}, 3) = 1 + \etahat(\fn{mergesort}, 4) + \etahat({\fn{mergesort},1})[j \leftarrow k],\\
\etahat(\fn{mergesort}, 2) = 1 + \etahat(\fn{mergesort}, 3)[k \leftarrow \lfloor (i+j) /2 \rfloor],\\
\\
\etahat(\fn{merge}, 21) = 1 + \etahat(\fn{merge}, 19)[l \leftarrow l+1],\\
\etahat(\fn{merge}, 20) = 1 + \etahat(\fn{merge}, 21),\\
\etahat(\fn{merge}, 18) = 1 + \etahat(\fn{merge}, 19)[l \leftarrow i],\\
\etahat(\fn{merge}, 17) = 1 + \etahat(\fn{merge}, 11)[l \leftarrow l+1],\\
\etahat(\fn{merge}, 16) = 1 + \etahat(\fn{merge}, 17)[n \leftarrow n + 1],\\
\etahat(\fn{merge}, 15) = 1 + \etahat(\fn{merge}, 16),\\
\etahat(\fn{merge}, 14) = 1 + \etahat(\fn{merge}, 17)[m \leftarrow m+1],\\
\etahat(\fn{merge}, 13) = 1 + \etahat(\fn{merge}, 14),\\
\etahat(\fn{merge}, 12) = 1 + \max\left\{ \etahat(\fn{merge}, 13), \etahat(\fn{merge}, 15) \right\},\\
\etahat(\fn{merge}, 10) = 1 + \etahat(\fn{merge}, 11)[l \leftarrow i],\\
\etahat(\fn{merge}, 9) = 1 + \etahat(\fn{merge},10)[n \leftarrow k+1].

\end{matrix*}
$$

The concrete values of $\etahat$'s calculated as above are too long to present in the paper. Therefore we only include some of them to illustrate the method:
$$
\etahat(\fn{mergesort},6) = 1,
$$
$$
\etahat(\fn{merge}, 21) = \binom{l+1 \ge i \wedge l \le j}{c_9 l + c_9 + c_{10} j + c_{11} + 1} + \binom{l+1 < i \vee l > j}{1},
$$
$$
\etahat(\fn{merge}, 20) = \binom{l+1 \ge i \wedge l \le j}{c_9 l + c_9 + c_{10} j + c_{11} + 2} + \binom{l+1 < i \vee l > j}{2},$$
$$
\etahat(\fn{merge}, 18) = \binom{i \ge i \wedge i \le j+1}{c_9 i + c_{10} j + c_{11} + 1} + \binom{i<i \vee i>j+1}{1}.
$$

The algorithm utilizes Farkas lemma to simplify the $\etahat$'s and remove unnecessary terms. Hence, at the end of this step we will have:

$$
\etahat(\fn{merge}, 18) = \binom{i \leq j+1}{c_9 i + c_{10} j + c_{11}+1} + \binom{i>j+1}{1}.
$$

\subsection{Step 3: Establishment of Constraint Triples}
A constraint triple $(\fn{f}, \phi,\mathfrak{e})$ denotes
$\forall \nu\in\Aval{f}.\left(\nu\models\phi\rightarrow \llbracket\mathfrak{e}\rrbracket(\nu)\ge 0\right)$\enskip, i.e., each triple consists of a function name, a precondition $\phi$ and an expression $\ex$. This means that the unknown variables ($c_i$
's) should be assigned values in a way that whenever $\phi$ is satisfied, we have $\ex \ge 0$. When several triples are obtained, we group them together in a conjunctive manner. Note that $(\fn{f}, \phi \vee \psi,\mathfrak{e})$ can be broken into two triples $(\fn{f}, \phi,\mathfrak{e})$ and $(\fn{f}, \psi,\mathfrak{e})$.

The function $\etahat$, by definition, satisfies the conditions of a measure function in all non-significant labels. In order to obtain a correct measure function, we need to make sure that these conditions are fulfilled in significant labels, too. Concretely, the algorithm has to find $c_i$'s such that the following inequalities hold and therefore converts each of these inequalities to a series of constraint triples. 
$$
\begin{matrix*}[l]
\etahat(\fn{mergesort}, 1) \ge 0,\\
\etahat(\fn{merge}, 8) \ge 0,\\
\etahat(\fn{merge}, 11) \ge 0,\\
\etahat(\fn{merge}, 19) \ge 0,\\
\\
\etahat(\fn{mergesort}, 1) \ge 1 + \mathbf{1}_{1 \leq i \leq j - 1} \cdot \etahat(\fn{mergesort}, 2) + \mathbf{1}_{i<1 \vee i>j-1} \etahat(\fn{mergesort}, 6),\\
\etahat(\fn{merge}, 8) \ge 1 + \etahat(\fn{merge}, 9)[m \leftarrow i],\\
\etahat(\fn{merge}, 11) \ge 1 + \mathbf{1}_{l \le j} \etahat(\fn{merge}, 12) + \mathbf{1}_{l > j} \etahat(\fn{merge}, 18),\\
\etahat(\fn{merge}, 19) \ge 1 + \mathbf{1}_{l \le j} \etahat(\fn{merge}, 20) + \mathbf{1}_{l > j} \etahat(\fn{merge}, 22).

\end{matrix*}
$$

The first group of inequalities above dictate the non-negativity of the obtained measure function and the second group assure that it satisfies conditions C2'--C4'.

In this case, our algorithm creates a lot of triples and then uses Farkas lemma to simplify them and ignore triples that have contradictory preconditions or expressions that are always non-negative. Here we illustrate how some of these triples are obtained to cover the main ideas. 

For example, since we must have $\etahat(\fn{merge}, 8) \geq 0$, therefore:
$$
\binom{i \ge 0 \wedge j \ge i}{c_3 (j-i+1) + c_4} + \binom{i<0 \vee j<i}{0} \geq 0.
$$
This leads to the creation of three triples. The first part leads to $$T_1 := (\fn{merge}, i \ge 0 \wedge j \ge i, c_3 (j-i+1)+c_4),$$ and the second part leads to the following two triples that both get ignored because their expression is nonnegative even without considering the precondition:
$
(\fn{merge}, i<0, 0)
,
(\fn{merge}, j<i, 0)
.$

As a more involved example, the following is a term in $\etahat(\fn{merge}, 2)$:
$$
\tau := \binom{\phi_\tau}{\ex_\tau} := \binom{j - d - 1 \ge 0 \wedge d - i \ge 0 \wedge i \ge 0}{c_1 u_3 d - c_1 u_3 i + c_1 u_3 + 2 c_2 + c_1 u_4 j - c_1 u_4 d + c_3 j - c_3 i + c_3 + c_4 + 4},
$$
here $d := \lfloor (i+j)/2 \rfloor$, $u_3 := \ln(d - i + 1)$ and $u_4 := \ln(j - d).$ Note that at this stage, the term above is stored and used in its full expanded form by the algorithm and these definitions are only used in this illustration to shorten the length of this term. In step 4, the algorithm will create these variables and use the shortened representation from there on.

Since the inequality $\etahat(\fn{mergesort}, 1) \ge 1 + \mathbf{1}_{1 \leq i \leq j - 1} \cdot \etahat(\fn{mergesort}, 2)$ should be satisfied, we get:
$$
\binom{0 \le i \le j}{c_1 (j-i+1) \ln(j-i+1) + c_2} + \binom{i<0 \vee j<i}{0} \geq \mathbf{1}_{1 \le i \le j-1} (\tau + 1).
$$
This leads to the following constraint triples:
$$
(\fn{mergesort}, 0 \le i \le j \wedge 1 \leq i \leq j - 1 \wedge \phi_\tau, c_1 (j-i+1) \ln(j-i+1) + c_2 - \ex_\tau - 1),
$$
$$
(\fn{mergesort}, i < 0 \wedge 1 \leq i \leq j-1 \wedge \phi_\tau , -\ex_\tau - 1),
$$
$$
(\fn{mergesort}, j<i \wedge 1 \leq i \leq j-1 \wedge \phi_\tau, -\ex_\tau - 1).
$$

The last two triples are discarded because the algorithm uses Farkas Lemma and deduces that their conditions are unsatisfiable. The first triple is also simplified using Farkas Lemma. This leads to the following final triple:
$$
T_2 := (\fn{mergesort}, 1 \leq i \leq j - 1 \wedge \phi_\tau,  c_1 (j-i+1) \ln(j-i+1) + c_2 - \ex_\tau - 1).
$$

All other inequalities are processed in a similar manner. In this case, at the end of this stage, the algorithm has generated 19 simplified constraint triples.

\subsection{Step 4: Converting Constraint Triples to Linear Inequalities}
This is the main step of the algorithm. Now that the triples are obtained, we need to solve the system of triples to get concrete values for $c_i$'s and hence upper-bounds on the runtime of functions, but this is a non-linear optimization problem. At this step, we introduce new variables and use them to obtain linear inequalities.

\emph{Step 4(a): Abstraction of Logarithmic, Exponentiation and Floored Expressions.} At this stage the algorithm defines the following variables and replaces them in all constraint triples to obtain a short representation like the one we used for $\tau$:
$$
\begin{matrix*}[l]
d := \lfloor \frac{i+j}{2} \rfloor,\\
u_0 := \ln(j-i+1),\\
u_1 := \ln(j-k),\\
u_2 := \ln(k-i+1),\\
u_3 := \ln(d - i + 1),\\
u_4 := \ln(j-d).
\end{matrix*}
$$
In the description we also use $w$ to denote $i+j$, but the algorithm does not add this variable. Note that all logarithmic, floored and exponentiation terms are replaced by the new variables above and hence all triples become linear, but in order for them to reflect the original triples, new conditions should be added to them. To each triple $T$, we assign a set $\Gamma$ of linear polynomials such that their non-negativity captures the desired conditions.
For example, for the triple $T_1$ above, we have $\Gamma_1 = \left\{i, j-i \right\}$, because the only needed conditions are $i \ge 0$ and $j \ge i$ and since no new variable has appeared in this triple, there is no need for any additional constraint.

On the other hand, for $T_2$, we start by setting $\Gamma_2 = \left\{i-1, j-d-1, d-i \right\}$. Note that we could add $j-i-1$ and $i$ to $\Gamma_2$, too, but these can be deduced from the rest and hence the algorithm simplifies $\Gamma_2$ and discards them. 

Since the variables $u_0, u_3$ and $u_4$ appear in $T_2$ and are non-negative logarithmic terms, the algorithm adds $u_0, u_3, u_4$ to $\Gamma_2$.

Then the algorithm adds $w-2d = i + j - 2d$ and $2d - w + 1 = 2d - i - j + 1$ to $\Gamma_2$. This is due to the definition of $d$ as $\lfloor w/2 \rfloor$. Then it performs an emptiness checking to discard the triple if it has already become infeasible. 

\emph{Step 4(b): Linear Constraints for Abstracted (Logarithmic) Variables.} We find constraints for every logarithmic variable. As an example, since $j \ge d+1$ and $d\ge i$, it can be inferred that $j-i+1\geq 2$. This is the tightest obtainable bound and the algorithm finds it using Farkas Lemma. Now since $2 < e$ and $u_0 = \ln(j-i+1)$, the algorithm adds $(j-i+1) - e u_0$ to $\Gamma_2$ according to part (4) of Step 4(b). Similar constraints are added for other variables.

Next the algorithm adds constraints on the relation between $u_i$'s to $\Gamma_2$ as in parts (6) and (7). For example, since $\Gamma_2 \ge 0$ implies $(j-i+1) - 2 (j-d) \ge 0$ and $2 (j-d) \ge 1$, the algorithm infers that $u_0 - \ln 2 - u_4$ is non-negative and adds it to $\Gamma_2$ according to part (6).

Other constraints, and constraints for exponentiation variables, if present, will also be added according to step 4(b).

\subsection{Step 5: Solving Unknown Coefficients in the Template ($c_i$'s)}
After creating $\Gamma$'s, the algorithm attempts to find suitable values for the variables $c_i$ such that for each triple $T = (\fn{f}, \phi,\mathfrak{e})$ and its corresponding $\Gamma$, it is the case that $\Gamma \ge 0$ implies $\ex \ge 0$. Since all elements of $\Gamma$ are linear, we can use Handelman's theorem to reduce this problem to an equivalent system of linear inequalities. The algorithm does this for every triple and then appends all the resulting systems of linear inequalities together in a conjunctive manner and uses an LP-Solver to solve it. 
In this case the final result, obtained from \texttt{lpsolve} is as follows:
\begin{table}[]
	\centering
	\begin{tabular}{|c|c|c|c|}
		\hline
		\textbf{variable} & \textbf{value} & \textbf{variable}         & \textbf{value}     \\ \hline
		$c_1$             & 40.9650        & $c_7$                     & -3                 \\ \hline
		$c_2$             & 3              & $c_8$                     & 12                 \\ \hline
		$c_3$             & 9              & $c_9$                     & -3                 \\ \hline
		$c_4$             & 6              & $c_{10}$                  & 3                  \\ \hline
		$c_5$             & -6             & \multirow{2}{*}{$c_{11}$} & \multirow{2}{*}{4} \\ \cline{1-2}
		$c_6$             & 9              &                           &                    \\ \hline
	\end{tabular}
\end{table}

This means that the algorithm successfully obtained the upper-bound $$40.9650 (j-i+1) \ln(j-i+1) + 3$$ for the given Merge-Sort implementation.

For simplicity of illustration we do not show how to obtain a better leading constant. In the illustration above 
we show for constant $40.9650$, whereas our approach and implementation can obtain the constant $25.02$ 
(as reported in Table~\ref{tbl:experimentalresults}).

\section{Experimental Details}\label{app:experiments}

In this part, we present the details for our experimental results.

\smallskip\noindent{\em Pseudo-codes for Our Examples.} Figure~\ref{fig:karatsubaa} and Figure~\ref{fig:karatsubab} together account for Karatsuba's Algorithm.
Figure~\ref{fig:closestpaira} -- Figure~\ref{fig:closestpaird} demonstrate the divide-and-conquer algorithm for Closest-Pair problem.
Figure~\ref{fig:strassena} -- Figure~\ref{fig:strassenc} show the Strassen's algorithm.
Invariants are bracketed (i.e., $[\dots]$) at significant labels in the programs.

\begin{remark}[Approximation constants.]
We use approximation of constants upto four digits of precision.
For example, we use the interval $[2.7182,2.7183]$ (resp. $[0.6931, 0.6932]$) for tight approximation of $e$ (resp. $\ln{2}$).
We use similar approximation for constants such as $2^{0.6}$ and $2^{0.9}$.
\qed
\end{remark}

\begin{remark}[Input specifications.]
We note that as input specifications other than the input program, the invariants
can be obtained automatically~\cite{DBLP:conf/cav/ColonSS03,DBLP:conf/popl/CousotC77}.
In the examples, the invariants are even simpler, and obtained from the guards of
branching labels.
Besides the above we have quadruple $(d, \mathrm{op}, r,k)$. We discuss these parameter for the examples below.

\begin{itemize}
\item The type of bound to be synthesized is denoted by $\mathrm{op}$: For Merge-Sort and Closest-Pair it is $\log$ (denoting logarithmic terms in expression) and
for Strassen's and Karatsuba's algorithms, it is $\mathrm{exp}$ (denoting non-polynomial bounds with non-integral exponent).

\item The number of terms multiplied together in each summand of the general form as in (\ref{eq:synform}) is $d$.
For example, (i) for $n^{4.5} $ there is only one term, and hence $d=1$;
(ii) for $n^{3.9} \cdot \log n$  we have two terms, and hence $d=2$.
Therefore, even for worst-case non-polynomial bounds of higher degrees, the maximum degree for template is still small.
In all our examples $d$ is at most~2.

\item Recall that $r$ is an upper bound on the degree of exponent of the asymptotic bound of the measure
function.
Also observe that if $r$ is not specified, we can search for $r$ in a desired interval automatically using binary search.
For example, for Strassen's algorithm, the desired interval of the exponent is between $[2,3]$.
With $r=2$, our approach on Strassen's algorithm reports failure.
Thus a binary search for $r$ in the interval can obtain the exponent as $2.9$.

\item In all our examples $k=2$ for the parameter for the Handelmann's Theorem.

\end{itemize}
Thus the input for our algorithm is quite simple.
\qed
\end{remark}

\begin{remark}[Complexity.]
Given $d$ and $k$ for the template are constants the complexity of our algorithm is polynomial.
While our algorithm is exponential in these parameters, in all our examples the above parameters
are at most~2.
The approach we present is polynomial time (using linear programming) for several non-trivial
examples, and therefore a scalable one.
\qed
\end{remark}

\lstset{language=prog}
\lstset{tabsize=3}
\newsavebox{\progkaratsubaa}
\begin{lrbox}{\progkaratsubaa}
\begin{lstlisting}[mathescape]

//$\mbox{Initialize all array entries to be zero.}$
$\mathsf{initialize}(i,j)$ {
  $[i\le j]$
  $l := i$;
  $[l\le j+1]$
  while $l \le j$ do
    skip; $l := l+1$
  od
}

//$\mbox{Copy one array into another}.$
$\mathsf{copy}(i,j,m,n)$ {
  $[i\le j\wedge m\le n]$
  $k:=i$; $l:=m$;
  $[k\le j+1\wedge l\le n+1]$
  while $k \le j\wedge l\le n$ do
    skip; $k:=k+1$; $l:=l+1$
  od
}

//$\mbox{Add two arrays entrywise.}$
$\mathsf{add}(i,j,m,n)$ {
  $[i\le j\wedge m\le n]$
  $k:=i$; $l:=m$;
  $[k\le j+1\wedge l\le n+1]$
  while $k \le j\wedge l\le n$ do
    skip; $k:=k+1$; $l:=l+1$
  od
}

//$\mbox{Subtract two arrays entrywise.}$
$\mathsf{subtract}(i,j,m,n)$ {
  $[i\le j\wedge m\le n]$
  $k:=i$; $l:=m$;
  $[k\le j+1\wedge l\le n+1]$
  while $k \le j\wedge l\le n$ do
    skip; $k:=k+1$; $l:=l+1$
  od
}

\end{lstlisting}
\end{lrbox}
\begin{figure}
\centering
\usebox{\progkaratsubaa}
\caption{Auxiliary Function Calls for Karatsuba's Algorithm}
\label{fig:karatsubaa}
\end{figure}

\lstset{language=prog}
\lstset{tabsize=3}
\newsavebox{\progkaratsubab}
\begin{lrbox}{\progkaratsubab}
\begin{lstlisting}[mathescape]
//$\mbox{The program calculates the product of two polynomials.}$
//$\mbox{The degree should be arranged in increasing order.}$
//$\mbox{Array index starts from }1.$
//$n\mbox{ is the length of the arrays and should be a power of } 2.$
//$\mbox{The quadruple of input parameters is }(1, \mathrm{exp}, 1.6, 2)$.

$\mathsf{karatsuba}(n)$ {
  $\left[n\ge 1\right]$
  if $n \ge 2$ then
     $t:= \left\lfloor \frac{n}{2}\right\rfloor$;

     //$\mbox{ checking whether }n\mbox{ is even}$
     if $2 * t \le n$ and $2 * t \ge n$ then

       //$\mbox{sub-dividing arrays}$
       $\mathsf{copy}(1, t, 1, t)$;
       $\mathsf{copy}(t+1, n, 1, t)$;
       $\mathsf{copy}(1, t, 1, t)$;
       $\mathsf{copy}(t+1, n, 1, t)$;

       //$\mbox{adding the sub-arrays}$
       $\mathsf{copy}(1, t, 1, t)$; $\mathsf{add}(1, t, 1, t)$;
       $\mathsf{copy}(1, t, 1, t)$; $\mathsf{add}(1, t, 1, t)$;

       //$\mbox{recursive calls}$
       $\mathsf{karatsuba}(t)$;
       $\mathsf{karatsuba}(t)$;
       $\mathsf{karatsuba}(t)$;

       //$\mbox{combining step}$
       $\mathsf{subtract}(1, n-1, 1, n-1)$;
       $\mathsf{subtract}(1, n-1, 1, n-1)$;

       $\mathsf{initialize}(1, 2*n-1)$;
       $\mathsf{add}(1, n-1, 1, n-1)$;
       $\mathsf{add}(1, n-1, n, 2*n-2)$;
       $\mathsf{add}(t +1, n+t-1, 1, n-1)$
     else skip //$\mbox{If }n\mbox{ is not even, simply fail.}$
     fi
  else //$\mbox{trivial case}$
     skip
  fi
}

\end{lstlisting}
\end{lrbox}
\begin{figure}
\centering
\usebox{\progkaratsubab}
\caption{Main Function Call for Karatsuba's Algorithm}
\label{fig:karatsubab}
\end{figure}

\lstset{language=prog}
\lstset{tabsize=3}
\newsavebox{\progclosestpaira}
\begin{lrbox}{\progclosestpaira}
\begin{lstlisting}[mathescape]

//$\mbox{Copy one array into another}.$
$\mathsf{copy}(i,j,m,n)$ {
  $[i\le j\wedge m\le n]$
  $k:=i$; $l:=m$;
  $[k\le j+1\wedge l\le n+1]$
  while $k \le j\wedge l\le n$ do
    skip; $k:=k+1$; $l:=l+1$
  od
}

//$\mbox{sorting one array while adjusting another accordingly}$

$\mathsf{mergesort}(i, j)$ {
  [$i\le j$]
  if $i\le j-1$ then
    $k:=i+\lfloor \frac{j-i+1}{2}\rfloor-1$;
    $\mathsf{mergesort}(i, k)$;
    $\mathsf{mergesort}(k+1, j)$;
    $\mathsf{merge}(i,j,k)$
  else
    skip
  fi
}

$\mathsf{merge}(i,j,k)$ {
  [$i\le j$]
  $m:=i$; $n:=k+1$; $l:=i$;
  [$l\le j+1$]
  while $l\le j$ do
    if $\star$ then
       skip; $m:=m+1$
    else
       skip; $n:=n+1$
    fi;
    $l:=l+1$
  od;
  $l:=i$;
  [$l\le j+1$]
  while $l\le j$ do
    skip; $l:=l+1$
  od
}

\end{lstlisting}
\end{lrbox}
\begin{figure}
\centering
\usebox{\progclosestpaira}
\caption{Merge-Sort and Copy for Closest-Pair}
\label{fig:closestpaira}
\end{figure}

\lstset{language=prog}
\lstset{tabsize=3}
\newsavebox{\progclosestpairb}
\begin{lrbox}{\progclosestpairb}
\begin{lstlisting}[mathescape]

//$\mbox{Caculates the shortest distance of a finite set of points.}$
//$\mbox{One array stores }x\mbox{-coordinates and another stores }y\mbox{-coordinates.}$
//$\mbox{The quadruple of input parameters is }(2, \log, -, 2)$.

$\mathsf{clst}\_\mathsf{pair}\_\mathsf{main}(i, j)$ {
  [$i\le j$]

  //$\mbox{copying arrays}$
  $\mathsf{copy}(i, j, i, j)$; $\mathsf{copy}(i, j, i, j)$;

  //$\mbox{sorting arrays in }x\mbox{-coordinate}$
  $\mathsf{mergesort}(i,j)$;

  //$\mbox{sorting arrays in }y\mbox{-coordinate}$
  $\mathsf{mergesort}(i,j)$;

  //$\mbox{solving the result}$
  $\mathsf{clst}\_\mathsf{pair}(i, j)$
}

\end{lstlisting}
\end{lrbox}
\begin{figure}
\centering
\usebox{\progclosestpairb}
\caption{Main Function Call for Closest-Pair}
\label{fig:closestpairb}
\end{figure}

\lstset{language=prog}
\lstset{tabsize=3}
\newsavebox{\progclosestpairc}
\begin{lrbox}{\progclosestpairc}
\begin{lstlisting}[mathescape]
//$\mbox{principle recursive function call for solving Closest-Pair}$

$\mathsf{clst}\_\mathsf{pair}(i, j)$ {
  [$i\le j$]
  if $i\le j-3$ then
    //$\mbox{recursive case where there are at least }4\mbox{ points }$

    $k:=i+\lfloor \frac{j-i+1}{2}\rfloor-1$;
    $\mathsf{clst}\_\mathsf{pair}(i, k)$;
    $\mathsf{clst}\_\mathsf{pair}(k+1, j)$;

    //$\mbox{taking the minimum distance from the previous recursive calls}$
    skip;

    //$\mbox{fetch and scan the mid-line}$
    $\mathsf{fetch}\&\mathsf{scan}(i,j)$
  else
    //$\mbox{base case (fewer than }4\mbox{ points)}$
    skip
  fi
}

\end{lstlisting}
\end{lrbox}
\begin{figure}
\centering
\usebox{\progclosestpairc}
\caption{Principle Recursive Function Call for Closest-Pair}
\label{fig:closestpairc}
\end{figure}

\lstset{language=prog}
\lstset{tabsize=3}
\newsavebox{\progclosestpaird}
\begin{lrbox}{\progclosestpaird}
\begin{lstlisting}[mathescape]
//$\mbox{fetch and scan the mid-line}$
$\mathsf{fetch}\&\mathsf{scan}(i,j)$ {
  //$\mbox{fetching the points on the mid-line}$
  [$i\le j-3$]
  $l:=i$; $p:=i$;

  [$i\le j-3\wedge p\le j+1\wedge l\le j+1$]
  while $p\le j$ do
    if $\star$ then $l:=l+1$ else skip fi;
    $p:=p+1$
  od

  if $l\ge i+1$ and $l\le j+1$ then
    $p:=i$;

    //$\mbox{scanning the points on the mid-line}$
    [$p\le l$]
    while $p\le l-1$ do
      $m:=p+1$;

      //$\mbox{checking }7\mbox{ points ahead on the mid-line}$
      [$m\le p+8$]
      while $m-p\le 7$ and $m\le l-1$ do
        skip; $m:=m+1$
      od;
      $p:=p+1$
    od
  else skip fi
}

\end{lstlisting}
\end{lrbox}
\begin{figure}
\centering
\usebox{\progclosestpaird}
\caption{Other Function Calls for Closest-Pair}
\label{fig:closestpaird}
\end{figure}

\lstset{language=prog}
\lstset{tabsize=3}
\newsavebox{\progstrassena}
\begin{lrbox}{\progstrassena}

\begin{lstlisting}[mathescape]
//$\mbox{The program calculates the product of two matrices.}$
//$n\mbox{ is the row/column size of both matrices and should be a power of } 2.$
//$\mbox{Each of the matrices is stored in a two-dimensional array of dimension } n.$
//$\mbox{Array indices starts from } 1.$
//$\mbox{The quadruple of input parameters is }(2, \mathrm{exp}, 1.9, 2)$.

$\mathsf{strassen}(n)$ {
  $\left[n\ge 1\right]$
  if $n \ge 2$ then
     $t:= \left\lfloor \frac{n}{2}\right\rfloor$;

     //$\mbox{ checking whether }n\mbox{ is even}$
     if $2 * t \le n$ and $2 * t \ge n$ then
       //$\mbox{sub-dividing matrices}A$
       $\mathsf{matrixtoblocks}(n, t)$; $\mathsf{matrixtoblocks}(n, t)$;

       //$\mbox{sums of matrices}$
       $\mathsf{copy}(t)$; $\mathsf{add}(t)$; $\mathsf{copy}(t)$; $\mathsf{add}(t)$;
       $\mathsf{copy}(t)$; $\mathsf{add}(t)$; $\mathsf{copy}(t)$; $\mathsf{subtract}(t)$;
       $\mathsf{copy}(t)$; $\mathsf{add}(t)$; $\mathsf{copy}(t)$; $\mathsf{add}(t)$;
       $\mathsf{copy}(t)$; $\mathsf{subtract}(t)$; $\mathsf{copy}(t)$; $\mathsf{add}(t)$;
       $\mathsf{copy}(t)$; $\mathsf{subtract}(t)$; $\mathsf{copy}(t)$; $\mathsf{add}(t)$;

       //$\mbox{recursive calls}$
       $\mathsf{strassen}(t)$;
       $\mathsf{strassen}(t)$;
       $\mathsf{strassen}(t)$;
       $\mathsf{strassen}(t)$;
       $\mathsf{strassen}(t)$;
       $\mathsf{strassen}(t)$;
       $\mathsf{strassen}(t)$;

       //$\mbox{combining stage}$
       $\mathsf{copy}(t)$; $\mathsf{add}(t)$; $\mathsf{subtract}(t)$; $\mathsf{add}(t)$;
       $\mathsf{copy}(t)$; $\mathsf{add}(t)$; $\mathsf{copy}(t)$; $\mathsf{add}(t)$;

       $\mathsf{copy}(t)$; $\mathsf{add}(t)$; $\mathsf{subtract}(t)$; $\mathsf{add}(t)$;
       $\mathsf{blockstomatrix}(n,t)$
     else skip //$\mbox{If }n\mbox{ is not even, simply fail.}$
     fi
  else //$\mbox{trivial case}$
     skip
  fi
}

\end{lstlisting}
\end{lrbox}
\begin{figure}
\centering
\usebox{\progstrassena}
\caption{Main Function Call for Strassen's Algorithm}
\label{fig:strassena}
\end{figure}

\lstset{language=prog}
\lstset{tabsize=3}
\newsavebox{\progstrassenb}
\begin{lrbox}{\progstrassenb}
\begin{lstlisting}[mathescape]

//$\mbox{Partition a matrix into block matrices.}$
$\mathsf{matrixtoblocks}(n,t)$ {
  [$t\ge 1$]
  $i := 1$;
  [$t\ge 1\wedge i\le t+1$]
  while $i\le t$ do
    $j:=1$;

    [$t\ge 1\wedge i\le t\wedge j\le t+1$]
    while $j\le t$ do
      skip; $j:=j+1$
    od;
    $i:=i+1$
  od
}

//$\mbox{Construct a matrix from block matrices.}$
$\mathsf{blockstomatrix}(n,t)$ {
  [$t\ge 1$]
  $i := 1$;
  [$t\ge 1\wedge i\le t+1$]
  while $i\le t$ do
    $j:=1$;

    [$t\ge 1\wedge i\le t\wedge j\le t+1$]
    while $j\le t$ do
      skip; $j:=j+1$
    od;
    $i:=i+1$
  od
}

//$\mbox{Copy a square matrix into another}.$
$\mathsf{copy}(n)$ {
  $[n\ge 1]$
  $i:=1$;
  $[n\ge 1\wedge i\le n+1]$
  while $i \le n$ do
    $j:=1$;
    $[n\ge 1\wedge i\le n\wedge j\le n+1]$
    while $j\le n$ do
      skip; $j:=j+1$
    od;
    $i:=i+1$
  od
}

\end{lstlisting}
\end{lrbox}
\begin{figure}
\centering
\usebox{\progstrassenb}
\caption{Auxiliary Function Calls for Strassen's Algorithm}
\label{fig:strassenb}
\end{figure}

\lstset{language=prog}
\lstset{tabsize=3}
\newsavebox{\progstrassenc}
\begin{lrbox}{\progstrassenc}
\begin{lstlisting}[mathescape]

//$\mbox{Add two matrices (entrywise).}$
$\mathsf{add}(n)$ {
  $[n\ge 1]$
  $i:=1$;
  $[n\ge 1\wedge i\le n+1]$
  while $i \le n$ do
    $j:=1$;
    $[n\ge 1\wedge i\le n\wedge j\le n+1]$
    while $j\le n$ do
      skip; $j:=j+1$
    od;
    $i:=i+1$
  od
}

//$\mbox{Subtract two matrices (entrywise).}$
$\mathsf{subtract}(n)$ {
  $[n\ge 1]$
  $i:=1$;
  $[n\ge 1\wedge i\le n+1]$
  while $i \le n$ do
    $j:=1$;
    $[n\ge 1\wedge i\le n\wedge j\le n+1]$
    while $j\le n$ do
      skip;
      $j:=j+1$
    od;
    $i:=i+1$
  od
}

\end{lstlisting}
\end{lrbox}
\begin{figure}
\centering
\usebox{\progstrassenc}
\caption{Matrix Addition and Subtraction for Strassen's Algorithm}
\label{fig:strassenc}
\end{figure}

\end{document}